\let\savedegree\degree
\let\degree\relax
\let\degree\savedegree
\tikzstyle{block} = [rectangle,draw,text width=10em,text centered,rounded corners,minimum height=4em]
\tikzstyle{line} = [draw, -latex']
\tikzset{arrow/.style={postaction={decorate,thick,decoration={markings,mark = at position #1 with {\arrow{>}}}}},arrow/.default=1}
\newtheorem{defn}{Definition}
\newtheorem{lem}{Lemma}
\newtheorem{thm}{Theorem}
\newtheorem{cor}{Corollary}
\newtheorem{prop}{Proposition}
\def\bbZ{\mathbb{Z}}
\def\bbC{\mathbb{C}}
\def\leq{\leqslant}
\def\geq{\geqslant}
\def\a{\alpha}
\def\b{\beta}
\def\e{\epsilon}
\def\ve{\varepsilon}
\def\g{\gamma}
\def\D{\Delta}
\def\d{\delta}
\def\k{\kappa}
\def\l{\lambda}
\def\m{\mu}
\def\mb{\bar{\mu}}
\def\n{\nu}
\def\nb{\bar{\nu}}
\def\s{\sigma}
\def\r{\rho}
\def\t{\tau}
\def\th{\theta}
\def\vphi{\varphi}
\def\MR{\mathcal{R}}
\def\DR{\mathds{R}}
\def\MH{\mathcal{H}}
\def\MN{\mathcal{N}}
\def\MRb{\widebar{\mathcal{R}}}
\def\Rb{\widebar{R}}
\def\MP{\mathcal{P}}
\def\MB{\mathcal{B}}
\def\MU{\mathcal{U}}
\def\MK{\mathcal{K}}
\def\MP{\mathcal{P}}
\def\MH{\mathcal{H}}
\def\psub{\setminus}
\def\padd{\cup}
\def\pint{\cap}
\def\pin{\subseteq}
\def\pnin{\nsubseteq}
\def\pzero{\emptyset}
\def\Cp{C^{\perp}}
\def\cp{c^{\perp}}
\def\Dp{D^{\perp}}
\def\Dt{\widetilde{\D}}
\def\eb{\bar{\e}}
\def\Pit{\widetilde{\Pi}}
\def\Eb{\bar{E}}
\def\ve{\varepsilon}
\def\ed{\bar{e}}
\def\fd{\bar{f}}
\def\psid{\bar{\psi}}
\def\deg{\text{deg}}
\def\pdeg{\text{pdeg}}
\def\hdeg{\text{hdeg}}
\def\id{\text{id}}
\def\at{\tilde{a}}
\def\qq{{\sf q}}
\def\tt{{\sf  t}}
\def\dd{\text{d}}
\newcommand{\braket}[2]{\left\langle #1|#2\right\rangle}
\newcommand{\bra}[1]{\left\langle #1\right|}
\newcommand{\ket}[1]{\left|#1\right\rangle}
\newcommand{\qg}{U_{q,t}(\overset{..}{gl}_1)}
\newcommand{\sln}{U_{q}(\widehat{sl_n})}
\newcommand{\ba}{\[\begin{aligned}~}
\newcommand{\ea}{\end{aligned}\]}
\def\gt{\mathfrak{g}}
\begin{document}

\title[]{The $R$-matrix of the quantum toroidal algebra $\qg$ in the Fock module}

\author{Alexandr Garbali and Jan de Gier}
\address{ARC Centre of Excellence for Mathematical and Statistical Frontiers (ACEMS), School of Mathematics and Statistics, University of Melbourne, Parkville, Victoria 3010, Australia}
\email{alexandr.garbali@unimelb.edu.au, jdgier@unimelb.edu.au}

%abstract
\begin{abstract}
We propose a method to compute the $R$-matrix $R$ on a tensor product of Fock modules from coproduct relations in a Hopf algebra. We apply this method to the quantum toroidal algebra $\qg$. We show that the coproduct relations of $\qg$ reduce to a single elegant equation for $R$. Using the theory of symmetric Macdonald polynomials we show that this equation provides a recursive formula for the matrix elements of $R$.
\end{abstract}

\maketitle

\section{Introduction}
The quantum toroidal algebra $\gt=\qg$\footnote{This algebra is also known under the names: Ding--Iohara--Miki algebra (DIM), quantum continuous $gl_\infty$, $(q,\g)$ analog of the $W_{1+\infty}$ algebra and the elliptic Hall algebra.} has an elegant representation theory related to combinatorics of partitions and plane partitions and the theory of symmetric Macdonald polynomials \cite{Berg,BurbS12,SchiffV13,Miki,FHSSY,FT_Fock,FFJMM_F,FJMM_PP,FHHSY}. It received recent attention in relation to algebraic geometry \cite{FT_Fock,Smirn16,MOk,Neg15,OkSm}, integrable models \cite{FJMM_BA,FJMM_FTM,FJMM_IM} and the AGT correspondence \cite{AFHKSY,SchiffV13a,NegAGT}. The algebra $\qg$ in the Fock representation\footnote{In the literature on the topological vertex associated to $\gt$ the representation which we refer to as the Fock representation is known as the horizontal representation. This representation corresponds to the case of non-commutative Cartan currents in the Drinfeld presentation.} \cite{FHHSY,FT_Fock} gives rise to an interesting integrable model. The study of this model was initiated in \cite{FJMM_FTM,FJMM_BA} using methods of representation theory and a new approach in terms of the related shuffle algebra circumventing the classical approach which is typically applied to the Heisenberg type models associated to the quantum affine Lie algebras $\sln$. 

A standard tool in the diagonalization problem of Heisenberg type models is the $R$-matrix. In the case of fundamental representations of $\sln$ the $R$-matrices are known explicitly and have a simple form. These $R$-matrices are finite dimensional and have a few non-zero entries. On the other hand the $R$-matrix of $\qg$ in the Fock representation is an infinite dimensional matrix with infinitely many non-zero entries which are rational functions in two parameters $q$ and $t$ as well as in the spectral parameter $u$. The calculation of this $R$-matrix turns out to be a difficult problem. From the point of view of algebraic geometry the problem of toroidal $R$-matrices was considered in particular in \cite{MOk,Smirn16,Smirn20,Neg15,NegMod}. The shuffle algebra point of view was taken in \cite{NegShuf} and the algebraic approach which uses the coproduct equation was considered in \cite{Fukud,Awata,AwataRTT}, where the authors compute first few coefficients of the expansion of the $R$-matrix in the spectral parameter.

In this paper we consider the algebraic approach to the computation of the Fock $R$-matrix of $\gt$. We consider $R$ as the solution of the set of equations arising from the coproduct relations. We manage to reduce these equations to a single equation which is simpler and more transparent than the original set of equations. We make a connection to the Macdonald theory of symmetric polynomials and derive an equation for the matrix elements of $R$ which is expressed through fundamental operators of the Macdonald theory: the Macdonald difference operator and the Cauchy kernel. As the first application of these results we derive a recursive formula which allows one to compute the matrix elements of $R$ written in the basis of Macdonald polynomials. In addition, we speculate that our strategy,  i.e. finding a reduced equation for the $R$-matrix from the coproduct relations, its reformulation in terms of symmetric functions and the recursive formula, can be extended by analogy to other quantum groups with Fock representations.

\subsection{The FJMM model}
One of the main motivations for this work is the study of the integrable model on the tensor product of Fock spaces of $\qg$ \cite{FJMM_FTM,FJMM_BA,FJMM_IM}, which we call the FJMM (Feigin--Jimbo--Miwa--Mukhin) model. The FJMM model describes oscillators on a one dimensional lattice with next neighbour interaction. The model depends on two parameters $q,t\in \bbC^*$. The Hamiltonian is given as a sum of two-site Hamiltonians which are expressed through the Heisenberg algebra which is generated by  $\{h_{-r},h_r\}_{r=1}^\infty$ satisfying
\ba
[h_r,h_s]= \d_{r,-s} \frac{1}{r}\frac{1}{(t^{-r}-q^r)(1-q^{-r}t^{r})}.
\ea
The operators $h_r$ and $h_s$ commute if $\text{sign}(r)=\text{sign}(s)$ and thus for any partition $\m=(\m_1,\dots,\m_n)$ we can introduce 
\ba
h^*_\m:= h_{-\m_1}\dots h_{-\m_n},
\qquad
h_\m:= h_{\m_1}\dots h_{\m_n}. 
\ea
The Hamiltonian acting on the tensor product of $L$ copies of the Fock spaces is
\begin{align*}
H=\sum_{i=1}^L H_{i,i+1},
\qquad 
H_{i,i+1}=
J \sum_{\a,\b,\g,\d}
 h^*_{\a} h_{\b} \otimes h^*_{\g} h_{\d},
\end{align*}
where the sum runs over all partitions $\a,\b,\g$ and $\d$, such that $|\a|+|\g|=|\b|+|\d|$ and $J=J(q,t)$ is some parameter. One can then, for example, impose periodic boundary conditions by demanding $H_{L,L+1}=H_{1,L}$. With the knowledge of the $R$-matrix $R(u)$ the diagonalization problem of $H$ is replaced with the diagonalization of the transfer matrix $T(u)$
\ba
\label{Tmx}
T(u;u_1,\dots, u_L)=
\text{Tr}_{V_0(u)}
\s 
R_{0,1}(u_L/u)\dots R_{0,L}(u_1/u),
\ea
where $\s$ is a twist operator which regularizes the trace.  As usual the Hamiltonian $H$ is computed by taking the logarithmic derivative of the transfer matrix $T(u)$
\ba
H=
T(0)^{-1}\frac{\dd}{\dd x} T(u)|_{u=0} +\text{const}.
\ea
A standard diagonalization method for the XXZ transfer matrix is the algebraic Bethe Ansatz \cite{Fad}. This method relies on the explicit knowledge of $R(u)$. 

\subsection{Computation of the Fock $R$-matrix using the coproduct}
Many quantum toroidal and quantum affine algebras can be realised as a quantum double using the notion of pairing, a certain bilinear form on the generators, see \cite{Drinfeld}. Let $\MU$ denote one such quantum group. If elements $a\in\MU$ represent an orthonormal basis $\MB$ with the paired dual elements $a^*$ then one immediately has a formula for the {\it universal matrix} $\MR\in \MU\otimes \MU$
\begin{align}
\MR=\sum_{a\in \MB} a\otimes a^*.
\end{align}
This matrix $\MR$ satisfies the equations
\begin{gather}
\label{eqR}
\MR\Delta(g)=\Delta'(g) \MR, \\
\label{eqdR}
(\Delta\otimes \id)\MR=
\MR_{1,3}\MR_{2,3}, \qquad
(\id \otimes \Delta)\MR=
\MR_{1,3}\MR_{1,2},
\end{gather}
where $\D$ is the coproduct and $\D'$ is the opposite coproduct of $\MU$.  Equation~\eqref{eqR} is the coproduct equation which holds for all $g\in \MU$; the second and the third equations take place in  $\MU^{\otimes 3}=\MU\otimes \MU\otimes \MU$. The element $\id$ is the identity and $\MR_{1,2}=\MR\otimes \id$, $\MR_{2,3}=\id\otimes \MR$ and $\MR_{1,3}=(\id\otimes \t) \MR_{1,2}$ ($\t$ being the transposition $\t(a\otimes b)=b\otimes a$). It follows from (\ref{eqR}) and  (\ref{eqdR}) that such an element $\MR$ satisfies  the universal Yang--Baxter equation
\begin{align}
\label{eqYB}
\MR_{1,2}\MR_{1,3}\MR_{2,3}
=
\MR_{2,3}\MR_{1,3}\MR_{1,2}.
\end{align}

If $\MR$ is specialised to a particular representation then it defines an integrable model whose $R$-matrix satisfies (\ref{eqYB}). In the case $\MU=\gt$, the quantum toroidal algebra, the pairing does not correspond to the orthonormal basis of the standard Drinfeld presentation of elements\footnote{One can write the universal $R$-matrix in the elliptic Hall formulation of $\gt$, see \cite{NegShuf}} of $\gt$ (see Section \ref{sec:algebra}). This problem arises due to the difficulty of finding a Poincar\'e--Birkhoff--Witt basis for $\gt$ using the Drinfeld generators.  This problem disappears after we take a suitable representation of $\gt$. In this case (\ref{eqR}) becomes the defining equation for the matrix $R$ in the chosen representation. Using explicit formulas \cite{FHHSY} of the action of $\gt$ on the Fock space $V$ we can write a linear equation for $R$
\begin{align}
\label{eqRF}
R\Delta(g)=\Delta'(g) R,
\end{align}
where $\Delta(g)$ is the Fock image of the coproduct on elements $g\in \gt$. 

We expect that our treatment of (\ref{eqRF}) can also be applied for finding $R$-matrices when the algebra $\gt$ is replaced with a different quantum affine and toroidal algebra in its Fock representation. The reason lies in the similarities of structures of such algebras and their coproducts. Their presentation is often given by ``Cartan currents'' $\psi^{\pm}(z)$, ``raising'' currents $e(z)$ and ``lowering'' currents $f(z)$. The currents  $\psi^{\pm}(z)$ are expressed through Heisenberg operators $h_r,h_{-r}$, $r>0$. The currents $e(z)$ and $f(z)$ are expressed as  vertex operators which have the form of exponentials in $h_{\pm r}$. The coproduct equation (\ref{eqRF}) then needs to be satisfied for $g\in\{h_{\pm r},e(z),f(z)\}$.  For higher rank algebras all operators acquire additional indices. 

We suggest a procedure for the reduction of (\ref{eqRF}) to a single equation\footnote{In the case of higher rank algebras one should expect a set of equations.} in which the elements of the algebra in their vertex operator representation act on a vector $\DR\in V\otimes V$ whose components coincide with the elements of the $R$-matrix.
\begin{enumerate}[label={\arabic*}.]
\item In the first step we factorise the matrix $R$,
\begin{align*}
R=K \Rb,
\end{align*}
where the first factor $K$ is simple and can be computed with the knowledge of the pairing of the Drinfeld quantum double construction. The second part $\Rb$ is the difficult part, it depends on the currents $e(z)$ and $f(z)$ given by vertex operators.  

\item In the next step we need to compute a ``twisted'' opposite coproduct $\widetilde{\Delta}(g)$ using the definition
\begin{align*}
\widetilde{\Delta}(g):=K^{-1}\Delta'(g)K.
\end{align*}
With this the problem is rephrased in terms of the unknown part $\Rb$
\begin{align}
\label{eqRb}
\Rb\Delta(g)=\widetilde{\Delta}(g) \Rb.
\end{align}
Consider this equation for $g=h_r$. In the case of the toroidal quantum algebra $\gt$ we find $\widetilde{\Delta}(h_r)=\Delta(h_r)$, hence
\begin{align}
\label{eqRht}
[\Rb,\Delta(h_r)]=0.
\end{align}
Since $\Delta$ is a homomorphism, this means that the algebra generated by $b_{\pm r}:=\Delta(h_{\pm r})$ forms a new Heisenberg algebra. In the tensor product of two modules we have two canonical commuting families of Heisenberg operators generated by $1\otimes h_r$ and $h_r\otimes 1$. The operators $b_{\pm r}$ are given by a linear combination of these elements. We can find another family of Heisenberg operators $c_{\pm r}$ as a different linear combination, such that $[b_r,c_s]=0$ for all $r,s\neq 0$. It follows that $\Rb$ is expressed in terms of only one family of operators $c_{\pm r}$. 

\item In the third step we rewrite (\ref{eqRb}) with $g=e(z),f(z)$ in terms of $c_{\pm r}$. Changing the basis from $\{1\otimes h_{\pm r},h_{\pm r}\otimes 1\}$ to $\{b_{\pm r},c_{\pm r}\}$ leads to an equation
\begin{align}
\label{VOR}
A(z)\Rb=\Rb B(z),
\end{align}
where two new operators $A(z)$ and $B(z)$ are expressed via $c_{\pm r}$ only.
\item In the last step we view (\ref{VOR}) as an equation in the vector space of the universal enveloping algebra of the Heisenberg algebra and consider $A(z)$ and $B(z)$ as operators acting on $\Rb$. We then use an isomorphism between the vector space of the enveloping Heisenberg algebra and the tensored Fock space $V\otimes V$. This isomorphism replaces $\Rb$ with the corresponding vector $\DR\in V \otimes V$, so that we can rewrite (\ref{VOR}) in vector form
\begin{align}
\label{VORvec}
\mathcal{A}(z)\DR=\mathcal{B}(z) \DR.
\end{align}
In this equation $\mathcal{A}(z)$ and $\mathcal{B}(z)$ are operators which are derived from $A(z)$ and $B(z)$ using the isomorphism. The operators $\mathcal{A}(z)$ and $\mathcal{B}(z)$  can be rewritten again in terms of vertex operator representation of the elements $e(z)$, $f(z)$  and $\psi^{\pm}(z)$. We subsequently take the constant term in $z$ of (\ref{VORvec}) and expand $\DR$ in the eigenbasis of the zero modes of $e(z)$ and $f(z)$. This leads to an equation on the components of $\DR$. In the case of the toroidal algebra $\gt$ this equation represents a recurrence relation from which we can compute all components of $\DR$ starting from the initial condition.
 
\end{enumerate}

\subsection{The Fock $R$-matrix of $\gt$}
In this section we give an informal summary of how the above strategy works for the Fock $R$-matrix of $\gt$. Before we do so we make a remark about the choice of parameters. The toroidal algebra $\gt$ has connections, in particular, to the Macdonald theory of symmetric functions and quantum affine algebras. We have two independent complex parameters $q$ and $t$ but in addition we will encounter two more sets: 
\ba
\{q_1,q_2,q_3\}, \qquad 
\{\qq,\tt\},
\ea
which are related as follows
\ba
&\tt=q_1=\frac{1}{q t}, \qquad \qq=q_3^{-1} = \frac{q}{t},
\qquad q_2 =q^2.
\ea
For the definition of $\gt$ we follow \cite{FJMM_BA} and use $\{q_1,q_2,q_3\}$ and $q,t$. The parameters $\{\qq,\tt\}$ are matched with the two parameters of the Macdonald theory \cite{MacdBook}, s.t. $P_{\l}(x;\qq,\tt)$ denotes a Macdonald polynomial.

Let us now turn to the four steps of the calculation. The $R$-matrix $R(u)\in \text{End}(V\otimes V)$ depends on the ratio of two spectral parameters $u=u_2/u_1$, where $u_1,u_2$ are attached to the two Fock representations $V$ in the tensor product. We recall the Heisenberg algebra commutation relations
\begin{align*}
[h_r,h_s]= \d_{r,-s} \frac{q^{r}-q^{-r}}{r \k_r}, 
\qquad \k_r=(1-q_1^r)(1-q_2^r)(1-q_3^r).
\end{align*}
The currents $e(z)$ and $f(z)$ in the Fock representation of $\gt$ can be written as vertex operators \cite{FHHSY}
\begin{align}
\label{eq:evointro}
&
e(z)=\frac{1-q_2}{\k_1} u ~
\exp\left(\sum_{r=1}^{\infty}\frac{\k_r}{1-q_2^{r}} h_{- r}z^{ r}\right)
\exp\left(\sum_{r=1}^{\infty}\frac{q^r \k_r}{1-q_2^{r}} h_{ r}z^{- r}\right),\\
\label{eq:fvointro}
&
f(z)=-\frac{1-q_2}{q_2 \k_1}u^{-1} 
\exp\left(-\sum_{r=1}^{\infty}\frac{q^r \k_r}{1-q_2^{r}} h_{- r}z^{ r}\right)
\exp\left(-\sum_{r=1}^{\infty}\frac{q^{2r} \k_r}{1-q_2^{r}} h_{ r}z^{- r}\right).
\end{align}
In the first step we factorise the matrix $R(u)$ as in \cite{FJMM_BA}
\begin{align}
\label{Rmatrixfactor}
&R(u)=
\exp\left( \sum_{r\geq 1}r \k_r h_{r} \otimes h_{-r} \right)
q^{-d_1-d_2}
\Rb(u).
\end{align}
In the second step we make use of the commutativity (\ref{eqRht}), from which it follows that the matrix $\Rb(u)$ can be expressed in a new Heisenberg basis 
\begin{align}
\label{eq:Rbintro}
\Rb(u)
     =
\sum_{\m,\n}
\Rb_{\m,\n}(u)
c^*_{\m} c_\n,
\end{align}
where the sum runs over all partitions $\m,\n$ and $\Rb_{\m,\n}(u)$ are non-zero for partitions of the same weight $|\m|=|\n|$. The new Heisenberg operators $c_r$ satisfy
\begin{align*}
 [c_r, c_{-s}]
=
\d_{r,s}
 \frac{(q^r+q^{-r})(1-q_1^r)(1-q_3^r)}{r}, \qquad r,s>0.
\end{align*}
In the third step we rewrite the coproduct relations (\ref{eqRb}) with $g=e(z),f(z)$ in terms of this new Heisenberg algebra. The two resulting  equations for $\D(e(z))$ and $\D(f(z))$ coincide and can be conveniently expressed via two new vertex operators $\varphi^{\pm}(z)$
\begin{align*}
\varphi^{\pm}(z):=
\exp\left(\mp \sum_{r=1}^{\infty}\frac{1}{q^r+q^{-r}}c_{-r}z^{ r}\right)
\exp\left(\pm \sum_{r=1}^{\infty}\frac{1}{q^r+q^{-r}}c_r q^r z^{-r}\right).
\end{align*}
Using the operators $\varphi^{\pm}(z)$ and the coproduct we derive (Proposition \ref{propRFE}, see also \cite{Fukud})
\begin{align}
\label{eq:VOR}
[\varphi^+(z),\Rb(u)]
=u^{-1}\left( \Rb(u) \varphi^-(z q^{-2}) - \varphi^-(z q^2)\Rb(u) \right).
\end{align}
Equation \eqref{eq:VOR} is the explicit form of (\ref{VOR}) in the case of the Fock representation of $\gt$, and the central starting point to compute $\Rb(u)$. 

In the fourth step (\ref{eq:VOR}) is turned into a vector equation in which we think of $\vphi^{\pm}(z)$ in (\ref{eq:VOR}) as operators acting on $\Rb(u)$. The Heisenberg basis elements $c^*_{\m} c_\n$ in (\ref{eq:Rbintro}) are mapped to basis elements of $V\otimes V$. As a result we can rewrite the left and right actions of $\vphi^{\pm}(z)$ as left actions by vertex operators on vectors in $V\otimes V$. The $R$-matrix $\Rb$ is replaced with a vector $\DR\in V\otimes V$ and we get an equation of the form (\ref{VORvec}) (Proposition \ref{prop:copef}). Taking the coefficients of $z^0$ in this equation leads to (Theorem \ref{eq:thmzero})
\begin{align}
\label{eq:copredintro}
u  \left(
\frac{\k_1 q^2}{1-q_2} ~ 1\otimes  \fd_0 + 1\otimes1\right)\DR(u) 
= \left(
 1\otimes1 - \frac{\k_1}{1-q_2}  \sum_{j\geq 0} q^{j} \ed_{-j}\otimes  \psid^-_{-j}
 \right)\DR(u),
\end{align}
where $\fd_0$ and $\ed_{-j}$ are the modes of the same vertex operators $e(z), f(z)$ in (\ref{eq:evointro}) and (\ref{eq:fvointro}) but with $u=1$, and $\psid^{-}_{-j}$ are the modes of the Cartan current $\psi^-(z)$.  Equation (\ref{eq:copredintro}) represents a single equation that determines the unknown coefficients of $\Rb(u)$ by computing the vector $\DR(u)$. 

To achieve the latter, we use the isomorphism between the Fock space and the ring of symmetric functions $\Lambda$. In this form we are able to rewrite (\ref{eq:copredintro}) in terms of the Macdonald difference operator $E$, its dual $\Eb$ and the Cauchy kernel $\Pi(x,y)$ (see \cite{MacdBook}), where $x=(x_1,x_2,\dots)$ and $y=(y_1,y_2,\dots)$ are the two copies of alphabets of the two rings $\Lambda$. The operators $E$ and $\Eb$ are the images under the isomorphism of $\ed_0$ and $\fd_0$.  We also require a modified Cauchy kernel $\Pit(x,y)$
\ba
\Pi(x,y)=
\exp\left( \sum_{r\geq 1}\frac{(1-\tt^r)}{r(1-\qq^{r})} p_r(x)p_r(y)  \right), 
\qquad \Pit(x,y):=\Pi(q x,\qq y)^{-1} \Pi(q x,\tt y),
\ea
where $p_r(x)$ and $p_r(y)$ are the power sum symmetric functions in the alphabets $x$ and $y$. By applying the isomorphism to $\DR(u)$ we get a symmetric function version of this vector which we denote by $R(x,y;u)$. 
With this we rewrite (\ref{eq:copredintro}) as (Theorem \ref{thm:DPR})
\begin{align}
\label{eq:intro}
E_x\Pit(x,y)  R(x,y;u)
=u~
\Pit(x,y) \bar{E}_y   R(x,y;u).
\end{align}
From \eqref{eq:intro} it is clear that the relevant basis for $R(x,y;u)$ is the set of Macdonald polynomials since the operators $E_x, \Eb_y$ act diagonally on them. We therefore define new expansion coefficients $L_{\a,\b}(u)$
\begin{align}
\label{Rmac}
R(x,y;u)
     =
\sum_{\a,\b}
L_{\a,\b}(u)
P_\a(x;\qq,\tt)P_\b(y;\qq,\tt).
\end{align}
The coefficients $L_{\alpha,\beta}(u)$ are non-vanishing for $|\alpha|=|\beta|=w$, $w\geq0$. Using the initial condition $L_{\pzero,\pzero}(u)=1$ and equation (\ref{eq:intro}) we determine these coefficients recursively with a formula of the form
\begin{align}
\label{eq:Rbr}
L_{\alpha,\beta}(u)=\sum_{\m\prec\alpha,\n\prec\beta} L_{\alpha/\m,\beta/\n}(u) L_{\m,\n}(u),
\end{align}
where the summation runs over all $\m$ and $\n$ such that $\a/\m$ and $\b/\n$ represent a pair of skew Young diagrams each having at least one box. A formula for the skew functions $L_{\alpha/\m,\beta/\n}(u)$ is given in Proposition \ref{prop:Lbranch}. This formula provides further insights and can also be employed to compute $L_{\alpha,\beta}(u)$ for partitions $\a,\b$ of small weights (up to about $w=6$).

\subsection{Evaluation of the $R$-matrix and relation to the six vertex model}
After determining the coefficients $\Rb_{\m,\n}$, using (\ref{Rmac}) and the transition coefficients from the Macdonald polynomials to the power sums, we can explicitly compute matrix elements of the full $R$-matrix corresponding to partitions of small weights. We insert $\Rb$ into (\ref{Rmatrixfactor}), rewrite it using the Heisenberg algebra $h_{\pm r}$, and then sandwich the resulting operator $R(u)$ between two states of the tensor product in the standard Fock realization. After this evaluation we find
\begin{align}
\label{Rmatrix}
R(u)=\left[
\begin{array}{cccccc}
 1 & 0 & 0 & 0 & 0 &\dots\\
 0 & \frac{q(1-u)}{1-q^2 u} & \frac{(1-q^2 )u}{1-q^2 u} & 0 & 0 &\dots\\
 0 & \frac{1-q^2}{1-q^2 u} & \frac{q(1-u)}{1-q^2 u}  & 0 & 0 & \dots\\
 0 & 0 & 0 & R^{(2)}(u)  & 0 & \dots\\
0 &  0 & 0 & 0 & R^{(3)}(u) & \dots\\
\vdots & \vdots  & \vdots & \vdots  & \vdots & \ddots
\end{array}
\right].
\end{align}
This $R$-matrix has an expected block structure.  The matrix elements, which we denote by $[R(u)]_{\a,\b}^{\g,\d}$, are rational functions in $u,q,t$, they give Boltzmann weights of a vertex model whose ``in'' and ``out'' states are labelled by Young diagrams $\a,\b$ and $\g,\d$, respectively. The non-zero elements correspond to $|\a|+|\b|=|\g|+|\d|$ which reflects a typical conservation law in vertex models. 

There are infinitely many blocks $R^{(w)}$, and a single block contains (all non-vanishing) elements $[R(u)]_{\a,\b}^{\g,\d}$ with $\a,\b,\g,\d$ satisfying $|\a|+|\b|=|\g|+|\d|=w$. The sizes of the blocks $R^{(w)}$ with $w=2,3,4,5$ are $n\times n$ with $n=5,10,20,36$, respectively. The $5\times 5$ block $R^{(2)}$ is already very large which indicates that the language of (Macdonald) symmetric functions (\ref{Rmac}) is more suitable to give explicit expressions for such $R$-matrices. 

We note that the blocks $R^{(0)}$ and  $R^{(1)}$, which are given explicitly in (\ref{Rmatrix}), are equal to the first and the second blocks of the $R$-matrix of the six vertex model (see \cite{Bax})\footnote{The six vertex $R$-matrix (\ref{Rsix}) is written in one of the standard conventions which justifies our convention on the use of the parameter $q$.}
\begin{align}
\label{Rsix}
R_{6V}(u)=\left[
\begin{array}{cccc}
 1 & 0 & 0 & 0 \\
 0 & \frac{q(1-u)}{1-q^2 u} & \frac{(1-q^2 )u}{1-q^2 u} & 0 \\
 0 & \frac{1-q^2}{1-q^2 u} & \frac{q(1-u)}{1-q^2 u}  & 0 \\
 0 & 0 & 0 & 1 \\
\end{array}
\right].
\end{align}
The last block in the six vertex $R$-matrix, given by a single entry $1$, can be matched with the matrix element $[R(u)]_{(1),(1)}^{(1),(1)}$ which belongs to the block $R^{(2)}$ in (\ref{Rmatrix}). This matrix element has the form $1+t ~c(u;q,t)$ with $c(u;q,t)$ regular as $t\rightarrow 0$. Therefore we recover $1$ in the limit $t=0$.

\subsection{Outline}
The outline of this paper is the following. We give the definition of the algebra $\gt$ in Section~\ref{sec:algebra}. In Section~\ref{sec:Fock} we recall the Fock module. In Section~\ref{sec:coprod} we reduce the coproduct equations to an equation involving vertex operators and then rewrite this equation in the vector form. In Section~\ref{sec:MacdonaldR} we turn to the Macdonald theory and derive an equation for the symmetric function $R(x,y;u)$ and then obtain a recursive formula for the coefficients of $R(x,y;u)$ in the Macdonald basis.

%%%%%%%%
%%%%%%%%%%%%%%%%%%%%
%%%%%%%%%%%%%%%%%%%%%%%%%%%%%%%%%%%%%%%%

\section{The quantum toroidal algebra $\gt=\qg$}
\label{sec:algebra}
In this section we fix the definitions of the algebra $\gt$. We work in the Drinfeld presentation and our definitions coincide with those chosen in \cite{FJMM_BA}. The algebra $\gt$ depends on two parameters $q,t\in \bbC$. For the defining relations it is convenient to use a related set of parameters $q_1, q_2$ and $q_3$, satisfying $q_1 q_2 q_3 =1$, and
\begin{align}
\label{eq:qt_tor}
q_1=\frac{1}{q t}, \qquad q_2 =q^2, \qquad q_3 = \frac{t}{q}.
\end{align}
With these parameters we define a function $g(z,w)$ and coefficients $\k_r$
\begin{align}
\label{eq:g}
&g(z,w):=(z-q_1 w)(z-q_2 w)(z-q_3 w),\\
\label{eq:kappa}
&\k_r:=(1-q_1^r)(1-q_2^r)(1-q_3^r),
\end{align}
which are used to write the defining relations of the algebra.

%%%%%%%%
%%%%%%%%%%%%%%%%%%%%

\subsection{Generators and relations}
The algebra $\gt$ is generated by the currents 
\begin{align}\label{currents}
e(z), \quad f(z), \quad \psi^{\pm}(z),
\end{align}
two central elements $C=q^c$ and $\Cp=q^{\cp}$ and two grading operators $D=q^d$ and $\Dp=q^{d^{\perp}}$. These operators satisfy the following relations
\begin{gather}
[C,\Cp]=0,\qquad [D,\Dp]=0, \nonumber \\
\label{eq:Dgrading}
D e(z)= e(qz)D, \qquad 
D f(z)= f(qz)D, \qquad 
D \psi^{\pm}(z)= \psi^{\pm}(qz)D,
\\
%\label{eq:Dpgrading}
\Dp e(z)= q e(z)\Dp, \qquad 
\Dp f(z)= q^{-1}f(z)\Dp, \qquad 
\Dp \psi^{\pm}(z)= \psi^{\pm}(z)\Dp, \nonumber
\\
\label{eq:psippsim}
\psi^{\pm}(z)\psi^{\pm}(w)= \psi^{\pm}(w)\psi^{\pm}(z),
\\
\label{eq:psipm}
\frac{g(q^{-c}z,w)}{g(q^c z,w)}
\psi^{-}(z)\psi^{+}(w)
= 
\frac{g(w,q^{-c}z)}{g(w,q^c z)}
\psi^{+}(w)\psi^{-}(z),
\\
\label{eq:ewez}
g(w,z)e(w)e(z)
+g(z,w)e(z)e(w)=0,
\\
\label{eq:fwfz} 
g(w,z)f(z)f(w)
+g(z,w)f(w)f(z)=0, 
\\
\label{eq:ewpsiz}
g(w,z)e(w)\psi^{\pm} (q^{\frac{1}{2}(-1\mp1)c}z)
+g(z,w)\psi^{\pm}(q^{\frac{1}{2}(-1\mp1)c}z)e(w)=0, 
\\ 
\label{eq:fwpsiz}
g(w,z)\psi^{\pm}(q^{\frac{1}{2}(-1\pm1)c}z)f(w)
+g(z,w)f(w)\psi^{\pm}(q^{\frac{1}{2}(-1\pm1)c}z)=0, 
\\
\label{eq:ezfw}
[e(z),f(w)]=
\frac{1}{\k_1}\left(\d(q^c \frac{w}{z})\psi^+(w)-\d(q^c\frac{z}{w})\psi^-(z)\right),
\end{gather}
supplemented by the cubic relations
\begin{align}
\underset{z_1,z_2,z_3}{\text{Sym}}z_2 z_3^{-1} [e(z_1),[e(z_2),e(z_3)]]=0= \underset{z_1,z_2,z_3}{\text{Sym}}z_2 z_3^{-1} [f(z_1),[f(z_2),f(z_3)]].
\label{eq:fff}
\end{align}

\subsubsection{Mode expansions}
The currents (\ref{currents}) have the following mode expansions
\begin{align}
\label{modes}
e(z)=\sum_{n\in \bbZ} e_n z^{-n},
\qquad 
f(z)=\sum_{n\in \bbZ} f_n z^{-n},
\qquad 
\psi^{\pm}(z)=\sum_{j=0}^{\infty} \psi^{\pm}_{\pm j} z^{\mp j}.
\end{align}
Along with the modes $\psi^{\pm}_{j}$ one also uses the modes $h_r$, $h_{-r}$, with $r>0$, defined by
\begin{align}
\label{psiexp}
\psi^{\pm}(z)=q^{\mp \cp} \exp\left(\sum_{r=1}^{\infty} \k_r h_{\pm r} z^{\mp r}\right). 
\end{align}
After expanding the exponential we get a relation between $\psi^{\pm}_{\pm j}$ and $h_{\pm r}$
\begin{align}
\label{psih}
&
\psi_{\pm j}^{\pm}
=
  q^{\mp\cp}
\sum_{m=0}^{j}
 \frac{1}{m!}
\sum_{\substack{r_1,\dots,r_m >0\\r_1+\cdots + r_m=j}}\k_{r_1}\dots \k_{r_m} h_{\pm r_1}\dots h_{\pm r_m},
\end{align} 
Substituting the mode expansions of $e(z)$, $f(z)$ and $\psi^{\pm}(z)$ in the form (\ref{psiexp}) into relations (\ref{eq:psippsim})--(\ref{eq:ezfw}) we find
\begin{align}
\label{eq:hh}
\begin{split}
[h_r,h_s] &= \d_{r,-s} \frac{q^{c r}-q^{-c r }}{r \k_r},\\
%\label{eq:he}
[h_r,e_n] &= -\frac{1}{r}e_{n+r}q^{\frac{1}{2}c (-r-|r|) },\\
%\label{eq:hf}
[h_r,f_n] &=  \frac{1}{r}f_{n+r}q^{\frac{1}{2}c (-r+|r|)},\\
%\label{eq:ef}
[e_k,f_{-n}] &= 
\frac{1}{\k_1}
\left(\d_{k\geq n} q^{c k}  \psi^{+}_{k-n}
-
\d_{k\leq n}q^{-c n} \psi^{-}_{k-n}\right),
\end{split}
\end{align}
where the Kronecker delta symbol with a single argument is  defined by $\d_{\text{True}}=1$ and $\d_{\text{False}}=0$. From these relations it is clear that the elements $e_0$, $f_0$, $h_1$  and $h_{-1}$ can be used to generate all other modes in the algebra. Therefore they form a generating set of $\gt$ together with the degree operators and the central elements.

\subsubsection{Grading and decomposition}
The defining relations (\ref{eq:Dgrading}) equip the algebra $\gt$ with a $\bbZ^2$ grading. For each element $g\in \gt$ the operators $\Dp$ and $D$ define the {\it principal} degree and the {\it homogeneous} degree, respectively
\begin{equation}
\begin{split}
\label{degdef}
&\deg(g)=(\pdeg(g),\hdeg{g}),
\\
&q^{d^{\perp}} g q^{-d^{\perp}} = q^{\pdeg(g)} g, 
\qquad 
q^d g q^{-d}= q^{-\hdeg(g)} g.
\end{split}
\end{equation}
With (\ref{degdef}), relations (\ref{eq:Dgrading}) lead to
\begin{equation}
\begin{split}
\label{degsefh}
&\deg(e_n)=(1,n),\qquad
\deg(f_n)=(-1,n),\qquad
\deg(h_r)=(0,r),\\
&\deg(g)=(0,0),\qquad g\in(C,\Cp,D,\Dp).
\end{split}
\end{equation}

As we can see from (\ref{eq:hh}) the generators $h_{\pm r}$, $r>0$, form  a Heisenberg subalgebra. Using the generators $e_n,f_n$ ($n\in \mathbb{Z}$) and $h_{\pm r}$ ($r>0$), we define two natural subalgebras 
\begin{equation}
\label{eq:gsub}
\begin{split}
g_{\geq} &:= \langle e_n, h_r , C, C^{\perp}, D, D^{\perp}\rangle,
\qquad
n\in \mathbb{Z},\ r>0,
\\
g_{\leq} &:= \langle f_n, h_r , C, C^{\perp}, D, D^{\perp}\rangle,
\qquad n\in \mathbb{Z},\ r>0.
\end{split}
\end{equation}
We will also consider the subalgebras
\begin{equation}
\label{eq:gsubstrict}
\begin{split}
g_{>} := \langle e_n\rangle,\ n\in \mathbb{Z},
\qquad g_{<} := \langle f_n \rangle,
\ n\in \mathbb{Z}.
\end{split}
\end{equation}

%%%%%%%%
%%%%%%%%%%%%%%%%%%%%

\subsection{Miki's automorphism}
\label{ssec:th}
In $\gt$ there exists an order four automorphism \cite{Miki}, denoted $\th$. Its action on the generators is given by
\begin{gather}
\th(e_0) = h_{-1}, \qquad 
\th(f_0) = h_1, \qquad 
\th(h_1) = e_0, \qquad 
\th(h_{-1}) = f_0, 
\\
\th(\Cp) = C, \qquad 
\th(C) = (\Cp)^{-1}, \qquad 
\th(\Dp) = D, \qquad 
\th(D) = (\Dp)^{-1}.
\end{gather}

%%%%%%%%
%%%%%%%%%%%%%%%%%%%%

\subsection{The coproduct}
The algebra $\gt$ is a Hopf algebra with the coproduct given by
\begin{align}
\label{eq:De}
\begin{split}
\D(e_k) &= \sum_{j=0}^{\infty} e_{k-j}\otimes q^{c k }  \psi_j^{+} + 1\otimes e_k, \\
%\label{eq:Df}
\D(f_k) &=  f_k\otimes 1 + \sum_{j=0}^{\infty} q^{c k } \psi_{-j}^{-} \otimes  f_{k+j}, \\
%\label{eq:Dh}
\D(h_r) &=  h_r\otimes 1 + q^{-c r }  \otimes h_r , \\
%\label{eq:Dhm}
\D(h_{-r}) &= h_{-r}\otimes q^{c r }  +  1 \otimes h_{-r},
\\
%\label{eq:Dx}
\D(g) &= g\otimes g, \qquad g\in(C,\Cp,D,\Dp).
\end{split}
\end{align}
By summing over the indices $k$ and $r$ in (\ref{eq:De}) one obtains the coproduct for the currents $e(z),f(z)$ and $\psi^{\pm}(z)$. Let $\t$ be the transposition operation $\t(a\otimes b)=b\otimes a$ for any $a,b\in \gt$. Then the opposite coproduct $\D'$ is defined by 
\ba
\D'(g)=\t \D(g), \qquad g\in \gt.
\ea

\subsection{The universal $R$-matrix}
\label{sec:Rfact}
Let us discuss the universal $R$-matrix in the Drinfeld presentation (for the discussion in the elliptic Hall setting see for example \cite{NegShuf}).
The universal $R$-matrix, denoted $\MR$, is an element in a completion of $\gt_{\geq}\otimes \gt_{\leq}$, with $\gt_{\geq}$ and $\gt_{\leq}$ defined in \eqref{eq:gsub}. The $R$-matrix satisfies three relations with the coproduct
\begin{align}
\label{deltaUR}
&\MR\D(g)=\D'(g) \MR, \qquad \forall g\in \gt,
\\
\label{URR1}
&(\D\otimes 1)\MR=
\MR_{1,3}\MR_{2,3}, \\
\label{URR2}
&(1 \otimes \D)\MR=
\MR_{1,3}\MR_{1,2}.
\end{align}
The element $\MR$ then satisfies the Yang--Baxter equation
\ba
\MR_{1,2}\MR_{1,3}\MR_{2,3}
=
\MR_{2,3}\MR_{1,3}\MR_{1,2}.
\ea
Using the pairing of the quantum double of $\gt$ (see e.g. \cite{FJMM_BA} for details) one can deduce the factorized form for the universal $R$-matrix 
\begin{align}
\label{KR}
\MR=\MK\MRb,
\end{align}
where the first factor $\MK$ reads
\begin{align}
\label{MK}
&\MK=\exp\left( \sum_{r\geq 1}r \k_r h_{r} \otimes h_{-r} \right)
q^{-c\otimes d-d\otimes c -\cp\otimes d^{\perp}-d^{\perp}\otimes \cp },
\end{align}
and the second factor can be written only up to the first term in the homogeneous degree
\begin{align}
\label{MR2}
&\MRb=1\otimes 1+ \k_1 \sum_{n\in \bbZ} e_{n} \otimes f_{-n} +\dots.
\end{align}
The terms of the homogeneous degree higher than $1$ are not known since the pairing is not diagonal. We note that all terms in (\ref{MR2}) must be neutral in the principal degrees.  

The linear equation (\ref{deltaUR}) in $\MR$ will be referred to as the coproduct relation. In order to compute $\MR$ by solving (\ref{deltaUR}) it is necessary to have a Poincar\'e--Birkhoff--Witt (PBW) basis in $\gt$ in the chosen presentation. Such a basis will allow one to express a product of two elements in $\gt$ as a finite linear sum of the basis elements. This will lead to a system of equations whose unknowns are the basis expansion coefficients of $\MR$. Solving these equations would provide an explicit form of $\MR$. 

It is however not clear how to build the PBW basis for $\gt$ in the Drinfeld presentation. The issue arises when one tries to build a basis for $\gt_{>}$ and $\gt_{<}$ defined in \eqref{eq:gsubstrict}. A direct algebraic construction of this basis in the case of $\gt_{>}$ could be carried out as follows. We need to identify the basis elements in each subspace with fixed homogeneous degree. For the homogeneous degree equal to $1$ we simply choose all elements $e_i$. For the homogeneous degree equal to $2$ one typically takes the basis to consist of elements $e_i e_j$, for $i,j \in \bbZ$ and requires $i\leq j$ (or $i\geq j$).  If such a basis is present then, using the commutation relations in $\gt$, one should be able to write all elements $e_k e_l$ with $k>l$ as finite linear sums in  $e_i e_j$ with $i\leq j$. In other words, we need to have an ordering relation of the form
\ba
e_{k}e_l = \sum_{\substack{i^*\leq i\leq j\leq j^*,\\ i+j=k+l}}c_{i,j} e_{i} e_{j}, \qquad k>l,
\ea
where $i^*$ and $j^*$ are some finite integers and $c_{i,j}$ are some coefficients which may also depend on $k$ and $l$. This basis fails because such quadratic identities do not hold in $\gt$. To the authors' knowledge, constructions of PBW bases for $\gt$ are absent in the literature. For certain homomorphisms, like the Fock representation discussed in Section \ref{sec:Fock},  there are natural candidates for PBW bases and the ordering relations can be written explicitly.

%%%%%%%%
%%%%%%%%%%%%%%%%%%%%
%%%%%%%%%%%%%%%%%%%%%%%%%%%%%%%%%%%%%%%%

\section{The Fock module}
\label{sec:Fock}
We use the same notations as in  \cite{FJMM_BA}. The action of $\gt$  on the Fock module \cite{FHHSY} allows one to introduce a free complex parameter, the spectral parameter. Thus we denote the Fock module by $V(u)$, $u\in \bbC^*$. The basis elements of $V(u)$ are labelled by integer partitions. As a consequence we will encounter a number of manipulations involving partitions. Let us recap the basics of integer partitions and then turn to the discussion of the Fock module.

\subsection{Partitions}
Let $\MP$ be the set of all partitions $\m=(\m_1,\m_2,\dots)$,  $\m_j\geq \m_{j+1}$. As usual a partition $\m$ is identified with the associated Young diagram $\m$. The {\it length of a partition} $\m$, denoted $\ell(\m)$, is equal to the number of non-zero parts in $\m$. 
The {\it weight of a partition} $|\m|$ is given by the sum of all parts $|\m|=\m_1+\dots +\m_{\ell(\m)}$. The notation $\m\vdash j$ means that $\m$ is a partition of $j$, $|\m|=j$. By $\MP_j$ we denote the set of all partitions with weight $j$. The set $\MP_0$ contains one element which is the empty partition, denoted $\pzero$. The total number of elements in $\MP_j$ is denoted by $n(j)$.

Let $\m$ and $\n$ be two partitions such that the parts of $\n$ are given by a subset of parts of $\m$, then $\n$ is a {\it subpartition} of $\m$ which is denoted $\n\pin \m$. If $\n\pin\m$ then  the {\it complement} of $\m$ with respect to $\n$, denoted $\m\psub\n$, stands for the partition whose parts are given by removing parts of $\n$ from the set of parts of $\m$. The union of two partitions $\m\padd \n$ is the partition whose parts are given by the union of the parts of $\m$ and $\n$. The intersection of two partitions $\m$ and $\n$, denoted $\m\pint \n$, is the maximal subpartition $\s$ contained in both $\m$ and $\n$. If for two partitions $\m$ and $\n$ we have $\m_i\geq \n_i$ for all $i$ then $\m/\n$ denotes the sequence of non-negative integers $(\m_1-\n_1,\m_2-\n_2,\dots)$ which is identified with the skew Young diagram $\m/\n$. 

Let $r$ be a positive integer and $\mu$ a partition. Define the {\it part multiplicity function} $m_r(\m)$ which counts the number of parts in $\m$ equal to $r$. The {\it part multiplicity vector} $m(\m)$ is defined as $m(\m)=(m_1(\m),m_2(\m),\dots,m_{\m_1}(\m))$. Let us introduce short hand notations involving factorials of part multiplicities of a partition $\m$, we set 
\ba
&m(\m)!:=\prod_{a\in \m} m_a(\m)!.
\ea
We define an analogue of the binomial coefficient for situations when arguments are part multiplicity vectors of $\m$ and $\n$
\begin{align}
\label{binom0}
{\m \brack \n} :=
  \begin{cases}
\frac{m(\m)!}{m(\m\psub\n)!m(\n)!},       
&\quad \text{if } \n\pin\m, \\
0,  & \quad \text{if } \n\pnin\m.
  \end{cases}
\end{align}
All properties of binomial coefficients apply to (\ref{binom0}), in particular the symmetry
\begin{align}
\label{binom}
{\m \brack \n} =
{\m \brack \m\psub \n}.
\end{align}

\subsection{Fock representation}
\label{ssec:Fock}
In \cite{FHHSY} the authors construct the Fock representation for $\gt$ where $e(z)$ and $f(z)$ act by vertex operators. We will describe this representation here in the detail. The notation for the homomorphism taking elements of $\gt$ to their representations on $V(u)$ will be omitted for simplicity.   The central elements in $V(u)$ are specialized to $c=1$ and $\cp=0$. 

\subsubsection{Representation of the Heisenberg subalgebra}
The operators $h_{\pm r}$, $r\geq 0$, form the Heisenberg subalgebra $\MH_h$ due to (\ref{eq:hh}) 
\begin{align}
\label{eq:hhperp}
\MH_{h}=\{h_r,h_{-r}\}_{r>0}
,\qquad
[h_r,h_s]= \d_{r,-s} \frac{q^{r}-q^{-r}}{r \k_r}.
\end{align}
The homogeneous degree operator $q^d$ acts by
\begin{align}
\label{eq:dhperp}
q^{d}h_r q^{-d}= q^{-r} h_r. 
\end{align}
The operators $h_r$ commute with each other if their indices have the same sign. Therefore products of such operators can be written in any order. For a partition $\m$ we introduce the operators $h_{\m}$ and $h^*_{\m}$
\begin{align}
\label{def:hmu}
h_{\m}:=h_{\m_1}h_{\m_2}\dots = \prod_{a\in \m} h_{a},
\qquad 
h^*_{\m}:=h_{-\m_1}h_{-\m_2}\dots =  \prod_{a\in \m} h_{-a}.
\end{align}
Let $\ket{\mu}$ and $\bra{\mu}$, $\mu\in \MP$, denote the vectors of the Fock vector space and its dual, respectively.
Let $\ket{\emptyset}$ be the Fock space vacuum and $\bra{\emptyset}$ be the dual Fock space vacuum. The highest weight property is expressed as 
\begin{align}
\label{eq:hvac}
h_r\ket{\emptyset}=0,
\qquad 
\bra{\emptyset}h_{-r}=0,\qquad r>0. 
\end{align}
The action of the operators $h_r$ and $h_{-r}$, $r>0$, on the Fock space is given by the formulas:
\begin{align}
\label{eq:hact}
&h_r\ket{\mu}=m_r(\m) \frac{q^r-q^{-r}}{r \k_r}\ket{\mu\psub(r)},\\
\label{eq:hdact}
&h_{-r}\ket{\mu}= \ket{\mu\padd(r)},
\end{align}
while the action on the dual space is 
\begin{align}
\label{eq:hactdual}
&\bra{\mu}h_r= \bra{\mu\padd(r)},\\
\label{eq:hdactdual}
&\bra{\mu}h_{-r}=m_r(\m) \frac{q^r-q^{-r}}{r \k_r} \bra{\mu\psub(r)}.
\end{align}
With the help of (\ref{eq:hdact}) and (\ref{eq:hactdual}) we can view the operators $h^*_\m$ for $\m\in \MP$ as creation operators which generate all vectors of the Fock space from the vacuum and similarly $h_\m$ generate all vectors of the dual Fock space
\ba
h^*_{\m}\ket{\emptyset}=\ket{\m},
\qquad
\bra{\emptyset} h_{\m}=\bra{\m}.
\ea
From (\ref{eq:hact}) and (\ref{eq:hdactdual}) we see that if the part $r$ is absent from $\m$ then $m_r(\m)=0$ and the actions of $h_r$ and $h_{-r}$ on the corresponding ket and bra vectors give zero as required by (\ref{eq:hvac}).  The scalar product of two vacuum vectors is chosen to be normalized
\ba
\bra{\emptyset}\emptyset\rangle=1.
\ea
With this we compute the scalar product of two arbitrary vectors
\begin{align}
\label{eq:hsp}
\bra{\m}\n\rangle=\d_{\m,\n} \MN_{\m},
\qquad
\MN_{\m}:=
\prod_{a\in \m}\left(\frac{q^a-q^{-a}}{a\k_a}\right).
\end{align}
Where the delta symbol appears due to the commutation of $h_r$ and $h_{-s}$ and (\ref{eq:hvac}) and the normalization $\MN_{\m}$ is computed with (\ref{eq:hhperp})
\ba
\bra{\m}\m\rangle
&=
\bra{\emptyset}
h_\m
h^*_{\m}
\ket{\emptyset}
=
\bra{\emptyset}
\prod_{a\in\m}
h_a h_{-a}
\ket{\emptyset}\\
&=
\bra{\emptyset}
\prod_{a\in\m}
[h_a,h_{-a}]
\ket{\emptyset}
=
\prod_{a\in \m}\left(\frac{q^a-q^{-a}}{a\k_a}\right)
\bra{\emptyset} \emptyset \rangle=\MN_\m.
\ea
The identity matrix acting on the Fock space is given by
\begin{align}
\label{eq:idFock}
\text{Id}=
\sum_\a \frac{1}{\MN_\a} \ket{\a}\bra{\a}.
\end{align}
The formulas (\ref{eq:hvac})-(\ref{eq:idFock}) are required in order to write the $R$-matrix in the form (\ref{Rmatrix}).

\subsubsection{Representation of the currents}
Now we turn to the currents $e(z)$ and $f(z)$. Using the conventions of \cite{FJMM_BA} these operators are represented on $V(u)$ by vertex operators with the Heisenberg generators $h_{\pm r}$
\begin{align}
\label{VOe}
\begin{split}
e(z &)=\frac{1-q_2}{\k_1} u ~
\exp\left(\sum_{r=1}^{\infty}\frac{\k_r}{1-q_2^r} h_{- r}z^{ r}\right)
\exp\left(\sum_{r=1}^{\infty}\frac{q^r \k_r}{1-q_2^r} h_{ r}z^{- r}\right),\\
%\label{VOf}
f(z)&=-\frac{1-q_2}{q_2 \k_1}u^{-1} 
\exp\left(-\sum_{r=1}^{\infty}\frac{q^r \k_r}{1-q_2^r} h_{- r}z^{ r}\right)
\exp\left(-\sum_{r=1}^{\infty}\frac{q^{2r} \k_r}{1-q_2^r} h_{ r}z^{- r}\right).
\end{split}
\end{align}
The powers of $u$ in these expressions give the principal grading of $e(z)$ and $f(z)$.  
From these vertex operators we can write the representation on the modes. 
The exponentials appearing in these formulas have the following expansion in terms of partitions:
\begin{align}
\label{eq:expsum}
&
\exp\left(\sum_{r=1}^{\infty} g_r z^{r}\right)
=
\sum_{j=0}^{\infty}
z^j
\sum_{\m \vdash j} \frac{1}{m(\m)!} \prod_{a\in \m}  g_a.
\end{align}
Using (\ref{eq:expsum}) we write the modes $e_n$ and $f_n$ of (\ref{VOe}) 
\begin{align}
\label{eq:epn}
\begin{split}
e_n &=
\frac{1-q_2}{\k_1} u ~
 \sum_{i=0}^{\infty}
 \sum_{\m \vdash i}
 \sum_{\n \vdash i+n}
 q^{i+n}  \frac{1}{m(\m)!m(\n)!} \prod_{a\in \m\padd\n}  \frac{\k_a}{1-q_2^a}\cdot
      h^*_{\m}	      h_{\n}
,\\
%\label{eq:fpn}
f_n &=
-\frac{1-q_2}{q_2 \k_1}u^{-1} 
 \sum_{i=0}^{\infty}
 \sum_{\m \vdash i}
 \sum_{\n \vdash i+n}
 (-1)^{\ell(\m)+\ell(\n)}q^{3i+2n}   
 \frac{1}{m(\m)!m(\n)!} \prod_{a\in \m\padd\n}  \frac{\k_a}{1-q_2^a}\cdot
      h^*_{\m}	      h_{\n}.
 \end{split}
\end{align}
The modes $\psi^{\pm}_{\pm j}$ are given by 
\begin{align}
\label{eq:psimn}
\begin{split}
\psi^{+}_j =
 \sum_{\m \vdash j}
\frac{1}{m(\m)!} \prod_{r\in \m}\k_r \cdot  h_{\m}
,\\
\psi^{-}_{-j}=
 \sum_{\m \vdash j}
\frac{1}{m(\m)!} \prod_{r\in \m}\k_r \cdot h^*_{\m}.
\end{split} 
\end{align}

\subsubsection{A PBW basis in the Fock representation}
We find it convenient to switch to a different normalization of the Heisenberg operators and introduce a related Heisenberg algebra $\MH_a$
\begin{align}
\label{acom}
\MH_{a}=\{a_r,a_{-r}\}_{r>0},\qquad 
[a_r, a_{-s}]
=
\d_{r,s}
q^r \frac{(1-q_1^r)(1-q_3^r)}{r}.
\end{align}
These operators are related to the operators of $\MH_h$  by
\begin{align}
\label{htoa}
&a_r =-q^r \frac{\k_r}{1-q_2^r}h_r, 
\qquad
&a_{-r} =q^r \frac{\k_r}{1-q_2^r}h_{-r}.
\end{align}
We define the elements  $a_\m, a^*_\m$ via the following formula
\begin{align}
\label{amu}
&a_\m = \frac{1}{m(\m)!} a_{\m_1}\dots a_{\m_{\ell(\m)}}, 
\qquad
&a^*_\m =\frac{1}{m(\m)!}  a_{-\m_1}\dots a_{-\m_{\ell(\m)}}.
\end{align}
The elements $a_{\m}$ and $a^*_{\m}$ are expressed through $h_{\m}, h^*_{\m}$ by 
\begin{align}
\label{agen}
a_{\m} =  \frac{(-1)^{\ell(\m)} q^{|\m|}}{m(\m)!} \prod_{a\in \m}  \frac{\k_a}{1-q_2^a}\cdot h_{\m}
, 
\qquad
a^*_{\m} =\frac{q^{|\m|}}{m(\m)!} \prod_{a\in \m}  \frac{\k_a}{1-q_2^a} \cdot h^*_{\m}.
\end{align}
These operators satisfy the following quadratic relations
\begin{align}
\label{eq:aa}
&a^*_\m a^*_\n ={\m\padd \n \brack \m} a^*_{\m\padd \n},
\qquad
a_\m a_\n ={\m\padd \n \brack \m} a_{\m\padd \n},
\end{align}
and the ordering relation
\begin{comment}
\begin{align}
\label{aquad}
a_\n a^*_{\m} =
&\sum_{\l\subseteq \n\cap \m}
(-1)^{\ell(\l)} \frac{1}{\l_1\dots \l_N \times m(\l)!}  \prod_{r\in \l} \frac{\k_r}{q^r-q^{-r}}
a^*_{\m\psub\l}a_{\n\psub\l}.
\end{align}
\end{comment}
%
%Let us denote the structure constant in the above equation by $\xi_\l$
\begin{align}
\label{aquad2}
a_\n a^*_{\m} 
=
&\sum_{\l\subseteq \n\cap \m}
\xi_\l
a^*_{\m\psub\l}a_{\n\psub\l},
\qquad
\xi_\l:=
\frac{q^{|\l|}}{m(\l)!} \prod_{r\in \l}  \frac{(1-q_1^r)(1-q_3^r)}{r}.
\end{align}
Thanks to this relation we have the basis of the universal enveloping Heisenberg algebra spanned by $a^*_{\m} a_{\n}$. We can write (\ref{eq:epn})-(\ref{eq:psimn}) in terms of these operators. 
Writing coefficients explicitly we have 
\begin{align}
\label{eq:epa}
\begin{split}
e_n &=
\frac{1-q_2}{\k_1} u ~
 \sum_{\m,\n \in \MP}
 \d_{|\n|,|\m|+n} 
 (-1)^{\ell(\n)}
  q^{-|\m|}
a^*_{\m}
a_{\n}
,\\
%\label{eq:fpa}
f_n &=
-\frac{1-q_2}{q_2 \k_1}u^{-1} 
 \sum_{\m,\n \in \MP}
 \d_{|\n|,|\m|+n}
  (-1)^{\ell(\m)}
 q^{|\n|}  
a^*_{\m}
a_\n ,\\
%\label{eq:psipa}
\psi^{+}_j &=
 \sum_{\m \vdash j}
\prod_{r\in \m}(q^r-q^{-r}) \cdot 
a_{\m}
,\\
%\label{eq:psima}
\psi^{-}_{-j} &=
 \sum_{\m \vdash j}
 (-1)^{\ell(\m)}
\prod_{r\in \m}(q^r-q^{-r}) \cdot 
a^*_{\m}.
\end{split}
\end{align}

%%%%%%%%%
%%%%%%%%%
%%%%%%%%%
%%%%%%%%%

\section{Coproduct using vertex operators}
\label{sec:coprod}
The identity (\ref{deltaUR}) must hold for all elements $g\in \gt$. We will make use of the factorization (\ref{KR}) and then write the full set of equations (\ref{deltaUR}) for the unknown element $\MRb$. After that we will specialize to the Fock module and obtain a linear system for the elements of $\MRb$ in the Fock basis in $V(u_1)\otimes V(u_2)$.
The main difficulty in this calculation is in solving the coproduct equation with $\D(e(z))$ and $\D(f(z))$.
We first solve the coproduct equations with $\D(a_r)$ and $\D(a_{-r})$ which shows some useful symmetries of the $R$-matrix. With these symmetries the main problem of solving the relations with $\D(e(z))$ and $\D(f(z))$ simplifies and reduces to a single equation.

%%%%%%%%%
%%%%%%

\subsection{The coproduct relations for $\MRb$}
Let us conjugate the opposite coproduct by $\MK$ (\ref{MK}) and define
\begin{align}
\label{eq:Dt}
&\Dt(g):=\MK^{-1}\D'(g) \MK.
\end{align}
The relation (\ref{deltaUR}) turns in to an equation for $\MRb$
\begin{align}
\label{eq:RDt}
&\MRb\D(g)=\Dt(g) \MRb, \qquad g= \{e_k,f_k,h_r,h^*_r\}.
\end{align}
Let us compute $\Dt$.
\begin{lem}
For $g= \{e_k,f_k,h_r,h^*_r\}$ with $k\in \bbZ$ and $r>0$ we have 
\begin{align}
\label{eq:RDe}
\begin{split}
\Dt(e_k)
&=1\otimes e_k+  \sum_{j=0}^{\infty}  e_{k+j}\otimes q^{-c(k+j)}  \psi^-_{-j},\\
%\label{eq:RDe}
\Dt(f_k)
&=f_k \otimes 1 +  \sum_{j=0}^{\infty} q^{-c(k-j)} 
\psi^+_{j} \otimes f_{k-j},\\
%\label{eq:RDh}
\Dt(h_r)
&=h_{r} \otimes 1+ q^{-c r }  \otimes h_r, \\
%\label{eq:RDhm}
\Dt(h_{-r})
&= h_{-r} \otimes q^{c r} + 1 \otimes h_{-r} .
\end{split}
\end{align}
\end{lem}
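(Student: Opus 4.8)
The plan is to compute $\Dt(g) = \MK^{-1}\D'(g)\MK$ directly for each generator $g \in \{e_k, f_k, h_r, h_{-r}\}$, using the explicit form of $\MK$ in \eqref{MK} and the explicit coproduct formulas in \eqref{eq:De}. Since in the Fock setting (or more precisely, in the relevant completion) we only need the Heisenberg part of $\MK$ for conjugation of Heisenberg-type generators, and the $q^{-c\otimes d - d\otimes c - \cp\otimes d^\perp - d^\perp\otimes \cp}$ part contributes the spectral/grading shifts, I would split $\MK = \MK_1 \MK_2$ with $\MK_1 = \exp(\sum_{r\geq 1} r\k_r h_r\otimes h_{-r})$ and $\MK_2$ the grading exponential, and track the two conjugations separately.

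First I would handle the easy cases $g = h_r$ and $g = h_{-r}$. Here $\D'(h_r) = \t\D(h_r) = 1\otimes h_r + q^{-cr} h_r \otimes 1$ and $\D'(h_{-r}) = q^{cr}\otimes h_{-r} + h_{-r}\otimes 1$. Conjugating by $\MK_1$: one uses the Heisenberg commutation relation $[h_r, h_s] = \d_{r,-s}\frac{q^{cr}-q^{-cr}}{r\k_r}$ together with the standard identity $e^{X} Y e^{-X} = Y + [X,Y] + \tfrac12[X,[X,Y]]+\cdots$. Since $[h_r\otimes h_{-r}, \cdot]$ applied to a single Heisenberg generator on one tensor factor produces a central (scalar) term, the series truncates after one step. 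I expect the scalar shifts produced by $\MK_1$ to cancel against the central grading factors in $\D'$, or combine to give precisely the stated right-hand sides; then conjugation by the grading part $\MK_2$ is diagonal and only rescales by powers of $q$ which match up. This should reproduce $\Dt(h_r) = h_r\otimes 1 + q^{-cr}\otimes h_r$ and $\Dt(h_{-r}) = h_{-r}\otimes q^{cr} + 1\otimes h_{-r}$, i.e.\ $\Dt(h_{\pm r}) = \D(h_{\pm r})$.

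Next, for $g = e_k$ and $g = f_k$ I would start from $\D'(e_k) = \t\D(e_k) = q^{ck}\psi_j^+ \otimes e_{k-j}$ summed over $j\geq 0$, plus $e_k\otimes 1$, and similarly for $f_k$. The conjugation by $\MK_1 = \exp(\sum_r r\k_r h_r\otimes h_{-r})$ acts on $e(z)$ and $f(z)$ through the relations $[h_r, e_n] = -\tfrac1r e_{n+r}q^{\frac12 c(-r-|r|)}$ and $[h_r, f_n] = \tfrac1r f_{n+r} q^{\frac12 c(-r+|r|)}$; because $e(z)$ is a vertex operator (exponential in the $h$'s), the cleanest route is to conjugate the current $e(z)\otimes 1$ (resp.\ $1\otimes e(z)$, etc.) by $\MK_1$ using the vertex-operator commutation lemma, i.e.\ $e^{\sum_r r\k_r h_r\otimes h_{-r}}$ acting on one factor shifts the argument/produces a product with a $\psi^\pm$-type dressing on the other factor. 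One recognizes the resulting dressing factor as exactly the $\psi^-$ (resp.\ $\psi^+$) modes appearing in \eqref{eq:RDe}; the grading part $\MK_2$ then supplies the $q^{-c(k+j)}$ prefactors. The transposition already present in $\D'$ explains why $e_k$ ends up paired with $\psi^-$ on the \emph{second} factor (as opposed to $\psi^+$ on the second factor in $\D(e_k)$).

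The main obstacle will be bookkeeping the normal-ordering / central-charge powers of $q$ consistently: one must be careful that $c$ and $\cp$ (and their dual grading operators $d$, $d^\perp$) are the \emph{universal} central elements here, not yet specialized to the Fock values $c=1$, $\cp=0$, so every $q^{\pm cr}$ and $q^{\pm\cp r}$ factor from $\MK_2$ must be tracked and matched against the $q^{ck}$ factors already in $\D'$. A clean way to organize this is to first verify the statement at the level of the currents $e(z)$, $f(z)$, $\psi^\pm(z)$ by summing \eqref{eq:De} over modes, perform the conjugation on generating series (where the vertex-operator identities are most transparent), and then read off the mode-by-mode identities \eqref{eq:RDe} by extracting coefficients of $z^{-k}$. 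I would present the argument for $\Dt(e_k)$ in full and indicate that $\Dt(f_k)$ follows by the same computation (or by applying a suitable symmetry/anti-automorphism exchanging $e \leftrightarrow f$, $\psi^+\leftrightarrow\psi^-$), with the $h_{\pm r}$ cases being the warm-up that fixes all the sign and grading conventions.
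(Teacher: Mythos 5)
Your proposal follows essentially the same route as the paper's Appendix A: split $\MK$ into its Heisenberg and grading factors, use BCH to show that conjugation by the Heisenberg exponential produces (or undoes) the $\psi^{\pm}$-dressings via the relations $[h_{\pm r},e_n]$, $[h_{\pm r},f_n]$ and (\ref{psih}), and let the grading factor supply the remaining powers of $q$, with the $h_{\pm r}$ cases truncating after one BCH step exactly as you describe. The only slip is in $\D'(h_r)$, where the central factor must sit in the second tensor slot, $\D'(h_r)=1\otimes h_r + h_r\otimes q^{-cr}$, not the first; with that corrected the bookkeeping closes as in the paper.
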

The derivations of these equations are presented in Appendix \ref{appdelta}. Notice that $\Dt(h_{\pm r})=\D(h_{\pm r})$ for $r>0$. Inserting (\ref{eq:De}) and (\ref{eq:RDe}) into (\ref{eq:RDt})
for $g= \{e_k,f_k,h_r,h^*_r\}$, respectively, gives us four equations for the element $\MRb$
\begin{align}
\label{eq:RDte}
\begin{split}
&[\MRb,1\otimes e_k] =  
\sum_{j=0}^{\infty} e_{k+j}\otimes q^{-c(k+j)}  \psi^-_{-j} \MRb
-
\MRb  \sum_{j=0}^{\infty} e_{k-j}\otimes q^{k c } \psi_j^{+} ,\\
%\label{eq:RDtf}
&[\MRb,f_k\otimes 1] =
 \sum_{j=0}^{\infty} q^{-c(k-j)} 
\psi^+_{j} \otimes f_{k-j} \MRb
-
\MRb
\sum_{j=0}^{\infty} q^{ k c} \psi_{-j}^{-} \otimes  f_{k+j},\\
%\label{eq:RDth}
 &[\MRb,h_{r} \otimes 1+ q^{-c r } \otimes h_r] =0,\\
%\label{eq:RDthm}
 &[\MRb, h_{-r} \otimes q^{c r} + 1 \otimes h_{-r}] =0.
\end{split}
\end{align}
Until now the results of this section hold for the algebra $\gt$. In the rest of the paper we will specialize to the Fock representation. In what follows we will use the Heisenberg operators $\{a_r,a_{-r}\}_{r>0}$ instead of $\{h_r,h_{-r}\}_{r>0}$, they are related through  (\ref{htoa}).

%%%%%%%%%%%%%%%%%%%%%
%%%%%%%%%%%%%%%%%%%%%

\subsection{The coproduct relations for $\MRb$ in the Fock representation}
In Section \ref{ssec:Fock} we described the Fock representation $V(u)$ with the spectral parameter $u$. It will become clear below that the $R$-matrix acting on $V(u_1)\otimes V(u_2)$ depends on the ratio of $u_1$ and $u_2$, so we can set $u_1=1$  and $u_2=u$ in what follows. We will denote the Fock $R$-matrix by $R(u)$ and the image of the matrix $\MRb$ by $\Rb(u)$.  The full $R$-matrix reads 
\begin{align}
\label{RFock}
&R(u)=\exp\left( \sum_{r\geq 1} \frac{r (1-q_2^{-r})}{(1-q_1^r)(1-q_3^r)}  a_{r} \otimes  a_{-r} \right)q^{-d\otimes 1-1\otimes d}
\Rb(u),
\end{align}
where the part $\MK$ in (\ref{MK}) is written in terms of the operators $a^*_r,a_r$ defined in (\ref{htoa}) and $\Rb(u)$ is the unknown part which we need to compute. 
As we explained, the basis in $V(u)$ is given by all operators $a^*_{\m}a_\n$, $\m,\n \in \MP$. Hence $\Rb(u)$ in the Fock representation $V(1)\otimes V(u)$ can be expanded in the basis given by the tensor product 
\begin{align}
\label{eq:Rbexp}
&\Rb(u)
     =
\sum_{\m,\r,\n,\s\in \MP}
(\Rb(u))_{\m,\r}^{\n,\s}
\,a^*_{\m}a_{\n}\otimes a^*_{\r}a_{\s},\\
&
(\Rb(u))_{\m,\r}^{\n,\s}=0,
\qquad |\m|+|\r|\neq |\n|+|\s|. 
\label{eq:Rbexpcons}
\end{align}
The restriction $ |\m|+|\r|=|\n|+|\s|$ comes from the requirement that all terms in $\MRb$ are neutral in the homogenous degree. From the identity term in $\MRb$ in (\ref{MR2}) we deduce the normalization
\begin{align}
\label{eq:00}
(\Rb(u))_{0,0}^{0,0}=1.
\end{align}

\subsection{The coproduct relation with elements of the Heisenberg subalgebra}
In this section we focus on the coproduct relation (\ref{eq:RDt}) with the elements $a_{\pm r}$ of the Heisenberg algebra $\MH_a$. As a consequence we will identify a set of independent functions in terms of which $\Rb(u)$ is expressed. First of all we rewrite the last two equations of (\ref{eq:RDte}) in terms of $a_{r}, a_{-r}$, $r>0$, using (\ref{htoa})
\begin{align}
\label{eq:DRa}
[\Rb(u),a_{r} \otimes 1+ q^{-r} 1 \otimes a_r]=0,\\
\label{eq:DRadag}
[\Rb(u),q^{r}  a_{-r} \otimes 1+ 1 \otimes a_{-r}]=0.
\end{align}
Let us introduce the coefficients $R_{\m,\n}(u)$ by the formula:
\begin{align}
\label{Pdef}
R_{\m,\n}(u):=(\Rb(u))_{\m,\pzero}^{\n,\emptyset}.
\end{align}
\begin{prop}
\label{lem:P}
Apart from a simple prefactor, the coefficients $(\Rb(u))_{\m,\r}^{\n,\s}$ of the R-matrix $\Rb(u)$ as defined in \eqref{eq:Rbexp} depend only on two partitions. Explicitly, equations (\ref{eq:DRa}) are solved by
\begin{align}
\label{PR}
(\Rb(u))_{\m,\r}^{\n,\s}
=(-1)^{\ell(\r)+\ell(\s)} q^{|\r|+|\s|} 
R_{\m\padd \r,\n\padd \s}(u).
\end{align}
\end{prop}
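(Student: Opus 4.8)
The plan is to exploit the two commutation relations \eqref{eq:DRa} and \eqref{eq:DRadag} to pin down the dependence of $(\Rb(u))_{\m,\r}^{\n,\s}$ on the ``second-factor'' partitions $\r$ and $\s$. The key observation is that the operators appearing in \eqref{eq:DRa}, namely $b_r := a_r \otimes 1 + q^{-r} \, 1\otimes a_r$ and (from \eqref{eq:DRadag}) $b_{-r} := q^r a_{-r}\otimes 1 + 1 \otimes a_{-r}$, together with a complementary family $c_{\pm r}$ (a different linear combination of $a_{\pm r}\otimes 1$ and $1\otimes a_{\pm r}$ chosen so that $[b_r,c_s]=0$ for all $r,s\ne 0$, as sketched in Step~2 of the introduction), form two commuting Heisenberg families whose union is again a basis for the tensored Heisenberg algebra. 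Since $\Rb(u)$ commutes with all $b_{\pm r}$, it must lie in the centralizer of the $b$-Heisenberg algebra, hence can be written using only the $c_{\pm r}$. Re-expanding $c_{\pm r}$-monomials back in the $a^*_\m a_\n \otimes a^*_\r a_\s$ basis then forces the stated relation between the coefficients.

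Concretely, I would proceed as follows. First I would write $b_{\pm r}$ and define $c_{\pm r}$ explicitly as the linear combinations of $\{a_{\pm r}\otimes 1, 1\otimes a_{\pm r}\}$ orthogonal to $b_{\mp r}$ under the Heisenberg pairing; one checks $[b_r,c_{-s}]=0$ and that $\{b_{\pm r},c_{\pm r}\}$ still generate $\MH_a\otimes\MH_a$, so the change of basis is invertible. Next, since $\Rb(u)$ preserves homogeneous degree and commutes with every $b_{\pm r}$, expanding $\Rb(u)$ in normal-ordered monomials $c^*_\m c_\n \, \cdot\, (\text{monomials in } b_{\pm r})$ and using $[\Rb,b_{\pm r}]=0$ together with the highest-weight action forces the $b$-dependent part to be trivial: $\Rb(u) = \sum_{\m,\n} \Rb_{\m,\n}(u)\, c^*_\m c_\n$ for scalars $\Rb_{\m,\n}(u)$, nonzero only when $|\m|=|\n|$. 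Finally, I would substitute the inverse change of basis $c^*_\m, c_\n$ back in terms of $a^*_{(\cdot)}a_{(\cdot)}\otimes a^*_{(\cdot)}a_{(\cdot)}$, read off the coefficient of $a^*_\m a_\n \otimes a^*_\r a_\s$, and match it against \eqref{eq:Rbexp}. Tracking the normalization constants in \eqref{htoa}–\eqref{agen} and the signs $(-1)^{\ell(\cdot)}$ and powers of $q$ coming from the $a_r \leftrightarrow h_r$ rescaling and from the structure of $b_{\pm r}$ produces exactly the prefactor $(-1)^{\ell(\r)+\ell(\s)}q^{|\r|+|\s|}$, and identifies $\Rb_{\m\padd\r,\,\n\padd\s}(u)$ with $R_{\m\padd\r,\,\n\padd\s}(u)$ via \eqref{Pdef} and the product rule \eqref{eq:aa}. (That the combined partitions $\m\padd\r$ and $\n\padd\s$ are what enters, rather than the four separately, is the content of the phrase ``depend only on two partitions''.)

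The main obstacle I expect is the bookkeeping in the last step: carefully carrying the multiplicity-factorial factors $m(\cdot)!$ and the binomial coefficients $\binom{\m\padd\r}{\m}$ through the basis change, and making sure the sign $(-1)^{\ell(\r)+\ell(\s)}$ and the power $q^{|\r|+|\s|}$ emerge with the right exponents. A cleaner route, which I would actually try first, is to avoid the explicit $c_{\pm r}$ entirely: act with \eqref{eq:DRa} and \eqref{eq:DRadag} directly on the basis expansion \eqref{eq:Rbexp}, using the Fock-type action of $a_{\pm r}$ on $a^*_\m a_\n$ monomials (via \eqref{eq:aa}, \eqref{aquad2}), to derive recursions that lower $\ell(\r)$ or $\ell(\s)$ by one. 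Solving these recursions downward to the case $\r=\s=\pzero$, where the coefficient is $R_{\m,\n}(u)$ by definition \eqref{Pdef}, yields \eqref{PR} directly; the only real work is verifying the base case and that the recursion coefficients multiply up to $(-1)^{\ell(\r)+\ell(\s)}q^{|\r|+|\s|}$, together with the consistency of the two families of recursions (from $a_r$ and from $a_{-r}$), which is guaranteed because $\Rb(u)$ simultaneously commutes with both $b_r$ and $b_{-r}$.
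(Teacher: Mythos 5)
Your ``cleaner route'' is precisely the paper's proof: inserting the expansion \eqref{eq:Rbexp} into \eqref{eq:DRa}, the commutator $[a^*_{\m}a_{\n},\,a_r\otimes 1]=[a^*_{\m},a_r]a_{\n}\otimes 1$ removes one part $r$ from $\m$ with a coefficient independent of the multiplicity $m_r(\m)$ (the factorials are absorbed into the normalization \eqref{amu}), so after shifting summation variables one gets exactly the one-step recursion $(\Rb(u))_{\m,\r}^{\n,\s}=-q^{r}(\Rb(u))_{\m\padd(r),\,\r\psub(r)}^{\n,\s}$, and likewise from \eqref{eq:DRadag} for $\n,\s$; iterating down to $\r=\s=\pzero$ and invoking \eqref{Pdef} gives \eqref{PR} with the stated prefactor. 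Your worry about consistency of the two recursion families is moot, since the $a_r$ relation touches only $(\m,\r)$ and the $a_{-r}$ relation only $(\n,\s)$. Your first route (commuting with $b_{\pm r}$ forces $\Rb$ into the $c_{\pm r}$ subalgebra, then re-expand) is logically the reverse of the paper, which \emph{deduces} the $c$-only form \eqref{eq:Rbarbc} from \eqref{PR} rather than the other way around; it would work, but it carries the extra burden of showing that the centralizer of $\{b_{\pm r}\}$ in the completed enveloping algebra of $\MH_a\otimes\MH_a$ is exhausted by the $c_{\pm r}$ (i.e.\ that no nontrivial $b$-dependent normal-ordered terms survive), which the direct recursion avoids entirely.
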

 The proof is given in Appendix~\ref{ap:proofProp:P}. We insert (\ref{PR}) into (\ref{eq:Rbexp}) and rewrite the summation
\begin{align}
\label{eq:RPexp}
&\Rb(u)
     =
\sum_{\m,\n\in \MP}
R_{\m,\n}(u)
\sum_{\mb\pin \m}
(-1)^{\ell(\mb)} q^{|\mb|} 
a^*_{\m\psub\mb}\otimes a^*_{\mb}
\sum_{\nb\pin \n}
(-1)^{\ell(\nb)} q^{|\nb|} 
a_{\n\psub\nb}\otimes a_{\nb}.
\end{align}
This suggests that we can write $\Rb$ using only one set of Heisenberg operators as apposed to using two sets for the two tensor product factors $\MH_a\otimes 1$ and $1 \otimes \MH_a$. Define new operators $b_{\pm r}$ and $c_{\pm r}$, with $r>0$, by
\begin{align}
b_{r}&:=a_r\otimes 1 + q^{-r} 1\otimes a_r,\qquad
b_{-r}:=a_{-r}\otimes 1 + q^{-r} 1\otimes a_{-r},\\
\label{eq:ca}
c_{r}&:=q^{-r} a_r\otimes 1 -  1\otimes a_r,\qquad
c_{-r}:=q^{-r} a_{-r}\otimes 1 -  1\otimes a_{-r}.
\end{align}
The operators $b_r$ commute with all operators $c_s$ for all $r,s$. Their commutation relations are given by
\begin{align}
\label{bcom}
[b_r, b_{-s}]
&=
\d_{r,s}
 \frac{(q^r+q^{-r})(1-q_1^r)(1-q_3^r)}{r},\qquad
 [c_r, c_{-s}]
=
\d_{r,s}
 \frac{(q^r+q^{-r})(1-q_1^r)(1-q_3^r)}{r}.
\end{align}
The building blocks of the bases are given by
\begin{align}
\label{bmu}
b_\m = \frac{1}{m(\m)!} b_{\m_1}\dots b_{\m_{\ell(\m)}}, 
&\qquad
b^*_\m =\frac{1}{m(\m)!}  b_{-\m_1}\dots b_{-\m_{\ell(\m)}},\\
\label{cmu}
c_\m = \frac{1}{m(\m)!} c_{\m_1}\dots c_{\m_{\ell(\m)}} , 
&\qquad
c^*_\m =\frac{1}{m(\m)!}  c_{-\m_1}\dots c_{-\m_{\ell(\m)}}.
\end{align}
The inverse transform is given by 
\begin{align}
\label{eq:a1tobc}
a_r\otimes 1&=\frac{1}{q^r+q^{-r}}\left(q^r b_r+ c_r \right),\qquad
a_{-r}\otimes 1=\frac{1}{q^r+q^{-r}}\left(q^r b_{-r}+ c_{-r} \right),\\
\label{eq:a2tobc}
1\otimes a_r&=\frac{1}{q^r+q^{-r}}\left( b_r-q^r c_r \right),\qquad
1\otimes a_{-r}=\frac{1}{q^r+q^{-r}}\left( b_{-r}-q^r c_{-r} \right).
\end{align}

\begin{prop}
The R-matrix $\Rb$ can be written in terms of the operators $c_\mu$ in the following way
\begin{align}
\label{eq:Rbarbc}
&\Rb(u)
     =
\sum_{\m,\n\in \MP}
q^{|\m|+|\n|} R_{\m,\n}(u)
c^*_\m c_\n.
\end{align}
\end{prop}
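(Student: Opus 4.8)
\emph{Proof proposal.}

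The plan is to read off \eqref{eq:Rbarbc} from the expansion \eqref{eq:RPexp} already obtained, by recognising the two inner sums over subpartitions there as rescaled versions of $c^*_\m$ and $c_\n$. Concretely, the key is the pair of identities
\begin{align}
\label{cstaraux}
q^{|\m|}c^*_\m = \sum_{\mb\pin\m}(-1)^{\ell(\mb)}q^{|\mb|}\,a^*_{\m\psub\mb}\otimes a^*_\mb,
\qquad
q^{|\n|}c_\n = \sum_{\nb\pin\n}(-1)^{\ell(\nb)}q^{|\nb|}\,a_{\n\psub\nb}\otimes a_\nb,
\end{align}
whose right-hand sides are exactly the bracketed sums appearing in \eqref{eq:RPexp}. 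Granting \eqref{cstaraux}, one substitutes it into \eqref{eq:RPexp}: since each coefficient $R_{\m,\n}(u)$ is a scalar and the entire ``creation'' block in \eqref{eq:RPexp} stands to the left of the entire ``annihilation'' block --- matching the order of the factors in the product $c^*_\m c_\n$ --- one obtains at once $\Rb(u)=\sum_{\m,\n\in\MP}q^{|\m|+|\n|}R_{\m,\n}(u)\,c^*_\m c_\n$, which is \eqref{eq:Rbarbc}.

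To prove \eqref{cstaraux} I would run a generating-function computation based on \eqref{eq:expsum}. Introduce formal commuting parameters $x_1,x_2,\dots$. All the operators $\{a_{-r}\otimes1\}$ and $\{1\otimes a_{-s}\}$ mutually commute (negative modes of $\MH_a$ commute, and the two tensor slots commute), so \eqref{eq:expsum} gives
\[
\exp\Big(\sum_{r\geq1}x_r c_{-r}\Big)=\sum_{\m\in\MP}\Big(\prod_{a\in\m}x_a\Big)c^*_\m .
\]
On the other hand $c_{-r}=q^{-r}a_{-r}\otimes1-1\otimes a_{-r}$ and the two summands commute, so the same exponential factorises and expands, again by \eqref{eq:expsum}, as
\[
\exp\Big(\sum_{r\geq1}x_r q^{-r}a_{-r}\otimes1\Big)\exp\Big(-\sum_{r\geq1}x_r\,1\otimes a_{-r}\Big)
=\sum_{\lambda,\kappa\in\MP}q^{-|\lambda|}(-1)^{\ell(\kappa)}\Big(\prod_{a\in\lambda\padd\kappa}x_a\Big)\,a^*_\lambda\otimes a^*_\kappa .
\]
Matching the coefficient of a fixed monomial $\prod_{a\in\m}x_a$ forces $\lambda\padd\kappa=\m$, i.e. $\kappa=\mb$ for some $\mb\pin\m$ and $\lambda=\m\psub\mb$, contributing the weight $q^{-|\m\psub\mb|}=q^{|\mb|-|\m|}$; this is the first identity in \eqref{cstaraux}. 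The second one follows in exactly the same way, replacing $a_{-r},c_{-r}$ by $a_r,c_r$, since the positive modes of $\MH_a$ also mutually commute.

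I expect the only delicate point to be the bookkeeping of the part-multiplicity factorials $1/m(\m)!$ hidden in the definitions \eqref{amu}, \eqref{cmu} of $a^*_\m$, $a_\m$, $c^*_\m$, $c_\m$; the exponential generating function is precisely the device that absorbs these automatically, so no genuine obstacle remains. (One could instead expand the product $c_{-\m_1}\cdots c_{-\m_{\ell(\m)}}$ term by term, grouping the factors according to which tensor slot each one lands in, but that route is more cumbersome.) The signs and powers of $q$ in \eqref{cstaraux} are easily sanity-checked on $\m=(1)$ and $\m=(1,1)$.
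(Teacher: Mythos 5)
Your proposal is correct and follows essentially the same route as the paper: both rest on the expansion of $c^*_\m$ and $c_\m$ as sums $\sum_{\s\pin\m}(-1)^{\ell(\s)}q^{|\s|-|\m|}\,a^*_{\m\psub\s}\otimes a^*_\s$ (resp.\ with $a$'s), which is then matched term by term against the inner sums of \eqref{eq:RPexp}. The only difference is that the paper states these expansion identities without derivation, while you supply a (correct) generating-function proof of them via \eqref{eq:expsum}.
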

\begin{proof}
Using (\ref{eq:ca}) we can write $c^*_\m$ and $c_\m$ in terms of $a^*_\m$ and $a_\m$
\begin{align*}
c_{\m}=\sum_{\s\pin \m} (-1)^{\ell(\s)}q^{-|\m|+|\s|} a_{\m\psub\s} \otimes a_{\s} ,\qquad
c^*_{\m}=\sum_{\s\pin \m}(-1)^{\ell(\s)} q^{-|\m|+|\s|} a^*_{\m\psub\s} \otimes a^*_\s.
\end{align*}
These formulas match with the sums over $\mb$ and $\nb$ in (\ref{eq:RPexp}), after this (\ref{eq:Rbarbc}) follows. 
\end{proof}

\subsection{Reduction of the coproduct equation}
Let us introduce two vertex operators 
\begin{align}
\label{def:phipm}
\phi^-(z):= 
\exp\left(\sum_{r=1}^{\infty}\frac{1}{q^r+q^{-r}}c_{-r}z^{ r}\right),
\qquad
\phi^+(z):= 
\exp\left(\sum_{r=1}^{\infty}\frac{1}{q^r+q^{-r}}c_{r}z^{-r}\right).
\end{align}
Using the operators  $\phi^{\pm}(z)$ we can rewrite the coproduct equation with $\D(e(z))$ and $\D(f(z))$. This computation is presented in detail in Appendix \ref{sec:VOeq}.   The result is given in the following proposition.
\begin{prop}
\label{propRFE}
The operator $\Rb(u)$ satisfies
\begin{align}
\label{eq:RFE}
u 
[\Rb(u),\phi^-(z)^{-1}\phi^+(z q^{-1})]
&=
  \phi^-(z q^{2})  \phi^+(z q)^{-1}\Rb(u)- \Rb(u)   \phi^-(z q^{-2}) \phi^+(z q^{-3})^{-1}.
\end{align}
\end{prop}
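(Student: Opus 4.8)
The plan is to start from the coproduct equations \eqref{eq:RDte} for $\MRb$ with $g = e_k$ and $g = f_k$, specialise them to the Fock representation, and rewrite everything in terms of the Heisenberg operators $c_{\pm r}$ using the expansion \eqref{eq:Rbarbc}. First I would note that, by \eqref{eq:De} and \eqref{eq:RDe}, the two families of modes $e_k$ and $f_k$ act on the Fock space through the vertex operators \eqref{VOe}; summing the mode equations in \eqref{eq:RDte} against $z^{-k}$ (resp.\ $z^{k}$) recasts them as relations among the currents $e(z)$, $f(z)$ and $\psi^\pm(z)$. The key observation, to be established by a direct computation in Appendix~\ref{sec:VOeq}, is that both the $e$-equation and the $f$-equation, once the $\psi^\pm$-factors are absorbed by $K$ and the change of basis $\{a\otimes1, 1\otimes a\}\to\{b,c\}$ of \eqref{eq:a1tobc}--\eqref{eq:a2tobc} is performed, collapse to the \emph{same} identity expressed purely through $c_{\pm r}$. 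This is the point where the factor $b_{\pm r}$ drops out: since $[\Rb,\Delta(h_{\pm r})]=0$ and $\Delta(h_{\pm r})$ is, up to normalisation, $b_{\pm r}$, the operator $\Rb$ commutes with all $b_{\pm r}$, so in the coproduct relation only the $c$-part survives and produces the vertex operators $\phi^\pm(z)$ of \eqref{def:phipm}.

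Concretely, I would proceed as follows. Using \eqref{eq:De}, the Fock coproduct of $e(z)$ is $\Delta(e(z)) = e(z)\otimes\psi^+(z) + 1\otimes e(z)$ (after the sum over $k$, with $c=1$), and by Lemma above $\Dt(e(z)) = 1\otimes e(z) + e(z)\otimes \psi^-(z^{-1}\text{-shifted})$; similarly for $f(z)$. Because $\psi^\pm(z)$ are vertex operators in $h_{\pm r}$, hence in $a_{\pm r}$, conjugating by $K$ and then by $\Rb$ (which only involves $c_{\pm r}$) I can commute $\Rb$ past the $a\otimes 1$ and $1\otimes a$ pieces at the cost of the scalar contraction factors that turn $a$ into $b$ and $c$ combinations. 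The $b$-pieces commute with $\Rb$ and, after the dust settles, recombine into an overall scalar that can be moved to the other side; the $c$-pieces assemble into exactly $\phi^-(z)^{-1}\phi^+(zq^{-1})$ acting from one side and $\phi^-(zq^{\pm2})\phi^+(zq^{\pm1\text{-shift}})^{-1}$ from the other, with the $q$-shifts dictated by the arguments $q^{\frac12(-1\mp1)c}z$ appearing in \eqref{eq:ewpsiz}--\eqref{eq:fwpsiz} and in the modes of $e(z),f(z),\psi^\pm(z)$. Tracking these shifts carefully yields \eqref{eq:RFE}; the spectral parameter $u$ enters through the overall prefactor $\frac{1-q_2}{\k_1}u$ in $e(z)$ versus $u^{-1}$ in $f(z)$, which is why it multiplies the commutator on the left.

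The main obstacle I expect is the bookkeeping of normal-ordering contractions. When one writes $e(z)\otimes\psi^+(z)$ acting on $\Rb(u)$ expanded as in \eqref{eq:Rbarbc}, one must move the annihilation parts of the vertex operators past the creation parts $c^*_\m$, picking up scalar factors of the form $\exp\bigl(\text{something}\cdot z^{\pm r}\bigr)$; the same must be done on the opposite-coproduct side, and the claim is that the ratio of all these scalar factors is precisely the combination of $\phi^\pm$ with shifted arguments in \eqref{eq:RFE}. Getting the exponents and the $q$-powers to match requires using the precise commutators $[c_r,c_{-s}]$ in \eqref{bcom} together with the relations $[h_r,e_n]$, $[h_r,f_n]$ in \eqref{eq:hh} and the explicit vertex-operator coefficients in \eqref{VOe}; a sign or a $q^{\pm r}$ misplaced anywhere destroys the identity. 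The rest — verifying that the $e$-equation and the $f$-equation genuinely produce the same relation, so that no second independent equation is lost — is a consistency check that follows once the shifts are pinned down, essentially because $f(z)$ is obtained from $e(z)$ by the structure of the algebra (and Miki's automorphism $\th$) in a way compatible with the symmetry of \eqref{eq:RFE} under $z\mapsto$ shifted-$z$ combined with the involution exchanging the two tensor factors.
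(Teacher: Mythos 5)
Your proposal follows essentially the same route as the paper's Appendix~\ref{sec:VOeq}: write the coproduct relations for the currents $e(z)$ and $f(z)$, specialise to $V(1)\otimes V(u)$, pass to the $\{b_{\pm r},c_{\pm r}\}$ basis so that the $b$-dependent vertex-operator factors commute past $\Rb(u)$ (which lives entirely in $\MH_c$) and cancel, and observe that the $e$- and $f$-equations collapse to the single identity \eqref{eq:RFE}. One small correction: the factor $u$ on the left of \eqref{eq:RFE} arises from the ratio $u_2/u_1$ of the prefactors of $e_2(z)$ and $e_1(z)$ \emph{within} the single $e$-equation \eqref{eq:RA} (the two currents act on different tensor factors carrying different spectral parameters), not from comparing the $u$ in $e(z)$ against the $u^{-1}$ in $f(z)$.
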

It follows from the calculations in Appendix \ref{sec:VOeq} that both coproduct equations with $\D(e(z))$ and $\D(f(z))$ lead to (\ref{eq:RFE}). Therefore  this single equation takes into account all coproduct equations. Equation (\ref{eq:RFE}) appeared in the paper \cite{Fukud} in which the authors use this equation to compute the first term in the expansion of $\Rb(u)$ in the spectral parameter. In the following our aim is to rewrite (\ref{eq:RFE}) by restoring the Fock action of the operators of $\gt$. The resulting equation is a vector equation in the tensor product of two Fock spaces.

Consider the Fock spaces $V$ and $V^*$ spanned by the vectors $\ket{c_\m}$ and $\ket{c_\m}^*$ respectively,
\begin{align}
\label{eq:VHiso}
c^*_\m\ket{\pzero}=\ket{c_\m}
,\qquad c_\m \ket{\pzero}^*=\ket{c_\m}^*.
\end{align}
The universal enveloping algebra of $\MH_c$ as a vector space is spanned by vectors $\{c^*_\m c_\n\}_{\m,\n\in \MP}$. This vector space can be identified with $V\otimes V^*$ by the assignment 
\begin{align}
\label{eq:iso0}
\iota':~\MH_c \rightarrow V\otimes V^*,\qquad \iota'(c^*_\m c_\n) =  \ket{c_\m}\otimes \ket{c_\n}^*.
\end{align}
A second ingredient is an automorphism of $\MH_c$:
\begin{align}
\label{eq:autu}
\ve: \quad c_{\pm r} \mapsto \mp c_{\mp r}, \qquad r>0,
\end{align}
which when combined with the above provides an identification of the vector space of $\MH_c$ with the vector space $V\otimes V$
\begin{align}
\label{eq:iso1}
\begin{split}
&\iota=(1\otimes \ve)\iota':~\MH_c \rightarrow V\otimes V
,\\
&\iota(c^*_\m c_\n) =(-1)^{\ell(\n)}   \ket{c_\m}\otimes \ket{c_\n}.
\end{split} 
\end{align}

\begin{lem}
\label{lem:phi}
The assignment $\iota$ extends to an isomorphism $\iota:~\MH_c  \stackrel{\simeq}{\rightarrow}  V\otimes V$. Using the identification \eqref{eq:iso1} we have the following relations:
\begin{align}
\label{eq:iphi1}
\begin{split}
\iota(\phi^+(z)c^*_\m c_\n )
&=
(-1)^{\ell(\n)} \phi^+(z) \ket{c_\m} \otimes  \phi^-(z^{-1})^{-1} \ket{c_\n},\\
%\label{eq:iphi2}
\iota(\phi^-(z)c^*_\m c_\n )
&=(-1)^{\ell(\n)}  \phi^-(z) \ket{c_\m}  \otimes \ket{c_\n},\\
%\label{eq:iphi3}
\iota(c^*_\m c_\n  \phi^+(z))
&=(-1)^{\ell(\n)}  \ket{c_\m}  \otimes \phi^-(z^{-1})^{-1} \ket{c_\n},\\
%\label{eq:iphi4}
\iota(c^*_\m c_\n  \phi^-(z))
&= (-1)^{\ell(\n)}  \phi^-(z) \ket{c_\m} \otimes \phi^+(z^{-1})^{-1} \ket{c_\n}.
\end{split}
\end{align}
\end{lem}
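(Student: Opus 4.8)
The plan is to verify the four identities in \eqref{eq:iphi1} by direct computation, reducing everything to the action of the vertex operators $\phi^{\pm}(z)$ on the generating operators $c_{\pm r}$ and then transporting the result across the isomorphism $\iota$. First I would establish the isomorphism claim itself: since $\MH_c$ is a Heisenberg algebra with the nondegenerate pairing \eqref{bcom}, its universal enveloping algebra has the PBW basis $\{c^*_\m c_\n\}_{\m,\n\in\MP}$, and the assignment $\iota'$ in \eqref{eq:iso0} is then a linear bijection onto $V\otimes V^*$ by construction; composing with the automorphism $\ve$ of \eqref{eq:autu} (which is well-defined because it preserves the commutation relations up to the sign bookkeeping already accounted for in \eqref{eq:iso1}) gives the bijection $\iota:\MH_c\to V\otimes V$. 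The content of the lemma is that this bijection intertwines left/right multiplication by $\phi^{\pm}(z)$ with the stated operators on $V\otimes V$.

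The key computational step is to commute $\phi^{\pm}(z)$ past $c^*_\m$ and $c_\n$ and then act on the vacuum. For the first identity I would write $\phi^+(z)c^*_\m c_\n = \big(\phi^+(z)c^*_\m\phi^+(z)^{-1}\big)\big(\phi^+(z)c_\n\big)$; since $\phi^+(z)$ is built only from the annihilation-type modes $c_r$ (see \eqref{def:phipm}), it commutes with $c^*_\m$ up to a scalar shift coming from $[c_r,c_{-s}]$, and $\phi^+(z)\ket{\pzero}^*$ needs to be evaluated. The cleanest route is to expand $c^*_\m$ and $c_\n$ as polynomials in the $c_{-r}$ and $c_r$, use the elementary commutator $[\,e^{\sum \lambda_r c_r},\,c_{-s}] = \lambda_s\,[c_s,c_{-s}]\,e^{\sum\lambda_r c_r}$ with $\lambda_r = z^r/(q^r+q^{-r})$, and match the resulting shifted-argument vertex operator against $\phi^-(z^{-1})^{-1}$ acting in the second tensor factor — here is exactly where the automorphism $\ve$ enters, converting $c_{-r}\leftrightarrow c_r$ and producing the inversion $z\mapsto z^{-1}$ and the overall reciprocal. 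The sign $(-1)^{\ell(\n)}$ is simply the prefactor already present in $\iota(c^*_\m c_\n)$ in \eqref{eq:iso1}, carried through unchanged since $\phi^{\pm}$ do not alter $\ell(\n)$ as a label. The remaining three identities are handled the same way: $\phi^-(z)$ involves only creation modes $c_{-r}$, so it passes $c^*_\m$ untouched and acts as $\phi^-(z)$ on the left factor while acting trivially on the right; for right multiplication $c^*_\m c_\n\phi^{\pm}(z)$ one commutes $\phi^{\pm}(z)$ leftward past $c_\n$ first, picking up the contraction that becomes a vertex operator in the second factor after applying $\ve$.

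The main obstacle I expect is bookkeeping of the normal-ordering contractions and the precise powers of $q$: one must be careful that $\phi^+(z)$ is an exponential in $c_r z^{-r}$ while $\phi^-(z)$ is an exponential in $c_{-r}z^r$, so the argument substitutions produced by contracting against $c^*_\m$ (creation) versus $c_\n$ (annihilation) land on opposite vertex operators, and the automorphism $\ve$ sends $c_r\mapsto -c_{-r}$, $c_{-r}\mapsto c_r$, which must reproduce precisely the inverse-argument forms $\phi^\mp(z^{-1})^{-1}$ with the correct sign of the exponent. Verifying that the $q$-weights in \eqref{def:phipm} (note the $q^r z^{-r}$ appearing in $\varphi^+$ in the introduction but not here — the conventions differ between $\phi^\pm$ and $\varphi^\pm$) combine consistently is the delicate part; everything else is a routine application of the Baker--Campbell--Hausdorff identity for the Heisenberg algebra together with the definitions \eqref{cmu} and \eqref{eq:VHiso}. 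Once the generator-level commutations are pinned down, linearity and the PBW property extend the identities from basis elements to all of $\MH_c$, completing the proof.
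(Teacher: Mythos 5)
Your proposal is correct and follows essentially the same route as the paper: the same insertion $\phi^+(z)c^*_\m c_\n=\bigl(\phi^+(z)c^*_\m\phi^+(z)^{-1}\bigr)\bigl(\phi^+(z)c_\n\bigr)$, the same use of the stability of $\MH_c^{\mp}$ under these conjugations together with linearity of $\iota'$, and the same application of $\ve$ to convert $\phi^+(z)$ in the second factor into $\phi^-(z^{-1})^{-1}$ with the sign $(-1)^{\ell(\n)}$ carried along from \eqref{eq:iso1}. The only loose phrasing is that the vacuum triviality actually needed is $\phi^+(z)^{-1}\ket{\pzero}=\ket{\pzero}$ in the \emph{first} tensor factor (not an evaluation of $\phi^+(z)\ket{\pzero}^*$), but this does not affect the argument.
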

\begin{proof}
Let $\MH_c^-=\{c^*_\m \}_{\m\in \MP}$ and $\MH_c^+=\{c_\n \}_{\n\in \MP}$ be the negative and positive parts of $\MH_c$.  We have the following invariant actions by vertex operators
\begin{align}
\label{eq:phiH1}
\phi^{+}(z)\MH_c^+ \pin \MH_c^+,
&\qquad
\phi^{-}(z)\MH_c^+ \phi^{-}(z)^{-1} \pin \MH_c^+,\\
\label{eq:phiH2}
\phi^{-}(z)\MH_c^- \pin \MH_c^-,
&\qquad
\phi^{+}(z)^{-1}\MH_c^- \phi^{+}(z) \pin \MH_c^-.
\end{align}
Consider now the left hand side of the first equation in \eqref{eq:iphi1},
 \ba
  \iota(\phi^+(z)c^*_\m c_\n ) = \iota(\phi^+(z)c^*_\m \phi^+(z)^{-1} \phi^+(z) c_\n ) .
  \ea 
 Because $\phi^+(z)c^*_\m \phi^+(z)^{-1} \in \MH_c^-$ and $\phi^+(z) c_\n \in\MH_c^+$, and because of the linearity of $\iota'$, we can apply \eqref{eq:iso0},
\ba
 \iota'(\phi^+(z)c^*_\m \phi^+(z)^{-1} \phi^+(z) c_\n )  = \phi^+(z) c^*_\m \phi^+(z)^{-1}  \ket{\pzero} \otimes \phi^+(z) \ket{c_\n}^*.
 \ea
 Note now that $\phi^+(z)$ acts trivially on $\ket{\pzero}$ and so we have
\ba
\phi^{+}(z)c^*_\m \phi^{+}(z)^{-1} \ket{\pzero} = \phi^{+}(z) c_\m^* \ket{\pzero} = \phi^{+}(z) \ket{c_\m} .
\ea
Putting everything together and applying $\ve$ we thus find
\ba
\iota(\phi^+(z)c^*_\m c_\n ) =(-1)^{\ell(\n)} \phi^+(z) \ket{c_\m} \otimes  \phi^-(z^{-1})^{-1} \ket{c_\n}.
\ea
The other relations in the lemma follow in a similar way.
\end{proof}

By considering $\Rb(u)$ as a vector in the space $V\otimes V$, the operator equation (\ref{eq:RFE}) represents an equation where vertex operators  $\phi^{\pm}(z)$ act on the vector $\Rb(u)$. 
\begin{defn}
Introduce the vector $\DR(u)$ in $V\otimes V$
\begin{align}
\label{eq:Rvec}
\DR(u):=\sum_{\m,\n} (-1)^{\ell(\n)} q^{|\m|+|\n|} R_{\m,\n}(u) \ket{c_\m}\otimes \ket{c_\n}.
\end{align}
\end{defn}
Clearly we have
\ba
\iota(\Rb(u))= \DR(u).
\ea
Using Lemma~\ref{lem:phi} we rewrite (\ref{eq:RFE}) as an equation in $V\otimes V$. After this we recover the action of the vertex operators of $\gt$. 
\begin{defn}
Define two operators
\begin{align}
\label{eq:Phip}
\Phi^+(z)
&:= \phi^+(z q^{-1}) \otimes   \phi^-(z^{-1} q^{3})^{-1} \phi^-(z^{-1} q)^{-1},\\
\label{eq:Phim}
\Phi^-(z)
&:= \phi^-(z)\phi^-(z q^{-2})\otimes   \phi^+(z^{-1} q^{2})^{-1}.
\end{align}
Let $\tau$ be the operation of transposition $\tau (a\otimes b)= b\otimes a$, then we define 
\begin{align}
\label{eq:Psipm}
\Psi^+(z):=\tau \Phi^+(z^{-1}),
\qquad 
\Psi^-(z):=\tau \Phi^-(z^{-1}).
\end{align}
\end{defn}
Recall the vertex operator representation (\ref{VOe}) on the space $V(u)$. Let $\{\ed(z),\fd(z),\psid^{\pm}(z)\}$ be another copy of the vertex operators acting on the space $V(1)$. 
\begin{prop}
\label{prop:copef}
Let $s:=\k_1/(1-q_2)=(1-q_1)(1-q_3)$. The vector $\DR(u)$ satisfies 
\begin{align}
\label{eq:DRVO}
u  \left[
s q^2~ 1\otimes  \fd(z^{-1} q) +  \Phi^+(z)\right]\DR(u) 
&= \left[
\Phi^-(z) - s~ \ed(z q)\otimes \psid^{-}(z^{-1})
 \right]\DR(u),\\
\label{eq:DRVOd}
u\left[ s q^2  \fd(z q) \otimes  1+\Psi^+(z) \right]\DR(u) 
&=
\left[\Psi^-(z) -
s~\psid^-(z )\otimes \ed(z^{-1}q) \right]\DR(u)
.
\end{align}
\end{prop}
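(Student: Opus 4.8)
The plan is to transport the operator identity \eqref{eq:RFE} of Proposition~\ref{propRFE} to a vector equation in $V\otimes V$ by means of the isomorphism $\iota$ of Lemma~\ref{lem:phi}. Since $\iota$ is a linear isomorphism and $\iota(\Rb(u))=\DR(u)$, the identity \eqref{eq:RFE} is equivalent to its image under $\iota$. First I would rearrange \eqref{eq:RFE} so that it equates a left action and a right action on $\Rb(u)$: with $X(z)=\phi^-(z)^{-1}\phi^+(zq^{-1})$, $Y(z)=\phi^-(zq^{2})\phi^+(zq)^{-1}$ and $Z(z)=\phi^-(zq^{-2})\phi^+(zq^{-3})^{-1}$, equation \eqref{eq:RFE} reads $\bigl(uX(z)+Y(z)\bigr)\Rb(u)=\Rb(u)\bigl(uX(z)+Z(z)\bigr)$, so that every term is a one-sided product of vertex operators acting on $\Rb(u)$.

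Next I would compute the image under $\iota$ of each one-sided product. Lemma~\ref{lem:phi} does this for a single factor $\phi^{\pm}(w)^{\pm1}$, and it extends multiplicatively: a composite left action is handled by iterating the first two relations of the lemma and a composite right action by iterating the last two, the action induced on the second tensor factor being governed by the automorphism $\ve$. Substituting $X,Y,Z$ and recombining the resulting exponentials into normal-ordered vertex operators according to \eqref{def:phipm}, the purely $c$-Heisenberg contributions assemble, after matching of arguments, into exactly the operators $\Phi^{\pm}(z)$ of \eqref{eq:Phip}, \eqref{eq:Phim} acting on $\DR(u)$.

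It then remains to recognise the leftover factors as the $\gt$-vertex operators $\ed(zq)$, $\fd(z^{-1}q)$ and $\psid^-(z^{-1})$ on $V\cong V(1)$. For this I would use the identification of the Fock space built on $\MH_c$ with $V(1)$, under which the vertex operators \eqref{VOe} are realised as specific exponentials in the $c_{\pm r}$; comparing exponents then matches the leftover pieces with $\ed$, $\fd$, $\psid^-$ carrying the shifts of the spectral variable displayed in \eqref{eq:DRVO}. The scalar prefactors are tracked separately: the overall $u$ in \eqref{eq:RFE}, together with the prefactors $\tfrac{1-q_2}{\k_1}$ and $-\tfrac{1-q_2}{q_2\k_1}$ in \eqref{VOe}, produce the constants $u$, $s=(1-q_1)(1-q_3)$ and $sq^{2}$ of \eqref{eq:DRVO}. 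Finally, \eqref{eq:DRVOd} follows from \eqref{eq:DRVO} by the left--right symmetry of the construction; equivalently, one applies the transposition $\tau$ together with $z\mapsto z^{-1}$ and uses $\Psi^{\pm}(z)=\tau\Phi^{\pm}(z^{-1})$ from \eqref{eq:Psipm}, after checking that $\fd$, $\ed$ and $\psid^-$ transform correctly under this reflection.

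I expect the main obstacle to be the bookkeeping rather than any conceptual difficulty: one must keep track of the numerous shifts $z\mapsto zq^{k}$ generated when the vertex operators are normal-ordered past one another, of the signs coming from $\ve$, and --- most of all --- of the identification of the leftover composite $c$-vertex operators with the genuine $\gt$-operators $\ed$, $\fd$, $\psid^-$. It is this last matching that pins down the precise arguments $zq$, $z^{-1}q$, $z^{-1}$ and the scalars $s$, $sq^2$ in the statement, and it is the step in which an error in a power of $q$ is easiest to make.
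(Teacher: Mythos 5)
Your overall route is the paper's: transport \eqref{eq:RFE} through the isomorphism $\iota$ of Lemma~\ref{lem:phi}, reassemble the resulting products of $\phi^{\pm}$ into $\Phi^{\pm}(z)$ and into the currents $\ed,\fd,\psid^-$ acting on $V(1)$, and track the prefactors. However, the matching step as you describe it would fail: after applying $\iota$ term by term to $\bigl(uX(z)+Y(z)\bigr)\Rb(u)=\Rb(u)\bigl(uX(z)+Z(z)\bigr)$, the four resulting operators on $V\otimes V$ are \emph{not} individually equal to $sq^2\,1\otimes\fd(z^{-1}q)$, $\Phi^+(z)$, $s\,\ed(zq)\otimes\psid^-(z^{-1})$ and $\Phi^-(z)$. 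For instance, $\iota\bigl(\Rb(u)X(z)\bigr)=\bigl(\phi^-(z)^{-1}\otimes\phi^-(z^{-1}q)^{-1}\phi^+(z^{-1})\bigr)\DR(u)$, whereas $-sq^2\,1\otimes\fd(z^{-1}q)=1\otimes\phi^-(z^{-1}q^3)^{-1}\phi^-(z^{-1}q)^{-1}\phi^+(z^{-1})$; these differ by the factor $\phi^-(z)\otimes\phi^-(z^{-1}q^{3})^{-1}$. The missing idea is that one must first multiply the whole $\iota$-image of \eqref{eq:RFE} on the left by this common invertible factor; only then do all four terms simultaneously become the operators appearing in \eqref{eq:DRVO}. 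This is precisely what is done in Appendix~\ref{ap:proofcop}, and no amount of normal-ordering within a single term can substitute for it.

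The second, and more serious, gap is your derivation of \eqref{eq:DRVOd}. Applying $\tau$ and $z\mapsto z^{-1}$ to \eqref{eq:DRVO} yields \eqref{eq:DRVOd} with $\tau(\DR(u))$ in place of $\DR(u)$; to remove the $\tau$ you need $\DR(u)=\tau(\DR(u))$, which is Corollary~\ref{cor:sym} --- and that corollary is itself \emph{deduced} from having both \eqref{eq:DRVO} and \eqref{eq:DRVOd}. So this step, in either of the two forms you phrase it, is circular. The non-circular argument is to return to the $\iota$-image of \eqref{eq:RFE} and multiply by the other normalizing factor $\phi^-(zq^2)^{-1}\otimes\phi^-(z^{-1}q)$, which produces \eqref{eq:DRVOd} directly by the same matching procedure. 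The remainder of your outline (the multiplicative extension of Lemma~\ref{lem:phi} to composite left and right actions, the rewriting of $\ed,\fd,\psid^{\pm}$ in terms of the $c_{\pm r}$, and the bookkeeping of $u$, $s$ and $sq^2$) is sound and agrees with the paper.
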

The proof is given in Appendix \ref{ap:proofcop}. 
\begin{cor}
\label{cor:sym}
The vector $\DR(u)$ satisfies:
\begin{align}
\label{eq:Rsym}
\DR(u)=\tau (\DR(u)).
\end{align}
\end{cor}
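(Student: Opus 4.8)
The plan is to show that the transposed vector $\tau(\DR(u))$ satisfies the same system of equations as $\DR(u)$, namely \eqref{eq:DRVO} and \eqref{eq:DRVOd} of Proposition~\ref{prop:copef}, and then to invoke uniqueness of the solution. The key point is that this pair of equations is interchanged by the combined operation of transposing the two tensor factors and inverting the variable $z\mapsto z^{-1}$. Concretely, I would apply $\tau$ to \eqref{eq:DRVO}; using $\tau(A\otimes B)\tau=B\otimes A$ for operators on $V\otimes V$, the left-hand side operators $1\otimes\fd(z^{-1}q)$ and $\Phi^+(z)$ turn into $\fd(z^{-1}q)\otimes 1$ and $\tau\Phi^+(z)\tau$, while on the right $\Phi^-(z)$ and $\ed(zq)\otimes\psid^-(z^{-1})$ turn into $\tau\Phi^-(z)\tau$ and $\psid^-(z^{-1})\otimes\ed(zq)$; the scalars $u$, $s$ and the powers of $q$ are inert. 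A short matching of arguments and $q$-shifts against the definitions \eqref{eq:Phip}, \eqref{eq:Phim}, \eqref{eq:Psipm} gives $\tau\Phi^+(z)\tau=\Psi^+(z^{-1})$ and $\tau\Phi^-(z)\tau=\Psi^-(z^{-1})$, so that after the substitution $z\mapsto z^{-1}$ the transformed equation is precisely \eqref{eq:DRVOd} with $\DR(u)$ replaced by $\tau(\DR(u))$. Performing the analogous manipulation on \eqref{eq:DRVOd} returns \eqref{eq:DRVO} for $\tau(\DR(u))$. Hence $\tau(\DR(u))$ solves the same system as $\DR(u)$.

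The second ingredient is uniqueness. The system \eqref{eq:DRVO}--\eqref{eq:DRVOd}, or already the $z^{0}$-component of \eqref{eq:DRVO} that is extracted in Theorem~\ref{eq:thmzero} (equation \eqref{eq:copredintro}), determines $\DR(u)$ uniquely once the coefficient of $\ket{c_\pzero}\otimes\ket{c_\pzero}$ is fixed: decomposing $\DR(u)=\sum_{w\geq 0}\DR^{(w)}(u)$ according to the weight $|\m|=|\n|=w$, the equation expresses, for generic $u$, the weight-$w$ part in terms of strictly lower weight data, so the recursion starting from $\DR^{(0)}(u)$ has a unique solution, and this extends to all $u$ by rationality. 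Since $\tau$ fixes $\ket{c_\pzero}\otimes\ket{c_\pzero}$, the vector $\tau(\DR(u))$ carries the same normalization as $\DR(u)$, which by \eqref{eq:00} and \eqref{eq:Rvec} equals $R_{\pzero,\pzero}(u)=1$. Uniqueness then forces $\tau(\DR(u))=\DR(u)$, which is the assertion of the corollary.

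I expect the $\tau$-covariance bookkeeping of the first step to be routine but slightly delicate, the care being needed to track the shifts by powers of $q$ hidden inside $\Phi^{\pm}$, $\Psi^{\pm}$ and the vertex operators $\phi^{\pm}$, $\ed$, $\fd$, $\psid^{-}$. The genuine obstacle is making the uniqueness statement rigorous at this point of the paper: one must verify that the relevant component of \eqref{eq:DRVO} really inverts so as to express $\DR^{(w)}(u)$ in terms of lower weight data with an invertible leading operator for generic $u$. This is precisely the content of the computation leading to Theorem~\ref{eq:thmzero} and the recursion \eqref{eq:copredintro}, so the cleanest presentation is either to forward-reference that recursion or to record the required triangularity as a short preliminary observation before invoking it here.
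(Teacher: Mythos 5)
Your proposal is correct and follows essentially the same route as the paper: the paper's proof is precisely the observation that applying $\tau$ to \eqref{eq:DRVO} and using $\Psi^{\pm}(z)=\tau\Phi^{\pm}(z^{-1})$ reproduces \eqref{eq:DRVOd} with $\DR(u)$ replaced by $\tau(\DR(u))$. The only difference is that you spell out the uniqueness step (via the triangular recursion coming from the $z^0$ component and the normalization $R_{\pzero,\pzero}(u)=1$), which the paper leaves implicit; that is a reasonable and correct addition, not a deviation.
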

\begin{proof}
This symmetry follows after we apply $\tau$ to (\ref{eq:DRVO}) and compare it to (\ref{eq:DRVOd}) taking into account (\ref{eq:Psipm}). 
\end{proof}
\begin{thm}
\label{eq:thmzero}
The vector $\DR(u)$ satisfies:
\begin{align}
\label{eq:DRVOzero}
u  \left(
s q^2~ 1\otimes  \fd_0 + 1\otimes1\right)\DR(u) 
= \left(
 1\otimes1 - s  \sum_{j\geq 0} q^{j} \ed_{-j}\otimes  \psid^-_{-j}
 \right)\DR(u). 
\end{align}
\end{thm}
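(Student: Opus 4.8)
The plan is to derive \eqref{eq:DRVOzero} by reading off the coefficient of $z^0$ in the first identity \eqref{eq:DRVO} of Proposition~\ref{prop:copef}. Since the vector $\DR(u)$ carries no $z$-dependence, the constant-in-$z$ part of \eqref{eq:DRVO} is obtained simply by replacing each operator-valued Laurent series $O(z)$ that multiplies $\DR(u)$ by its $z^0$-coefficient $[O(z)]_{z^0}$, and then equating the two resulting operators applied to $\DR(u)$.

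First I would dispose of the vertex-operator factors $\Phi^\pm(z)$. From \eqref{def:phipm}, $\phi^+(z)$ is an exponential in the $c_r z^{-r}$ with $r\geq1$, hence a series in non-positive powers of $z$ with constant term $1$, while $\phi^-(z)$ is a series in non-negative powers of $z$ with constant term $1$. Feeding this into \eqref{eq:Phip}--\eqref{eq:Phim} shows that $\Phi^+(z)=\phi^+(zq^{-1})\otimes\phi^-(z^{-1}q^3)^{-1}\phi^-(z^{-1}q)^{-1}$ involves only non-positive powers of $z$, and $\Phi^-(z)=\phi^-(z)\phi^-(zq^{-2})\otimes\phi^+(z^{-1}q^2)^{-1}$ involves only non-negative powers of $z$, the only $z^0$-contribution in each case being the product of the unit terms of the exponentials; hence $[\Phi^+(z)]_{z^0}=[\Phi^-(z)]_{z^0}=1\otimes1$. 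Next I would use the mode expansions \eqref{modes} for the copy $\{\ed(z),\fd(z),\psid^\pm(z)\}$ of the vertex operators on $V(1)$: since $\fd(z)=\sum_{n\in\bbZ}\fd_n z^{-n}$ we have $\fd(z^{-1}q)=\sum_{n\in\bbZ}q^{-n}\fd_n z^{n}$, so $[1\otimes\fd(z^{-1}q)]_{z^0}=1\otimes\fd_0$; and since $\ed(z)=\sum_{n\in\bbZ}\ed_n z^{-n}$ and $\psid^-(z)=\sum_{j\geq0}\psid^-_{-j}z^{j}$, the product $\ed(zq)\otimes\psid^-(z^{-1})=\sum_{n\in\bbZ}\sum_{j\geq0}q^{-n}\,\ed_n\otimes\psid^-_{-j}\,z^{-n-j}$ has $z^0$-coefficient $\sum_{j\geq0}q^{j}\,\ed_{-j}\otimes\psid^-_{-j}$, the constraint $n+j=0$ with $j\geq0$ forcing $n=-j$. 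Substituting these four coefficients into \eqref{eq:DRVO} reproduces \eqref{eq:DRVOzero} verbatim.

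The only point that needs care — and which I regard as the sole (mild) obstacle — is the legitimacy of the term-by-term extraction: one must check that each of the four summands of \eqref{eq:DRVO}, applied to the fixed vector $\DR(u)$, is a well-defined formal Laurent series in $z$ whose $z^0$-coefficient receives only finitely many contributions, so that the $z^0$-part of the sum equals the sum of the $z^0$-parts. For the $\Phi^\pm(z)\DR(u)$ terms this is immediate since $\Phi^\pm(z)$ involves powers of $z$ of a single sign; for $1\otimes\fd(z^{-1}q)$ and $\ed(zq)\otimes\psid^-(z^{-1})$ it follows from the grading of the Fock modules together with the weight restriction $|\m|=|\n|$ on the nonzero coefficients $R_{\m,\n}(u)$ of $\DR(u)$, which at each fixed power of $z$ restricts the action to finitely many modes on each graded component. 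Granting this, the proof is precisely the coefficient bookkeeping described above, with no substantive obstacle; as a consistency check, extracting $z^0$ from the companion equation \eqref{eq:DRVOd} yields the same relation with the two tensor factors interchanged, in accordance with Corollary~\ref{cor:sym}.
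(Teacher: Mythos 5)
Your proposal is correct and follows essentially the same route as the paper: the paper's proof likewise extracts the $z^0$-coefficient of \eqref{eq:DRVO}, noting that $\Phi^+(z)$ and $\Phi^-(z)$ contain only non-positive (resp.\ non-negative) powers of $z$ so their constant terms are $1\otimes 1$, with the mode bookkeeping for $\fd(z^{-1}q)$ and $\ed(zq)\otimes\psid^-(z^{-1})$ left implicit. Your additional remarks on the well-definedness of term-by-term coefficient extraction and the consistency check against \eqref{eq:DRVOd} are sound but not part of the paper's argument.
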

\begin{proof}
This equation  is derived from (\ref{eq:DRVO}) by expanding in $z$ and selecting the coefficients of $z^0$. Indeed, notice that from (\ref{eq:Phip}) and (\ref{eq:Phim}) it follows that the $z$ expansion of $\Phi^+(z)$ has only terms $z^j$ with $j\leq 0$, and the $z$ expansion of $\Phi^-(z)$ has only terms $z^j$ with $j\geq 0$, therefore their constant terms in $z$ are identity operators. 
\end{proof}

\section{The $R$-matrix and Macdonald polynomial theory}
\label{sec:MacdonaldR}
The Heisenberg algebra as a graded vector space is isomorphic to the space of symmetric functions. This isomorphism takes the Heisenberg operators $c^*_\m$ and replaces them with a modified power sum symmetric function and the Heisenberg operators $c_\m$ with a derivative operator. Under this isomorphism $\DR(u)$ becomes a symmetric function $R(x,y;u)$ in two alphabets $x=(x_1,x_2,\dots)$ and $y=(y_1,y_2,\dots)$. The operators $\ed_0$ and $\fd_0$ are matched with Macdonald difference operators. The second term on the right hand side of (\ref{eq:DRVOzero}) gives rise to a ratio of Cauchy kernels. Thus we rephrase (\ref{eq:DRVOzero}) in terms of basic operators of the Macdonald theory. It is natural to expand $R(x,y;u)$ in the basis of Macdonald polynomials. For the coefficients of this expansion we derive a recursive formula. This recursive formula can be implemented on a computer giving the coefficients of $R(x,y;u)$ corresponding to partitions of small sizes. Using the transition coefficients from the Macdonald polynomials to the power sum polynomials in (\ref{Rmac})  we can explicitly compute matrix elements of the full $R$-matrix given in (\ref{Rmatrix}).

\subsection{Equation (\ref{eq:DRVOzero}) in terms of Macdonald operators}
We start by explaining the isomorphism to the space of symmetric functions. For details on the theory of symmetric function we refer to \cite{MacdBook}. The parameters $q,t$ which we used throughout this paper are not the same as those from the Macdonald theory. Instead we use $\qq,\tt$; for example the Macdonald polynomials are denoted $P_\l(x;\qq,\tt)$. Let $\qq,\tt$ be indeterminates and set $\mathbb{F}=\mathbb{Q}(\qq,\tt)$. The relationship between $\qq,\tt$ and our parameters $q$ and $t$ (also $q_1,q_3$) is given by 
\begin{align}
\label{eq:qt}
\qq = q t^{-1}=q_3^{-1},\qquad \tt = q^{-1} t^{-1}=q_1.
\end{align}

The ring of symmetric functions is denoted $\Lambda$, set $\Lambda_{\mathbb{F}}:=\Lambda\otimes_{\mathbb{Z}}\mathbb{F}$. Let $p_r(x)$ be the power sum symmetric function in the alphabet $x=(x_1,x_2\dots)$
\begin{align}
\label{eq:powersum}
p_{r}(x):=\sum_i x_i^r,
\qquad
p_\m(x)=p_{\m_1}(x)\dots p_{\m_{\ell(\m)}}(x).
\end{align}
The functions $p_\m(x)$ define a basis in $\Lambda_{\mathbb{F}}$. Macdonald's scalar product on $\Lambda_{\mathbb{F}}$ is given by 
\begin{align}
\label{eq:Mdscalar}
\braket{p_\l}{p_\m}=\d_{\l,\m} m(\l)!\prod_{a\in\l}a \frac{1-\qq^{a}}{1-\tt^a}. 
\end{align}
Let $\at_{\pm r}$ be Heisenberg operators generating the Heisenberg algebra $\MH_{\at}$ with the commutator 
\ba
[\at_r,\at_{-s}]=\d_{r,s} r \frac{1-\qq^r}{1-\tt^r}, \qquad {r,s>0}. 
\ea
The basis in the Fock space generated by $\at_{\pm r}$ is given by 
\ba
\ket{\at_\m}=\at^*_{\m}\ket{\emptyset},
\qquad
\bra{\at_\m}=\bra{\emptyset} \at_{\m},
\ea
where $a_\m =a_{\m_1}\dots a_{\m_{\ell(\m)}}$ and $a^*_\m =a_{-\m_1}\dots a_{-\m_{\ell(\m)}}$.
As graded vector spaces, the space of symmetric functions and the Fock space $V$ are isomorphic. Under this isomorphism, which we denote by $\pi$, we have
\begin{align}
\label{eq:autpi0}
\pi: \at^*_\l\mapsto p_\l,
\end{align}
and the two scalar products match 
\ba
\braket{p_\l}{p_\m}=\braket{\at_\l}{\at_\m}.
\ea
Let $\MH_{c}$ be the Heisenberg algebra with generators $c_{\pm r}$. A mapping between $\MH_{\at}$ and $\MH_{c}$ is given by a family of isomorphisms
\begin{align}
\label{eq:hatilde}
c_{-r}\rightarrow \g_r \frac{(1-q_1^r)(1-q_3^r)}{r} \at_{-r},
\qquad
c_r\rightarrow \g_r^{-1} \frac{(q^r+q^{-r})(1-q_1^r)}{r(1-q_3^{-r})}\at_r,
\end{align}
where $\g_r$ is a gauge factor which is related to the {\it plethystic substitution} in the theory of symmetric functions. 
The isomorphism between the Fock space generated by $\MH_c$ and symmetric functions is given by (\ref{eq:autpi0}) and (\ref{eq:hatilde}). This isomorphism relates $c_{-r}$ with the power sums $p_r$ and $c_{r}$ with derivatives with respect to $p_r$, namely
\begin{align}
\label{eq:piso}
\pi_\g: \qquad 
c_{-r}\mapsto \g_r \frac{(1-q_1^r)(1-q_3^r)}{r} p_r,
\quad
c_{r}\mapsto \g_r^{-1} \frac{-q_2^{r} (q^{r}+q^{-r}) (1-q_1^r) }{r (1-q_3^r)} \frac{\partial}{\partial p_r}. 
\end{align}
In order to recover the action of Macdonald operators in what follows we make the following choice of $\g_r$ and introduce the corresponding isomorphism 
\begin{align}
\label{eq:gam}
\g:\quad \g_r=\frac{q^{-r}}{1-q_3^{r}},
\qquad 
\pi:=\pi_\g.
\end{align}

A key object of study in Macdonald theory \cite{MacdBook} is the set of symmetric Macdonald polynomials $P_\l(x;\qq,\tt)$, $\l\in\MP$. The Macdonald polynomials are eigenvectors of the Macdonald operators and also form a basis in the ring of symmetric functions $\Lambda_{\mathbb{F}}$.  It is known \cite{Sh} that the isomorphism $\pi$ takes $\ed_0$ and $\fd_0$ in the Fock representation to operators which can be written in terms of the Macdonald operator which is denoted by $E$ in  \cite{MacdBook}. Let us introduce the coefficients $\e_\l(u,v)$:
\begin{align}
\label{eq:Mev}
\begin{split}
&\e_\l(u,v):=(v^{-1}-1)\sum_{i=1}^{\ell(\l)}(u^{\l_i}-1)v^{i},\\
&\e_\l: = \e_\l(\qq^{-1},\tt),\qquad
\eb_\l: = \e_\l(\qq,\tt^{-1}).
\end{split}
\end{align}
If we let $\ed_0$ act on $V(1)$ and $\fd_0$ act on $V(1)$ then, under the isomorphism $\pi$, we define 
\begin{align}
\label{eq:EEbar}
E : = s \pi( \ed_0)  -1 ,
\qquad
\Eb : = -s q^2\pi( \fd_0) -1.
\end{align}
The operators $E$ and $\Eb$ that act in the space of symmetric functions with the alphabet $x=(x_1,x_2,\dots)$ will be given the index $x$, i.e. we will write $E_x$ and $\Eb_x$. 
We identify two Macdonald operators\footnote{In \cite{MacdBook} (Ch. 6, \S 4) Macdonald introduces the operator denoted by $E$. This operator matches with  our $(\qq-1)^{-1}\bar{E}$ operator.}
which satisfy the following eigenvalue equations, 
\begin{align}
\label{eq:EPact}
\begin{split}
E_x P_\l(x;\qq,\tt) &= \e_{\l} P_\l(x;\qq,\tt),\\
%\label{eq:EbPact}
\bar{E}_x P_\l(x;\qq,\tt) &=\eb_{\l} P_\l(x;\qq,\tt).
\end{split}
\end{align}

Since we work with the tensor product of two spaces we need two copies of symmetric function spaces: one with the alphabet $x=(x_1,x_2,\dots)$ and the second with $y=(y_1,y_2,\dots)$. 
The last ingredient is the Cauchy kernel for Macdonald polynomials. First, we introduce the deformed Pochhammer symbol 
\begin{align}
(a;\qq)_{\infty}:=\prod_{r=0}^{\infty}(1-a \qq^r).
\end{align}
The Cauchy kernel is a symmetric function in two alphabets $x$ and $y$ which is defined as follows 
\begin{align}
\label{eq:CK}
\Pi(x,y):=
\exp\left( \sum_{r\geq 1}\frac{(1-\tt^r)}{r(1-\qq^{r})} p_r(x)p_r(y)  \right)
=\frac{(\tt x y;\qq)_\infty}{(x y;\qq)_\infty}.
\end{align}
\begin{defn}
Using the isomorphism  $\pi$ we introduce the symmetric function $R(x,y;u)$:
\begin{align}
\label{eq:Rsymg}
R(x,y;u):=(\pi\otimes \pi)\DR(u).
\end{align}
\end{defn}
\begin{thm}
\label{thm:DPR}
The symmetric function $R(x,y;u)$ satisfies the equation
\begin{align}
\label{eq:DPiR0}
E_x\Pit(x,y) R(x,y;u)
=u~
\Pit(x,y) \bar{E}_y   R(x,y;u),
\end{align}
where the modified Cauchy kernel $\Pit(x,y)$ is defined by
\begin{align}
\label{eq:ModCK}
\Pit(x,y):=\Pi(q x,\qq y)^{-1} \Pi(q x,\tt y).
\end{align}
\end{thm}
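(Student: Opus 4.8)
The goal is to transport the operator identity of Theorem~\ref{eq:thmzero} along the isomorphism $\pi\otimes\pi$ of \eqref{eq:Rsymg}, so the plan is to compute the image under $\pi\otimes\pi$ of each of the three operator blocks appearing in \eqref{eq:DRVOzero}: the term $1\otimes\fd_0$, the identity term, and the term $\sum_{j\geq0}q^j\ed_{-j}\otimes\psid^-_{-j}$. First I would recall from \eqref{eq:EEbar} that acting on $V(1)\otimes V(1)$ we have $\pi(s\,\ed_0)=E+1$ and $\pi(-sq^2\,\fd_0)=\Eb+1$, so the left-hand side of \eqref{eq:DRVOzero}, namely $u\bigl(sq^2\,1\otimes\fd_0+1\otimes1\bigr)\DR(u)$, maps to $u\bigl(-\Eb_y-1+1\bigr)R(x,y;u)=-u\,\Eb_y R(x,y;u)$. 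The sign and the resulting clean form $u\,\Eb_y$ (up to the overall sign which will cancel) is exactly what appears on the right of \eqref{eq:DPiR0}, modulo the Cauchy-kernel conjugation, so the real content is the $\ed_{-j}\otimes\psid^-_{-j}$ term.

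The key step is therefore to identify $\sum_{j\geq0}q^j\,\pi(\ed_{-j})\otimes\pi(\psid^-_{-j})$ as a conjugation operator. I would first observe that $\sum_{j\geq0}q^j\ed_{-j}$ is the coefficient extraction $\sum_j q^j\ed_{-j}=\ed(z)\big|_{z\to q^{-1}}$ in the sense of formal series, i.e.\ one sets $z=q^{-1}$ in the mode expansion $\ed(z)=\sum_n\ed_n z^{-n}$ after discarding the non-negative modes; more precisely I would work with the generating-function version already present in Proposition~\ref{prop:copef}, equation \eqref{eq:DRVO}, where the term is $s\,\ed(zq)\otimes\psid^-(z^{-1})$, and take the constant term in $z$ only at the end. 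From the vertex-operator formulas \eqref{VOe} and \eqref{eq:psiexp}, the ``$\otimes$'' product $\ed(zq)\otimes\psid^-(z^{-1})$ is a tensor of exponentials in $h_{\pm r}\otimes1$ and $1\otimes h_{\pm r}$, which under the isomorphisms $\pi_\g$ \eqref{eq:piso} with the gauge choice \eqref{eq:gam} become exponentials in $p_r(x)$, $\partial/\partial p_r(x)$, $p_r(y)$, $\partial/\partial p_r(y)$. The crucial point is that the negative modes $\psid^-_{-j}$ contribute only $h^*_\m\sim a^*_\m\sim p_\m(y)$ (creation operators), while $\ed(zq)$ contributes both $h^*$ and $h$ parts; after summing over $j$ and taking the constant term, the mixed $x$–$y$ dependence collects into precisely the exponential $\exp\bigl(\sum_r \tfrac{1-\tt^r}{r(1-\qq^r)}\,c_r^{\mathrm{coeff}}\,p_r(x)p_r(y)\bigr)$-type factor, which is a (modified, rescaled) Cauchy kernel. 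Matching the rescalings $q x$, $\qq y$, $\tt y$ forced by the factors $q^r$ and the $\k_r/(1-q_2^r)$ prefactors gives exactly $\Pi(qx,\qq y)^{-1}\Pi(qx,\tt y)=\Pit(x,y)$ of \eqref{eq:ModCK}.

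Once the $\ed_{-j}\otimes\psid^-_{-j}$ block is identified as multiplication/conjugation by $\Pit(x,y)$, the right-hand side of \eqref{eq:DRVOzero} maps to $\bigl(1-(\text{something})\bigr)R(x,y;u)$ in a form that, after clearing, reads $\Pit(x,y)^{-1}E_x\Pit(x,y)\,R(x,y;u)$; here I would use the intertwining property that conjugating the Macdonald operator $E_x$ by the Cauchy kernel produces exactly the combination of vertex-operator modes appearing in \eqref{eq:DRVO}, which is essentially the content of the standard fact \cite{Sh} that $\ed(z)$ on the Fock space is the vertex-operator form of the generating function of Macdonald operators. Rearranging, the equation becomes $E_x\Pit(x,y)R(x,y;u)=u\,\Pit(x,y)\Eb_y R(x,y;u)$, which is \eqref{eq:DPiR0}. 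The main obstacle I anticipate is bookkeeping the gauge factor $\g_r$ and the various powers of $q$, $q_1$, $q_2$, $q_3$: one must check that with the specific choice \eqref{eq:gam} the exponents align so that the $x$-alphabet gets dilated by $q$ and the $y$-alphabet appears with arguments $\qq y$ and $\tt y$ exactly, and that the ``diagonal'' self-contractions (the $p_r(x)\partial/\partial p_r(x)$ and $p_r(y)\partial/\partial p_r(y)$ pieces) reassemble into $E_x$ and $\Eb_y$ rather than spurious extra factors; getting every normalization consistent between \eqref{eq:hh}, \eqref{htoa}, \eqref{eq:hatilde}, \eqref{eq:piso} and the Macdonald scalar product \eqref{eq:Mdscalar} is where the care is needed, but it is a finite computation with no conceptual gap.
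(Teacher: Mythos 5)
Your proposal is correct and follows essentially the same route as the paper: the heart of both arguments is recognizing $s\sum_{j\geq 0}q^j\,\ed_{-j}\otimes\psid^-_{-j}$ as the conjugate of $s\,\ed_0\otimes 1$ by an exponential of negative modes whose image under $\pi\otimes\pi$ is the modified Cauchy kernel $\Pit(x,y)$, after which \eqref{eq:EEbar} and a rearrangement give \eqref{eq:DPiR0}. The paper merely packages this conjugation at the Heisenberg level as an explicit operator $T=\exp\bigl(-\sum_r r\,q^r(1-q_2^r)(1-q_1^r)^{-1}(1-q_3^r)^{-1}c_{-r}\otimes c_{-r}\bigr)$ with $(\pi\otimes\pi)T=\Pit(x,y)$, rather than working directly in $\Lambda$ as you do.
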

\begin{proof}
Let us introduce the following operator
\ba
T:=\exp\left(-\sum_{r\geq 1}r   \frac{q^r(1-q_2^r)}{(1-q_1^r)(1-q_3^r)} c_{-r} \otimes c_{-r} \right).
\ea
Similarly to the derivation in Appendix \ref{eq:subcalc} we can show that
\ba
&T^{-1} (\ed_0\otimes 1) T
= \sum_{j=0}^{\infty} q^j \ed_{-j}\otimes \psid^-_{-j}.
\ea
Using the operator $T$ we factorise the second term on the right hand side of (\ref{eq:DRVOzero}) and then we multiply both sides by $T$
\begin{align}
\label{eq:DRTzero}
u  T\left(
s q^2~ 1\otimes  \fd_0 + 1\otimes1\right)\DR(u) 
= \left(
 1\otimes1 - s~ \ed_0\otimes 1  \right)T \DR(u). 
\end{align}
The next step is to rewrite this equation using symmetric functions and Macdonald operators. Under the isomorphism $\pi$ we have
\ba
(\pi\otimes \pi)T
&=
\exp\left(-\sum_{r\geq 1}\frac{(1-q_1^{r})(1-q_2^{r})}{r(1-q_3^{r})} q^{-r} p_r(x)p_r(y)  \right)=
\exp\left(-\sum_{r\geq 1}\frac{q^{r}(1-\tt^r)(\qq^{r}-\tt^{r})}{r (1-\qq^{r})} p_r(x)p_r(y)  \right).
\ea
This can be written using (\ref{eq:CK}) and (\ref{eq:ModCK})
\ba
(\pi\otimes \pi)T=\Pi(q x,\qq y)^{-1} \Pi(q x,\tt y)=\Pit(x,y).
\ea
Applying $(\pi\otimes \pi)$ to (\ref{eq:DRTzero}) and using (\ref{eq:EEbar}) we arrive at (\ref{eq:DPiR0}).
\end{proof}
%%%%%%%%%%
A detailed analysis of equation (\ref{eq:DPiR0}) within the framework of the theory of symmetric functions will be a subject of a separate work. In the next subsection we expand $R(x,y;u)$ in the basis of Macdonald polynomials and derive a recursive formula for the expansion coefficients. This recursive formula is suitable for computing explicitly the matrix elements labelled by partitions with small weights.

\subsection{Recurrence relation}
We start with (\ref{eq:DRVOzero}) and using the isomorphism with symmetric functions we rewrite all the entries of the equation using symmetric functions operators. As noted before we consider $R(x,y;u)$ in the basis of Macdonald polynomials.
\begin{defn}
Let $\l$ and $\m$ be two partitions of the same weight, define a new set of coefficients $L_{\l,\m}(u)$ by the expansion 
\begin{align}
\label{eq:Lcoef}
R(x,y;u)=\sum_{\l,\m} L_{\l,\m}(u) P_\l(x;\qq,\tt)P_\m(y;\qq,\tt),
\end{align}
and for $|\l|\neq|\m|$ we set $L_{\l,\m}(u):=0$. 
\end{defn}
\begin{thm}
The coefficients $L_{\a,\b}(u)$ are symmetric 
\begin{align}
\label{eq:Lsym}
L_{\a,\b}(u)=L_{\b,\a}(u),
\end{align}
and satisfy a recursive formula
\begin{align}
\label{eq:Lrec}
L_{\a,\b}(u)
&=
\sum_{\l,\m} 
L_{\a/\l,\b/\m}(u)
 L_{\l,\m}(u),
 \end{align}
 with the initial condition $ L_{\pzero,\pzero}(u)=1$. The sum in (\ref{eq:Lrec}) runs over all $\l,\m$ such that both $\a/\l$ and $\b/\m$ are non-empty skew partitions.
\end{thm}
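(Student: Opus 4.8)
The plan is to extract the recursion directly from equation \eqref{eq:DPiR0}, which encodes all the coproduct relations. The key structural fact is that the Macdonald operators $E_x$ and $\bar E_y$ act diagonally on the Macdonald basis (eigenvalues $\e_\l$ and $\eb_\m$), while the modified Cauchy kernel $\Pit(x,y)$ acts multiplicatively; so the only nontrivial ingredient is how $\Pit$ interacts with products of Macdonald polynomials. First I would establish the symmetry \eqref{eq:Lsym}: this is immediate from Corollary~\ref{cor:sym}, since $\DR(u)=\tau(\DR(u))$ and the isomorphism $\pi\otimes\pi$ is symmetric in the two alphabets, so $R(x,y;u)=R(y,x;u)$, and comparing Macdonald expansions gives $L_{\a,\b}(u)=L_{\b,\a}(u)$. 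The initial condition $L_{\pzero,\pzero}(u)=1$ follows from the normalization $(\Rb(u))_{0,0}^{0,0}=1$ in \eqref{eq:00} together with the definitions \eqref{eq:Rvec}, \eqref{eq:Rsymg}, \eqref{eq:Lcoef}, since the lowest-weight term of $R(x,y;u)$ is just the constant $1$.

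For the recursion itself, I would substitute the expansion \eqref{eq:Lcoef} into \eqref{eq:DPiR0}. On the left, $E_x\Pit(x,y)$ applied to $P_\l(x)P_\m(y)$: one moves $E_x$ past $\Pit$ using the known commutation of a Macdonald-type difference operator with the Cauchy kernel — this is precisely the mechanism that in the Macdonald–Pieri framework produces a ``skewing'' on $P_\l$. The crucial identity to isolate and prove (or cite from \cite{MacdBook}, \cite{Sh}) is an intertwining relation of the schematic form $E_x\,\Pit(x,y)=\Pit(x,y)\,(E_y+\text{first-order skew term in }x)$, equivalently that $\Pit(x,y)^{-1}E_x\Pit(x,y)-E_y$ is a difference operator whose action on $P_\l(x)P_\m(y)$ lowers $\l$ by a ``border strip'' while acting by the Cauchy kernel on the pair. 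Writing this term out produces exactly a sum over $\l'\prec\l$ (with $\l/\l'$ a nonempty skew shape) of a coefficient times $P_{\l'}(x)P_{\l/\l'}$-type contributions paired against the $y$-alphabet; symmetrically on the right one gets $\bar E_y$-driven lowering of $\m$. Matching coefficients of $P_\a(x)P_\b(y)$ on both sides and collecting, the diagonal pieces $(\e_\a - u\,\eb_\b)L_{\a,\b}(u)$ get balanced against the off-diagonal skew contributions, and after re-indexing $\l\mapsto\a/\l$, $\m\mapsto\b/\m$ one reads off \eqref{eq:Lrec} with the skew coefficients $L_{\a/\l,\b/\m}(u)$ defined by the Cauchy-kernel weights (the object of Proposition~\ref{prop:Lbranch}).

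The main obstacle will be the bookkeeping in the second step: pinning down the precise form of the operator $\Pit(x,y)^{-1}E_x\Pit(x,y)$, i.e.\ verifying that conjugating the first Macdonald operator by the \emph{modified} Cauchy kernel $\Pi(qx,\qq y)^{-1}\Pi(qx,\tt y)$ — rather than the standard $\Pi$ — still yields a first-order (in the lowering sense) correction, and computing its matrix elements in the Macdonald basis explicitly enough to identify them with $L_{\a/\l,\b/\m}(u)$. This requires care with the shifted arguments $qx,\qq y,\tt y$ and with the plethystic gauge $\g$ from \eqref{eq:gam} that was built into the isomorphism $\pi$; a clean way to organize it is to work at the level of vertex operators before applying $\pi\otimes\pi$, using the operator $T$ from the proof of Theorem~\ref{thm:DPR} and the mode expansion already recorded in Theorem~\ref{eq:thmzero}, so that the ``skewing'' appears as the action of the negative modes $\ed_{-j}$ and $\psid^-_{-j}$ and the recursion emerges from expanding $\DR(u)$ in the eigenbasis of the zero modes. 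The remaining verification — that the resulting sum is over exactly those $\l,\m$ with $\a/\l$ and $\b/\m$ both nonempty skew partitions, with no spurious or missing terms — is then a matter of degree-counting, since each application of a negative mode strictly raises the weight, forcing the complementary shapes to be nonempty.
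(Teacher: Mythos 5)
Your proposal follows essentially the same route as the paper: the symmetry from Corollary~\ref{cor:sym}, and the recursion obtained by separating the $j=0$ term of \eqref{eq:DRVOzero} so that the diagonal piece $(\e_\a-u\,\eb_\b)L_{\a,\b}(u)$ is balanced against the $j>0$ modes $\ed_{-j}\otimes\psid^-_{-j}$ acting as skewing operators in the Macdonald basis, which is exactly how the paper establishes Proposition~\ref{prop:Lbranch}. The only caveat is that your first formulation (a ``border strip'' correction measured against $E_y$ after conjugating $E_x$ by $\Pit$) is not quite the right picture --- the conjugated operator is a sum over $j>0$ of products of lowering operators acting in \emph{both} alphabets, with coefficients $a_{\l/\m}$ given by plethystic specializations of skew Macdonald polynomials supported on general skew shapes --- but your fallback organization via the vertex-operator modes is precisely the paper's argument.
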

\begin{proof}
The symmetry (\ref{eq:Lsym}) follows from Corollary \ref{cor:sym} which says that $\DR(u)=\tau (\DR(u))$ where we recall that $\tau$ transposes the two vector spaces. On the right hand side of (\ref{eq:Lrec}) we have a sum over $\l,\m$ whose weights are strictly smaller than the weights of $\a$ and $\b$ therefore, given $L_{\a/\l,\b/\m}(u)$, equation (\ref{eq:Lrec}) provides us a recursive formula for computing $L_{\a,\b}(u)$ starting with the initial condition $ L_{\pzero,\pzero}(u)=1$.  The existence of this recurrence relation will be proved constructively in the rest of this section. The coefficients $L_{\a/\l,\b/\m}(u)$ are given in Proposition \ref{prop:Lbranch}. 
\end{proof}

We apply $\pi$ to (\ref{eq:DRVOzero}) and expand $R(x,y;u)$ in the Macdonald basis (\ref{eq:Lcoef}). The following lemma gives the action of $\pi(\psid^-_{-j})$ and $\pi(\ed_{-j})$ on Macdonald polynomials. 
\begin{lem}
The action of $\pi(\psid^-_{-j})$ and $\pi(\ed_{-j})$, for $j>0$, in the Macdonald basis is given by 
\begin{align}
\label{eq:psimpiact}
&\pi(\psid^{-}_{-j})P_\m(x;\qq,\tt) = \sum_\l a_{\l/\m}(\qq,\tt) P_\l(x;\qq,\tt),\\
\label{eq:epiact}
&\pi(\ed_{-j})P_\m(x;\qq,\tt) =\sum_\l \frac{a_{\l/\m}(\qq,\tt)}{(1-q^{-2})(\eb_\l-\eb_\m)} P_\l(x;\qq,\tt),
\end{align}
where $a_{\l/\m}(\qq,\tt)$ can be computed from the following formula:
\begin{align}
\label{eq:aPexpand}
\prod_i \frac{\left(q-\tt x_i z\right) \left(q-\qq \tt^{-1} x_i z\right)}{(q-x_i z) \left(q-\qq x_i z\right)}\cdot
P_\m(x;\qq,\tt)
=
\sum_{\l}z^{|\l|-|\m|}a_{\l/\m}(\qq,\tt)
P_\l(x;\qq,\tt).
\end{align}
\end{lem}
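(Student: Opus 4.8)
\emph{Proof plan.} The statement splits into two parts which I would handle separately: \eqref{eq:psimpiact}, which is really an explicit identification of the operator $\pi(\psid^-(z))$, and \eqref{eq:epiact}, which then follows from a single commutation relation of $\gt$. For \eqref{eq:psimpiact} I would start from the vertex operator form of the Cartan current: on the copy of the Fock module carrying $\{\ed(z),\fd(z),\psid^{\pm}(z)\}$, formula \eqref{psiexp} with $\cp=0$ gives $\psid^-(z)=\exp\bigl(\sum_{r\geq 1}\k_r h_{-r}z^r\bigr)$, which involves only the creation operators $h_{-r}$; hence $\pi(\psid^-(z))$ is a \emph{multiplication} operator on $\Lambda_{\mathbb F}$. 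Carrying $h_{-r}$ through the chain of Heisenberg identifications \eqref{htoa}, \eqref{eq:hatilde} and the plethystic gauge \eqref{eq:gam}, the claim would reduce to the generating function identity
\[
\pi\!\left(\sum_{r\geq 1}\k_r h_{-r}z^r\right)
=\sum_{r\geq 1}\frac{z^r}{r\,q^r}\bigl(1+\qq^r-\tt^r-\qq^r\tt^{-r}\bigr)p_r(x)
=\log\prod_i\frac{(q-\tt x_i z)(q-\qq\tt^{-1} x_i z)}{(q-x_i z)(q-\qq x_i z)},
\]
in which the right-hand equality is elementary (expand $\log(q-a x_i z)=\log q-\sum_{r\geq1}r^{-1}(a x_i z/q)^r$) and the left-hand one is the bookkeeping of the $q,\qq,\tt$-factors, using $(1-q_1^r)(1-q_2^r)=1+\qq^r-\tt^r-\qq^r\tt^{-r}$. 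Once this is in place, $\pi(\psid^-(z))$ is multiplication by the product $\prod_i\frac{(q-\tt x_i z)(q-\qq\tt^{-1} x_i z)}{(q-x_i z)(q-\qq x_i z)}$; extracting the coefficient of $z^j$ and re-expanding the resulting symmetric function times $P_\m(x;\qq,\tt)$ in the Macdonald basis is precisely \eqref{eq:aPexpand}, which yields \eqref{eq:psimpiact}.

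For \eqref{eq:epiact} I would use the relation read off from the last line of \eqref{eq:hh} with $k=-j<0$, $n=0$ and $c=1$, namely $[\ed_{-j},\fd_0]=-\k_1^{-1}\psid^-_{-j}$. Applying $\pi$, acting on $P_\m(x;\qq,\tt)$, and using that by \cite{Sh} (see \eqref{eq:EEbar}--\eqref{eq:EPact}) $\pi(\fd_0)$ is diagonal in the Macdonald basis with eigenvalue $-(\eb_\m+1)/(sq^2)$, $s=\k_1/(1-q_2)$, one writes $\pi(\ed_{-j})P_\m=\sum_\l c_{\l/\m}P_\l$ and, invoking \eqref{eq:psimpiact}, the relation collapses coefficient by coefficient of $P_\l$ to
\[
\frac{1}{sq^2}\,(\eb_\l-\eb_\m)\,c_{\l/\m}=-\frac{1}{\k_1}\,a_{\l/\m}(\qq,\tt),
\qquad\text{i.e.}\qquad
c_{\l/\m}=\frac{a_{\l/\m}(\qq,\tt)}{(1-q^{-2})(\eb_\l-\eb_\m)},
\]
since $sq^2/\k_1=q^2/(1-q_2)$ and $-q^2/(1-q^2)=1/(1-q^{-2})$. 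Here $a_{\l/\m}(\qq,\tt)$ is supported on partitions with $|\l|=|\m|+j>|\m|$, so $\l\neq\m$ and the denominator never vanishes (distinct partitions have distinct $\Eb$-eigenvalues for generic $\qq,\tt$), and no $P_\m$-component appears on the left. This gives \eqref{eq:epiact}.

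The main obstacle is \eqref{eq:psimpiact}: pinning down the precise multiplication operator $\pi(\psid^-(z))$ requires carrying $h_{-r}$ carefully through \eqref{htoa}, \eqref{eq:hatilde}, \eqref{eq:gam}, tracking every power of $q$, $\qq$ and $\tt$, and then recognising the resulting symmetric function as the product over the alphabet appearing in \eqref{eq:aPexpand}; this is routine but error-prone. By contrast, once $\pi(\fd_0)$ is known to be the (shifted) Macdonald operator $\Eb$, the derivation of \eqref{eq:epiact} from the commutator $[\ed_{-j},\fd_0]=-\k_1^{-1}\psid^-_{-j}$ is a one-line linear-algebra consequence.
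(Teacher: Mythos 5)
Your proposal is correct and follows essentially the same route as the paper: the paper likewise obtains \eqref{eq:epiact} from \eqref{eq:psimpiact} via the commutator $[e_{-j},f_0]=-\k_1^{-1}\psi^-_{-j}$ (fourth line of \eqref{eq:hh}) together with the diagonality of $\pi(\fd_0)$ coming from \eqref{eq:EEbar}--\eqref{eq:EPact}, and identifies $\pi(\psid^-(z))$ as the multiplication operator $\prod_i\frac{(q-\tt x_iz)(q-\qq\tt^{-1}x_iz)}{(q-x_iz)(q-\qq x_iz)}$ by pushing the exponent through the Heisenberg identifications exactly as you describe. Your added remark that the denominators $\eb_\l-\eb_\m$ never vanish (since $|\l|>|\m|$) is a small point the paper leaves implicit, but otherwise the two arguments coincide.
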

\begin{proof}
Due to the commutation relation in the fourth line of (\ref{eq:hh}) we can express the action of $\pi(\ed_{-j})$ through the action of $\pi(\psid^-_{-j})$. Indeed, having a suitable set of coefficients $a_{\l,\m}(\qq,\tt)$ and $b_{\l,\m}(\qq,\tt)$ we can write
\begin{align}
\label{eq:psiacoef}
&\pi(\psid^{-}_{-j})P_\m(x;\qq,\tt) = \sum_\l a_{\l,\m}(\qq,\tt) P_\l(x;\qq,\tt),\\
\label{eq:ebcoef}
&\pi(\ed_{-j})P_\m(x;\qq,\tt) =\sum_\l b_{\l,\m}(\qq,\tt) P_\l(x;\qq,\tt).
\end{align}
Using the commutation relation in the fourth line of (\ref{eq:hh}) we have 
\begin{align}
 \pi(\psid^{-}_{-j})P_{\m}(x;\qq,\tt)&= \k_1 [\pi(f_{0}),\pi(\ed_{-j})]P_{\m}(x;\qq,\tt)
= \k_1 \left(\pi(f_{0})-\frac{1+\eb_\m}{-s q^2}\right)\pi(\ed_{-j}) P_{\m}(x;\qq,\tt) \nonumber \\
& = \frac{\k_1}{-s q^2}\sum_\l b_{\l,\m}(\qq,\tt) \left(\eb_\l-\eb_\m\right) P_\l(x;\qq,\tt) ,\label{eq:pipsij0}
\end{align}
where we used (\ref{eq:EEbar}), (\ref{eq:EPact}) and (\ref{eq:ebcoef}). Recall that $s=\k_1/(1-q_2)$. The left hand side in the first line in (\ref{eq:pipsij0}) can be expanded in Macdonald polynomials (\ref{eq:psiacoef}), then by matching coefficients of polynomials we obtain
\begin{align}
\label{eq:abrel}
 b_{\l,\m}(\qq,\tt) =\frac{a_{\l,\m}(\qq,\tt)}{(1-q^{-2})(\eb_\l-\eb_\m)}.
\end{align}
We thus only need to compute $a_{\l,\m}(\qq,\tt)$. The current $\psid^{-}(z)$ can be written in terms of the algebra $\MH_c$ using (\ref{psiexp}) and the relation between $\MH_c$ and $\MH_h$
\ba
\psid^{-}(z) &= \exp\left(\sum_{r=1}^{\infty} (1-q_2^{r}) c_{- r} z^{ r}\right).
\ea
Applying to this expression the isomorphism (\ref{eq:piso})
\begin{align}
\label{pipsiexpX}
\pi(\psid^{-}(z))
=\exp\left(\sum_{r=1}^{\infty} \frac{1}{r}(1-\tt^r)(1-\qq^r \tt^{-r})q^{-r} p_r(x) z^{ r}\right) 
&=\prod_i \frac{\left(q-\tt x_i z\right) \left(q-\qq \tt^{-1} x_i z\right)}{(q-x_i z) \left(q-\qq x_i z\right)},
\end{align}
and using (\ref{eq:psiacoef}) it follows that $a_{\l,\m}=a_{\l/\m}$ in (\ref{eq:aPexpand}).
\end{proof}
Let us make a remark on the relation of $a_{\l/\m}$ with the skew Macdonald polynomials. We recall the skew Cauchy identity from \cite{MacdBook}
\begin{align}
\label{eq:skewC}
\Pi(x,y)\sum_\s
P_{\m/\s}(x;\qq,\tt)
Q_{\l/\s}(y;\qq,\tt)
=
\sum_\r
P_{\r/\l}(x;\qq,\tt)
Q_{\r/\m}(y;\qq,\tt).
\end{align}
In this equation set $\l=\pzero$, then the sum over $\s$ on the left hand side collapses
\begin{align}
\label{eq:skewC2}
\exp\left( \sum_{r\geq 1}\frac{(1-\tt^r)}{r(1-\qq^{r})} p_r(x)p_r(y)  \right)
P_{\m}(x;\qq,\tt)
=
\sum_\r P_{\r}(x;\qq,\tt)
Q_{\r/\m}(y;\qq,\tt).
\end{align}
This can be viewed as the action of $\Pi(x,y)$ on $P_{\m}(x;\qq,\tt)$ and the coefficients of this action are given by $Q_{\r/\m}(y;\qq,\tt)$. We can obtain the action of (\ref{pipsiexpX}) on Macdonald polynomials from the action of $\Pi(x,y)$ by applying to it a suitably chosen evaluation homomorphism $\omega$ which acts on symmetric functions in the alphabet $y$. More precisely, let $\d_r$ be a series in powers of $\qq$, $\tt$ and $q$, we define the evaluation homomorphism associated to $\d$ by
\ba
\omega_{\d}(p_r(y)):=\frac{\qq^r-1}{1-\tt^r}\d_r. 
\ea
If we apply this homomorphism  to a symmetric function $f(y)$, then we write the result using the square parenthesis notation
\ba
f\left[\d_1 \right]:=\omega_\d(f(y)).
\ea
We define $\d'_r$
\begin{align}
\d'_r:=(1-\tt^r)(\qq^r-\tt^{r})q^{-r}\tt^{-r}.
\end{align}
Applying $\omega_{\d'}$ to the Cauchy kernel we obtain $\pi(\psid^-_{-j})$ as written in (\ref{pipsiexpX}):
\begin{align}
\label{eq:omegaPi}
\pi(\psid^-_{-j})=\omega_{\d'}\left(\Pi(x,y)\right).
\end{align}
Then applying $\omega_{\d'}$ to (\ref{eq:skewC}) gives us
\begin{align}
\label{eq:psiQ}
\pi(\psid^-_{-j})
P_{\m}(x;\qq,\tt)
=
\sum_\r Q_{\r/\m}[(1-\tt)(\qq-\tt)q^{-1}\tt^{-1}] P_{\r}(x;\qq,\tt).
\end{align}
Therefore we can identify 
\begin{align}
\label{eq:aQ}
a_{\r/\m}(\qq,\tt)=Q_{\r/\m}[(1-\tt)(\qq-\tt)q^{-1}\tt^{-1}].
\end{align}
On the right hand side we have a specialization of a skew Macdonald polynomial in four variables which can be expressed as a sum of a product of branching coefficients of the Macdonald polynomials (see \cite{MacdBook}). This allows one to write down $a_{\r/\m}(\qq,\tt)$ avoiding the use of the implicit definition (\ref{eq:aPexpand}). 

Let us also remark that in the language of plethystic operations the relation (\ref{eq:psiQ}) tells us that $\pi(\psid^-_{-j})$ acts on Macdonald polynomials by shifting the argument by $(1-\tt)(\qq-\tt)q^{-1}\tt^{-1}$. This action of $\psid^-_{-j}$ on symmetric functions appeared in \cite{DiFK}.

%%%%%%%%%%%%%%%%%%%%%%%%%%%%%%%%%%%
\begin{prop}
\label{prop:Lbranch}
For a skew partition $\r/\s$, let $a_{\r/\s}(\qq,\tt)$ be given by (\ref{eq:aQ}) and for a partition $\n$ let $\e_\n,\eb_\n$ be the eigenvalues (\ref{eq:Mev}) of the two Macdonald operators $E$ and $\Eb$. The coefficients $L_{\a/\l,\b/\m}(u)$ are given by the formula 
\begin{align}
\label{eq:Lskew}
L_{\a/\l,\b/\m}(u)
&=
\frac{1}{\e_\a-u \eb_\b} \frac{(1-\tt)(1-\qq^{-1})  q^{2+|\a|-|\l|} }{(1-q^2)(\eb_\a-\eb_\l)}a_{\a/\l}(\qq,\tt)a_{\b/\m}(\qq,\tt).
\end{align}
\end{prop}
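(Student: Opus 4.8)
The plan is to derive the skew coefficient formula \eqref{eq:Lskew} directly from Theorem~\ref{eq:thmzero}, i.e. from equation \eqref{eq:DRVOzero}, after passing to symmetric functions and expanding in the Macdonald basis. First I would apply $\pi\otimes\pi$ to \eqref{eq:DRVOzero} and use \eqref{eq:EEbar} to replace $\pi(\ed_0)$ and $\pi(\fd_0)$ by the Macdonald operators $E$ and $\Eb$; this turns the ``diagonal'' terms $s q^2\,1\otimes\fd_0+1\otimes 1$ and $1\otimes 1$ into $-\Eb_y$ and $E_x$ acting on $R(x,y;u)$. The remaining term $s\sum_{j\ge 0}q^j\,\ed_{-j}\otimes\psid^-_{-j}$ must be handled by the previous lemma: on the second tensor factor $\pi(\psid^-_{-j})$ acts on $P_\m(y)$ via the coefficients $a_{\r/\m}(\qq,\tt)$ of \eqref{eq:psimpiact}/\eqref{eq:aQ}, and on the first factor $\pi(\ed_{-j})$ acts on $P_\l(x)$ via \eqref{eq:epiact}, i.e. with the same $a_{\a/\l}(\qq,\tt)$ divided by $(1-q^{-2})(\eb_\a-\eb_\l)$. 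Summing over $j$ simply means summing over all pairs of skew shapes with $|\a/\l|=|\b/\m|=j$ for each $j\ge 1$, which reproduces exactly the sum over non-empty skew partitions in \eqref{eq:Lrec}.

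Concretely, I would substitute the Macdonald expansion \eqref{eq:Lcoef}, $R(x,y;u)=\sum_{\l,\m}L_{\l,\m}(u)P_\l(x)P_\m(y)$, into the symmetric-function form of \eqref{eq:DRVOzero}. The left-hand side becomes $u\sum_{\l,\m}L_{\l,\m}(u)(-\eb_\m)P_\l(x)P_\m(y)$ after recalling $\Eb_y P_\m(y)=\eb_\m P_\m(y)$ and the sign/normalisation in \eqref{eq:EEbar}; the first term on the right becomes $\sum_{\l,\m}L_{\l,\m}(u)\e_\l P_\l(x)P_\m(y)$. The interaction term, after applying \eqref{eq:epiact} in $x$ and \eqref{eq:psimpiact} in $y$, becomes $-s\sum_{\l,\m}L_{\l,\m}(u)\sum_{\a\supset\l,\b\supset\m,\ |\a/\l|=|\b/\m|\ge 1} q^{|\a/\l|}\,\frac{a_{\a/\l}(\qq,\tt)}{(1-q^{-2})(\eb_\a-\eb_\l)}\,a_{\b/\m}(\qq,\tt)\,P_\a(x)P_\b(y)$. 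Now I would reindex: extract the coefficient of $P_\a(x)P_\b(y)$ on both sides. The diagonal part contributes $(\e_\a-u\eb_\b)L_{\a,\b}(u)$, and the off-diagonal part contributes $-s\sum_{\l\subsetneq\a,\ \m\subsetneq\b}q^{|\a/\l|}\frac{a_{\a/\l}a_{\b/\m}}{(1-q^{-2})(\eb_\a-\eb_\l)}L_{\l,\m}(u)$, where the constraint $|\a|-|\l|=|\b|-|\m|$ is automatic because $L_{\l,\m}(u)=0$ unless $|\l|=|\m|$ and $L_{\a,\b}=0$ unless $|\a|=|\b|$. Solving for $L_{\a,\b}(u)$ and dividing by $\e_\a-u\eb_\b$ gives \eqref{eq:Lrec} with $L_{\a/\l,\b/\m}(u)$ equal to $\frac{1}{\e_\a-u\eb_\b}\cdot\frac{s\, q^{|\a/\l|}}{(1-q^{-2})(\eb_\a-\eb_\l)}a_{\a/\l}(\qq,\tt)a_{\b/\m}(\qq,\tt)$ up to sign. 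Finally I would use $s=(1-q_1)(1-q_3)=(1-\tt)(1-\qq^{-1})$ (from \eqref{eq:qt}), absorb $q^{|\a/\l|}$ with $q^2$ and clear the $q^{-1}$ factors to bring the prefactor into the stated form $\frac{(1-\tt)(1-\qq^{-1})q^{2+|\a|-|\l|}}{(1-q^2)(\eb_\a-\eb_\l)}$, matching \eqref{eq:Lskew}.

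The main obstacle I anticipate is bookkeeping the constants and signs correctly: reconciling the $-s$, the $q^2$ prefactor in \eqref{eq:EEbar}, the $(1-q^{-2})$ versus $(1-q^2)$ in \eqref{eq:epiact}, the $q^j$ weight in the sum, and the parameter dictionary \eqref{eq:qt} so that everything collapses to the compact form in \eqref{eq:Lskew}. One should in particular check that the factor $s=(1-q_1)(1-q_3)$ indeed equals $(1-\tt)(1-\qq^{-1})$ under \eqref{eq:qt} (so that it is expressed purely in $\qq,\tt$), and that the power of $q$ works out to $2+|\a|-|\l|$ rather than, say, $|\a|-|\l|$ or $3+\dots$; this requires tracking the lone $q^2$ from $\fd_0$ together with the $q^{|\a/\l|}$ from the mode sum and the $q$'s hidden in $a_{\a/\l}(\qq,\tt)$ via \eqref{eq:aQ}. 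A secondary point needing care is the well-definedness of the division by $\eb_\a-\eb_\l$: since $\a\supsetneq\l$ are distinct partitions and $\Eb$ has simple spectrum on each graded piece (the $\eb_\nu$ are pairwise distinct for partitions of distinct content in the relevant range), this denominator is non-zero, exactly as the analogous step in the proof of \eqref{eq:epiact} already required. Everything else is a routine extraction of coefficients from a linear equation, so once the constants are pinned down the proposition follows.
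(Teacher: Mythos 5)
Your proposal follows essentially the same route as the paper's proof: separate the $j=0$ term of \eqref{eq:DRVOzero}, apply $\pi$ with \eqref{eq:EEbar} to get $\left(u\Eb_y-E_x\right)R=s\sum_{j>0}q^j\pi(\ed_{-j})\otimes\pi(\psid^-_{-j})R$, expand in Macdonald polynomials via \eqref{eq:EPact}, \eqref{eq:psimpiact}, \eqref{eq:epiact}, and match coefficients of $P_\a(x)P_\b(y)$. The only loose end is the overall sign you flag (your narrative gives $-E_x$ rather than $E_x$ on the right, which is exactly what turns your $-s/(1-q^{-2})$ wait, rather $s/(1-q^{-2})$ into the paper's $sq^2/(1-q^2)$ with the correct sign), and this resolves exactly as you anticipate once the constants are tracked.
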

\begin{proof}
We separate the $j=0$ term in the summation in (\ref{eq:DRVOzero}), apply  $\pi$ and use (\ref{eq:EEbar}). After this (\ref{eq:DRVOzero}) becomes 
\begin{align}
\label{eq:Refpsi}
\left(u \bar{E}_y -E_x\right)R(x,y;u) 
= s \sum_{j> 0} q^{j} \pi(\ed_{-j})\otimes  \pi(\psid^-_{-j})\cdot R(x,y;u) . 
\end{align}
We then substitute (\ref{eq:Lcoef}) and use (\ref{eq:EPact})
\begin{multline}
\sum_{\l,\m} L_{\l,\m}(u) \left(u \eb_\m -\e_\l\right) P_\l(x;\qq,\tt)P_\m(y;\qq,\tt) 
=\\ 
s  \sum_{\l,\m} L_{\l,\m}(u) \sum_{j> 0} q^{j} \left(\pi(e_{-j})P_\l(x;\qq,\tt) \right)\left( \pi(\psi^-_{-j}) P_\m(y;\qq,\tt)\right). 
\end{multline}
On the right hand side we use (\ref{eq:psimpiact}) and (\ref{eq:epiact}) 
\begin{multline}
\sum_{\l,\m} L_{\l,\m}(u) \left(u \eb_\m -\e_\l\right) P_\l(x;\qq,\tt)P_\m(y;\qq,\tt) 
=\\ 
s
\sum_{\l,\m} L_{\l,\m}(u) 
\sum_{\a,\b} 
\d_{|\a|,|\b|}
 q^{|\a|-|\l|} 
\frac{a_{\b/\m}(\qq,\tt) a_{\a/\l}(\qq,\tt)}{(1-q^{-2})(\eb_\a-\eb_\l)}
P_\a(x;\qq,\tt) P_\b(y;\qq,\tt). 
\end{multline}
After relabelling the indices on the left hand side $\l,\m\rightarrow \a,\b$, and matching coefficients of the same Macdonald polynomials on the two sides of the equation we get 
\begin{align}
L_{\a,\b}(u)
=
s
\sum_{\l,\m} 
  q^{|\a|-|\l|} 
\frac{ a_{\a/\l}(\qq,\tt)a_{\b/\m}(\qq,\tt)}{(1-q^{-2})(u \eb_\b -\e_\a)(\eb_\a-\eb_\l)}
 L_{\l,\m}(u).
\end{align}
After recalling that $s=\k_1/(1-q_2)$, (\ref{eq:Lskew}) follows. 
\end{proof}

 %% %% %% %% %% %% %% %% %%
  %% %% %% %% %% %% %% %% %% %% %% %% %% %% %% %% %% %%
   %% %% %% %% %% %% %% %% %% %% %% %% %% %% %% %% %% %%

\appendix

\section{Calculation of $\Dt(g)$}
\label{appdelta}
The derivation of $\Dt(g)$ is based on the commutation relations of $\gt$. We summarize the outcome of the computations below:
\begin{align}
&\Dt(e_k)=1\otimes e_k+  \sum_{j=0}^{\infty} 
e_{k+j}\otimes q^{-c(k+j)}  \psi^-_{-j},\\
&\Dt(f_k)=f_k \otimes 1 +  \sum_{j=0}^{\infty} q^{-c(k-j)}
\psi^+_{j} \otimes f_{k-j},\\
\label{eq:KhrA}
&
\Dt(h_r)=
h_{r} \otimes 1+
q^{-c r } \otimes h_r, \\
\label{eq:KhmrA}
&
\Dt(h_{-r})=
1 \otimes h_{-r} + 
h_{-r} \otimes q^{c r}.
\end{align}
In the following subsections we treat each equation separately.

%%%%%%%
%%%%%%%
%%%%%%%

\subsection{Calculation of $\Dt(e_k)$}
From (\ref{eq:De}) we can write $\D'(e_k)$ and using the definition (\ref{eq:Dt}) we have 
\begin{align}
\Dt(e_k)=\MK^{-1}\left(
 \sum_{j=0}^{\infty}  q^{c k } \psi_j^{+}\otimes e_{k-j} \right)  \MK+ 
 \MK^{-1} \left(e_k\otimes 1\right) \MK,
\end{align}
where we remind the reader that $\MK$ is defined in \eqref{MK}. Let us compute the two terms separately. 

%%%%%%%
%%%%%%%
%%%%%%%

\subsubsection{Calculation of the first term} The summation in the first term can be represented as follows
\ba
 \sum_{j\geq 0}
q^{c k}
\psi_j^{+}\otimes e_{k-j}
=
&
 q^{ c k -\cp} \sum_{j=0}^{\infty}
\sum_{m=0}^{j}
 \frac{1}{m!}
\sum_{\substack{r_1,\dots,r_m >0\\r_1+\cdots + r_m=j}}
\left(
\prod_{i=1}^m r_i \k_{r_i} 
h_{r_i} \right)
\otimes  [h_{-r_m},[\dots,[h_{-r_1},e_k]\dots]],
\ea
which can be verified using (\ref{psih}) and for the nested commutator one needs to use  the second line of (\ref{eq:hh}). The right hand side can be rewritten using the Baker--Campbell--Hausdorff (BCH) formula, therefore we have
\ba
 \sum_{j=0}^{\infty} \left(
q^{c k}
\psi_j^{+}\otimes e_{k-j} \right)
&=
\exp\left(\sum_{r>0} r \k_r h_r\otimes h_{-r}\right)
\left(q^{ c k -\cp} \otimes e_k\right)
\exp\left(-\sum_{r>0} r \k_r h_r\otimes h_{-r}\right).
\ea
With this representation and using the expression (\ref{MK}) we compute
\ba
\MK^{-1}  \left(\sum_{j=0}^{\infty} q^{c k } \psi_j^{+}\otimes e_{k-j}\right)  \MK 
=
q^{c\otimes d+d\otimes c+\cp\otimes d^{\perp}+d^{\perp}\otimes \cp }
\left(q^{ c k -\cp} \otimes e_k\right)
q^{-c\otimes d-d\otimes c -\cp\otimes d^{\perp}-d^{\perp}\otimes \cp }.
\ea
Using (\ref{degdef})-(\ref{degsefh}) we compute the conjugation with grading operators and find
\begin{align}
\label{Dtek1}
\MK^{-1}  \left(\sum_{j=0}^{\infty}q^{c k } \psi_j^{+}\otimes e_{k-j} \right)  \MK \nonumber=1\otimes e_k.
\end{align}

%%%%%%
\subsubsection{Calculation of the second term}
\label{eq:subcalc}
Let us conjugate $e_k\otimes 1$ with the first factor of $\MK$
\ba
&\exp\left(- \sum_{r\geq 1}r \k_r h_{r} \otimes h_{-r} \right)
\left(e_k\otimes 1\right)
\exp\left( \sum_{r\geq 1}r \k_r h_{r} \otimes h_{-r} \right)\\
=
&
 \sum_{j=0}^{\infty}
\sum_{m=0}^{j}
 \frac{(-1)^m}{m!}
\sum_{\substack{r_1,\dots,r_m >0\\r_1+\cdots + r_m=j}}
 [h_{r_m},[\dots,[h_{r_1},e_k]\dots]]
\otimes 
\left(
\prod_{i=1}^m r_i \k_{r_i} 
h_{-r_i} \right)=
 \sum_{j=0}^{\infty} q^{-c j }e_{k+j}\otimes q^{-\cp} \psi^-_{-j},
\ea
where the last equality follows from (\ref{psih}) and  the second line of (\ref{eq:hh}). With this result and $\MK$ given in  (\ref{MK}) we calculate 
\ba
\MK^{-1}  \left( e_k\otimes 1\right)\MK 
= 
 \sum_{j=0}^{\infty} 
q^{c\otimes d+d\otimes c+\cp\otimes d^{\perp}+d^{\perp}\otimes \cp }
\left( q^{-c j }e_{k+j}\otimes q^{-\cp}  \psi^-_{-j}\right)q^{-c\otimes d-d\otimes c -\cp\otimes d^{\perp}-d^{\perp}\otimes \cp} .
\ea
After computing the conjugation with grading operators we find
\begin{align}
\MK^{-1}  \left(e_k\otimes 1\right)  \MK 
 =
  \sum_{j=0}^{\infty}  
e_{k+j}\otimes q^{-c (k+j) }\psi^-_{-j}.
\end{align}

%%%%%%
%%%%%%%%%%%%
%%%%%%

\subsection{Calculation of $\Dt(f_k)$}
From (\ref{eq:De}) we can write $\D'(f_k)$ and using the definition (\ref{eq:Dt}) we have 
\begin{align}
\Dt(f_k)=\MK^{-1} \left(\sum_{j=0}^{\infty} f_{k+j} \otimes q^{c k }  \psi_{-j}^{-} \right) \MK
 + \MK^{-1}  \left(1\otimes  f_k\right) \MK.
\end{align}
Once again we compute the two terms separately. 
\subsubsection{Calculation of the first term} The summation in the first term can be represented as follows
\ba
\sum_{j=0}^{\infty} f_{k+j} \otimes q^{c k }   \psi_{-j}^{-}
=
&
 \sum_{j=0}^{\infty}
\sum_{m=0}^{j}
 \frac{1}{m!}
\sum_{\substack{r_1,\dots,r_m >0\\r_1+\cdots + r_m=j}}
 [h_{r_m},[\dots,[h_{r_1},f_k]\dots]]
\otimes q^{ c k +\cp} \left(
\prod_{i=1}^m r_i \k_{r_i} 
h_{-r_i} \right).
\ea
which can be verified using (\ref{psih}) and for the nested commutator one needs to use  the third line of (\ref{eq:hh}). The right hand side can be rewritten using the BCH formula:
\ba
\sum_{j=0}^{\infty}  f_{k+j} \otimes q^{ c k} \psi_{-j}^{-}
&=
\exp\left(\sum_{r>0} r \k_r h_r\otimes h_{-r}\right)
\left(f_k\otimes q^{ c k +\cp} \right)
\exp\left(-\sum_{r>0} r \k_r h_r\otimes h_{-r}\right).
\ea
With this representation and using the expression (\ref{MK}) we compute
\ba
\MK^{-1}  \left(
\sum_{j=0}^{\infty} f_{k+j} \otimes q^{ c k } \psi_{-j}^{-} \right)\MK 
=
q^{c\otimes d+d\otimes c+\cp\otimes d^{\perp}+d^{\perp}\otimes \cp }
\left(f_k\otimes q^{ c k +\cp} \right)
q^{-c\otimes d-d\otimes c -\cp\otimes d^{\perp}-d^{\perp}\otimes \cp }.
\ea
Using (\ref{degdef})-(\ref{degsefh}) we compute the conjugation with grading operators and find
\begin{align}
\label{Dtfk1}
\MK^{-1}  \left(
\sum_{j=0}^{\infty} f_{k+j} \otimes q^{ c k } \psi_{-j}^{-} \right)\MK 
=f_k \otimes 1.
\end{align}

%%%%%%
\subsubsection{Calculation of the second term}
Let us conjugate $1\otimes f_k$ with the first factor of $\MK$
\ba
&\exp\left(- \sum_{r\geq 1}r \k_r h_{r} \otimes h_{-r} \right)
\left(1\otimes f_k\right)
\exp\left( \sum_{r\geq 1}r \k_r h_{r} \otimes h_{-r} \right)\\
=
&
 \sum_{j=0}^{\infty}
\sum_{m=0}^{j}
 \frac{(-1)^m}{m!}
\sum_{\substack{r_1,\dots,r_m >0\\r_1+\cdots + r_m=j}}
\left(
\prod_{i=1}^m r_i \k_{r_i} 
h_{r_i} \right)
\otimes 
 [h_{-r_m},[\dots,[h_{-r_1},f_k]\dots]]
=
 \sum_{j=0}^{\infty} q^{\cp} \psi^+_{j} \otimes q^{c j} f_{k-j},
\ea
where the last equality follows from (\ref{psih}) and  the third line of (\ref{eq:hh}). With this result and $\MK$ given in  (\ref{MK}) we calculate 
\ba
\MK^{-1}  \left(1\otimes f_k\right)  \MK 
= 
 \sum_{j=0}^{\infty} 
q^{c\otimes d+d\otimes c+\cp\otimes d^{\perp}+d^{\perp}\otimes \cp }
\left(  q^{\cp}\psi^+_{j} \otimes q^{c j} f_{k-j}\right)q^{-c\otimes d-d\otimes c -\cp\otimes d^{\perp}-d^{\perp}\otimes \cp} .
\ea
After computing the conjugation with grading operators we find
\begin{align}
\MK^{-1}  \left(1\otimes f_k\right)  \MK 
 =
  \sum_{j=0}^{\infty} q^{-c(k-j)} 
\psi^+_{j} \otimes f_{k-j}.
\end{align}

%%%%%%%%%
%%%%%%%%%
%%%%%%%%%

\subsection{Calculation of $\Dt(h_r)$}
From \eqref{eq:De} we need to calculate: 
\begin{align*}
\Dt(h_r) &=\MK^{-1} \left(1\otimes h_r\right) \MK
 +  \MK^{-1}   \left(h_r \otimes q^{-c r }\right)  \MK, \\
\Dt(h_{-r}) &= \MK^{-1}  \left( q^{c r }\otimes h_{-r}\right)  \MK
 + \MK^{-1}  \left(h_{-r}\otimes 1 \right) \MK.
\end{align*} 
As before we conjugate  $h_r\otimes 1$ and $1\otimes h_r$ with factors of $\MK$. We have the following identities
\begin{align*}   
%
%\label{R0h1}
q^{c\otimes d+d\otimes c+\cp\otimes d^{\perp}+d^{\perp}\otimes \cp }
\left(h_{r} \otimes 1 \right)
q^{-c\otimes d-d\otimes c -\cp\otimes d^{\perp}-d^{\perp}\otimes \cp} 
&=
h_{r} \otimes q^{-c r },\\
%%		1 \otimes h
%
%\label{R01h}
q^{c\otimes d+d\otimes c+\cp\otimes d^{\perp}+d^{\perp}\otimes \cp }
\left(1 \otimes h_{r} \right)
q^{-c\otimes d-d\otimes c -\cp\otimes d^{\perp}-d^{\perp}\otimes \cp} 
&=
q^{-c r } \otimes h_r ,
\end{align*}
which are valid for $r>0$ and $r<0$. Next, using the BCH formula we compute
\begin{align*}
%\label{R1h1}
\exp\left( -\sum_{s\geq 1}s \k_s h_{s} \otimes h_{-s} \right)
\left(
h_r \otimes 1
\right)
\exp\left( \sum_{s\geq 1}s \k_s h_{s} \otimes h_{-s} \right)
&=h_r \otimes 1,\\
%\label{R11hm}
\exp\left( -\sum_{s\geq 1}s \k_s h_{s} \otimes h_{-s} \right)
\left(
1 \otimes h_{-r}
\right)
\exp\left( \sum_{s\geq 1}s \k_s h_{s} \otimes h_{-s} \right)
&=1 \otimes h_{-r},\\
%
%%%___________ 1  \otimes  h_-r
%
%\label{R1hm1}
\exp\left( -\sum_{s\geq 1}s \k_s h_{s} \otimes h_{-s} \right)
\left(
h_{-r} \otimes 1
\right)
\exp\left( \sum_{s\geq 1}s \k_s h_{s} \otimes h_{-s} \right)
&=
h_{-r} \otimes 1 
+
(q^{-c r }-q^{c r }) \otimes h_{-r},\\
%
%%%___________ 1  \otimes  h_r
%
%\label{R11h}
\exp\left( -\sum_{s\geq 1}s \k_s h_{s} \otimes h_{-s} \right)
\left(
1 \otimes h_r
\right)
\exp\left( \sum_{s\geq 1}s \k_s h_{s} \otimes h_{-s} \right)
&=
1 \otimes h_r
+
h_r \otimes (q^{c r }-q^{-c r }) .
\end{align*}   
Putting these results together we get
\begin{align}
\label{eq:Kh1}
\begin{split}
\MK^{-1} \left(1\otimes h_r\right) \MK &= \left(q^{-c r }  \otimes h_r +
h_{r} \otimes (q^{c r}-q^{-c r}) q^{-c r } \right), \\
%\label{eq:Kh2}
 \MK^{-1}  \left( h_r \otimes q^{-c r } \right)  \MK &=  h_{r} \otimes q^{-2 c r }, \\
% \label{eq:Khm1}
 \MK^{-1}  \left(  1\otimes h_{-r}\right)  \MK &=
q^{2 c r  } \otimes h_{-r}  ,\\
%\label{eq:Khm2}
\MK^{-1}  \left(h_{-r}\otimes 1\right)  \MK &= 
h_{-r} \otimes q^{c r} 
+
q^{c r}(q^{-c r }-q^{c r })  \otimes h_{-r}.
\end{split} 
\end{align}
Adding the first  two equations in (\ref{eq:Kh1}) leads to (\ref{eq:KhrA}) and adding the last two equations in (\ref{eq:Kh1}) leads to (\ref{eq:KhmrA}).

%%%%%%%%%%%%%%%%
%%%%%%%%%%%%%%%%
%%%%%%%

\section{Proof of Proposition~\ref{lem:P}}
\label{ap:proofProp:P}
In this appendix we prove Proposition~\ref{lem:P}, namely that the coefficients $(\Rb(u))_{\m,\r}^{\n,\s}$ of the $R$-matrix $\Rb(u)$ as defined in \eqref{eq:Rbexp} depend only on two partitions. We furthermore give an explicit generic form for the $R$-matrix that manifestly commutes with $\Dt(a_r)$.

Fix $r>0$ and insert (\ref{eq:Rbexp}) into (\ref{eq:DRa})
\begin{align}
\label{eq:Rbcom}
\sum_{\m,\r,\n,\s\in \MP}
(\Rb(u))_{\m,\r}^{\n,\s}
\left(
[a^*_{\m},a_r] a_{\n}\otimes a^*_{\r}a_{\s} 
+
q^{-r} 
a^*_{\m}a_{\n}\otimes [a^*_{\r},a_r] a_{\s} \right)
=0.
\end{align}
In the two commutators of the form $[a^*_{\m},a_r]$, the part of $a^*_{\m}$ that does not commute with $a_r$ equals to $a^*_{(r^k)}$, where $k=m_r(\m)$. We compute the commutator
\begin{align*}
[a^*_{(r^k)},a_r]=-q^r\frac{\k_r}{r (1-q_2^r)}a^*_{(r^{k-1})},
\end{align*}
plug this into (\ref{eq:Rbcom}) and cancel the common factor. This gives
\begin{align}
\label{eq:Rbcom2}
\sum_{\m,\r,\n,\s\in \MP}
(\Rb(u))_{\m,\r}^{\n,\s}
\left(
a^*_{\m\psub (r)} a_{\n}\otimes a^*_{\r}a_{\s} 
+
q^{-r} 
a^*_{\m}a_{\n}\otimes a^*_{\r\psub (r)} a_{\s} \right)
=0.
\end{align}   
For fixed $r$, the first term in \eqref{eq:Rbcom2} is summed over all $\m$ which have at least one part equal to $r$ and the second term is summed over all $\r$ which have at least one part equal to $r$. Because of this we can shift these summation variables, i.e. $\m\rightarrow \m\padd  (r)$ and $\r\rightarrow \r\padd (r)$, to get 
\begin{align*}
%\label{eq:Rbcom3}
\sum_{\m,\r,\n,\s\in \MP}
\left(
(\Rb(u))_{\m\padd  (r),\r}^{\n,\s}
+
q^{-r} 
(\Rb(u))_{\m,\r\padd (r)}^{\n,\s}\right)
a^*_{\m}a_{\n}\otimes a^*_{\r} a_{\s} 
=0.
\end{align*}   
This equation must be satisfied for all summed partitions leading us to
\begin{align*}
%\label{eq:Rshift}
(\Rb(u))_{\m\padd (r),\r}^{\n,\s}
+
q^{-r} 
(\Rb(u))_{\m,\r\padd (r)}^{\n,\s}
=0.
\end{align*}
After changing the labels we obtain
\begin{align*}
%\label{eq:Rshift}
(\Rb(u))_{\m,\r}^{\n,\s}
=- q^{r} 
(\Rb(u))_{\m\padd (r),\r\psub (r)}^{\n,\s},
\end{align*}
which we can iteratively use for all $r\in \r$, i.e. 
\begin{align}
\label{eq:Rshiftx}
(\Rb(u))_{\m,\r}^{\n,\s}
=(-1)^{\ell(\r)} q^{|\r|} 
(\Rb(u))_{\m\padd \r,\pzero}^{\n,\s}.
\end{align}
In a similar manner we compute the commutation relation (\ref{eq:DRadag}) and find
\begin{align}
\label{eq:Rmshiftx}
(\Rb(u))_{\m,\r}^{\n,\s}
=(-1)^{\ell(\s)} q^{|\s|} 
(\Rb(u))_{\m, \r}^{\n\padd \s,\pzero}.
\end{align}
Combining (\ref{eq:Rshiftx}) with (\ref{eq:Rmshiftx}) we arrive at 
\ba
(\Rb(u))_{\m,\r}^{\n,\s}
=(-1)^{\ell(\r)+\ell(\s)} q^{|\r|+|\s|} 
P_{\m\padd \r,\n\padd \s}(u),
\ea
as claimed.
\begin{comment}
\subsection{Properties of $b_\m$ and $c_\n$}
The basis elements are written as 
\begin{align*}
b_{\m}&:=\sum_{\s\pin \m} q^{-|\m\psub\s|} a_\s \otimes a_{\m\psub\s} ,\qquad
b^*_{\m}:=\sum_{\s\pin \m} q^{-|\m\psub\s|} a^*_\s \otimes a^*_{\m\psub\s},\\
c_{\m}&:=\sum_{\s\pin \m} (-1)^{\ell(\m\psub \s)}q^{-|\s|} a_\s \otimes a_{\m\psub\s} ,\qquad
c^*_{\m}:=\sum_{\s\pin \m}(-1)^{\ell(\m\psub \s)} q^{-|\s|} a^*_\s \otimes a^*_{\m\psub\s}.
\end{align*}
The ordering relation for this basis reads 
\begin{align*}
%\label{bcquad2}
b_\n b^*_{\m} 
=
&\sum_{\l\subseteq \n\cap \m}
q^{-|\l|}\prod_{r\in\l}(q^r+q^{-r})\cdot \xi_\l 
b^*_{\m\psub\l}b_{\n\psub\l},
\qquad
c_\n c^*_{\m} 
=
&\sum_{\l\subseteq \n\cap \m}
q^{-|\l|}\prod_{r\in\l}(q^r+q^{-r})\cdot \xi_\l 
c^*_{\m\psub\l}c_{\n\psub\l}.
\end{align*}
The basis elements of the tensor product space transform as follows
\begin{align*}
a_\m\otimes 1&=\prod_{r\in\m}\frac{1}{q^r+q^{-r}}\cdot \sum_{\s\pin \m}q^{|\s|} b_\s c_{\m\psub \s},
\qquad
1\otimes a_\m=\prod_{r\in\m}\frac{1}{q^r+q^{-r}} \cdot \sum_{\s\pin \m}(-1)^{\ell(\m\psub\s)}q^{|\m\psub\s|} b_\s c_{\m\psub \s},\\
a^*_\m\otimes 1&=\prod_{r\in\m}\frac{1}{q^r+q^{-r}}\cdot \sum_{\s\pin \m}q^{|\s|} b^*_\s c^*_{\m\psub \s},
\qquad
1\otimes a^*_\m=\prod_{r\in\m}\frac{1}{q^r+q^{-r}} \cdot \sum_{\s\pin \m}(-1)^{\ell(\m\psub\s)}q^{|\m\psub\s|} b^*_\s c^*_{\m\psub \s} .
\end{align*}
\end{comment}

%%%%%%%%%%%%
%%%%%%%%%%%%

\section{Reduction of the coproduct relation}
\label{app:currents}
In this Appendix we collect the derivations which allow us to reduce the coproduct equations for the $R$-matrix to the vector form stated in Proposition \ref{prop:copef}. For convenience we will sometimes write the position of a current or vertex operator in a tensor product using an index
\ba
e(z)\otimes 1= e_1(z),\qquad 1\otimes e(z)=e_2(z). 
\ea 
Note that this index should not be confused with that of the modes of $e(z)$.

\subsection{The coproduct equation with $\D(e(z))$ and $\D(f(z))$ and vertex operators}
Our derivations are given using the language of currents and vertex operators. Therefore we first present some useful formulas. 

Let us write the coproduct (\ref{eq:De}) and (\ref{eq:RDe}) for the modes $e_k$ and $f_k$ using the current generators. This is done by summing these equations over $k\in \mathbb{Z}$ with $z^{-k}$. We write $\D(e(z))$, $\D(f(z))$ and $\Dt(e(z))$ and $\Dt(f(z))$ 
\begin{align}
\label{eq:copce}
\begin{split}
\D(e(z)) &= e(z q^{-c_2}) \otimes \psi^{+}(z q^{-c_2}) + 1\otimes e(z), \\
%\label{eq:copcf}
\D(f(z)) &=  f(z)\otimes 1 + \psi^{-}(z q^{-c_1})\otimes f(z q^{-c_1}), \\
%\label{eq:copdce}
\Dt(e(z)) &= e(z q^{c_2}) \otimes \psi^{-}(z) + 1\otimes e(z), \\
%\label{eq:copdcf}
\Dt(f(z)) &= f(z)\otimes 1 + \psi^+(z)\otimes f(z q^{c_1}).
\end{split}
\end{align}   
Here we recall that the index $i$ of $c_i$ indicates which factor of the tensor product the operator $c$ is acting on.
In the Fock representation $c_1=c_2=1$ (and $\cp=0$). Recall the vertex operators (\ref{VOe}) and (\ref{psiexp}) and write them using the Heisenberg generators $a_{\pm r}$
\begin{align}
\label{VOea}
\begin{split}
e(z) &= \frac{1-q_2}{\k_1} u ~
\exp\left(\sum_{r=1}^{\infty}q^{-r}a_{-r}z^{ r}\right)
\exp\left(-\sum_{r=1}^{\infty}a_{ r}z^{- r}\right),\\
%\label{VOfa}
f(z) &= -\frac{1-q_2}{q_2 \k_1}u^{-1} 
\exp\left(-\sum_{r=1}^{\infty}a_{- r}z^{ r}\right)
\exp\left(\sum_{r=1}^{\infty}q^r a_{ r}z^{- r}\right),\\
%\label{VOpsia}
\psi^{\pm}(z) &= \exp\left(\pm\sum_{r=1}^{\infty} (q^{r}-q^{-r})a_{\pm r} z^{\mp r}\right). 
\end{split}
\end{align}
Each of the entries of the coproducts in (\ref{eq:copce}) can be factorised according to
\ba
A\otimes B = (A\otimes 1)(1\otimes B)=(1\otimes B)(A\otimes 1).
\ea
Because of this we can replace the Heisenberg operators $a_{\pm r}$ inside the exponentials by $a_{\pm r}\otimes 1$ and $1\otimes a_{\pm r}$, thus absorbing the sign of the tensor product into the exponentials of the vertex operators. After this we use the formulas (\ref{eq:a1tobc}) and (\ref{eq:a2tobc}) in order to rewrite the vertex operators (\ref{VOea}) in terms of $b_{\pm r}$ and $c_{\pm r}$. 

Introduce a pair of vertex operators 
\ba
\phi^{\pm}(A;z):= 
\exp\left(\sum_{r=1}^{\infty}\frac{1}{q^r+q^{-r}}A_{\pm r}z^{\mp r}\right).
\ea
where $A_{\pm r}$ is either $b_{\pm r}$ or $c_{\pm r}$. Their ordering relation is given by 
\begin{align}
\label{eq:phicr}
&\phi^+(A;z)
\phi^-(A;w)
= 
\exp\left(\sum_{r>0} \frac{(1-q_1^r)(1-q_3^r)}{r(q^r+q^{-r})}w^r z^{-r}\right)\,
\phi^-(A;w)\phi^+(A;z).
\end{align}
We have 
\begin{align}
\label{eq:VOef}
\begin{split}
e_1(z)
&=\frac{1-q_2}{\k_1} u_1 ~ \phi^-(b;z)\phi^+(b;z q^{-1})^{-1} \phi^-(c;z q^{-1})\phi^+(c;z)^{-1},\\
%\label{VOe2}
e_2(z)
&=\frac{1-q_2}{\k_1} u_2 ~ \phi^-(b;z q^{-1})\phi^+(b;z )^{-1}  \phi^-(c;z)^{-1}\phi^+(c;z q^{-1}),\\
%\label{VOf1}
f_1(z)
&=-\frac{1-q_2}{q_2 \k_1}u_1^{-1}  \phi^-(b;z q)^{-1}\phi^+(b;z q^{-2}) \phi^-(c;z)^{-1}\phi^+(c;z q^{-1}),\\
%\label{VOf1}
f_2(z)
&=-\frac{1-q_2}{q_2 \k_1}u_2^{-1}   \phi^-(b;z)^{-1} \phi^+(b;z q^{-1}) \phi^-(c;z q)\phi^+(c;z q^{-2})^{-1},
\end{split}
\end{align}
and
\begin{align}
\label{eq:VOpsi}
\begin{split}
%\label{VOyp1}
\psi_1^{+}(z)
&= \phi^+(b;z)^{-1} \phi^+(b;z q^{-2}) 
\phi^+(c;z q)^{-1} \phi^+(c;z q^{-1}),\\
%\label{VOyp2}
\psi_2^{+}(z)
&=
\phi^+(b;z q)^{-1} \phi^+(b;z q^{-1}) 
\phi^+(c;z) \phi^+(c;z q^{-2})^{-1},\\
%\label{VOym1}
\psi_1^{-}(z)
&=
 \phi^-(b;z) \phi^-(b;z q^2)^{-1} 
\phi^-(c;z q^{-1}) \phi^-(c;z q)^{-1},\\
%\label{VOym2}
\psi_2^{-}(z)
&=
 \phi^-(b;z q^{-1}) \phi^-(b;z q)^{-1} 
\phi^-(c;z)^{-1} \phi^-(c;z q^{2}).
\end{split}
\end{align}

\subsection{Proof of Proposition \ref{propRFE}}
\label{sec:VOeq}
Let us take the coproduct equation (\ref{eq:RDt}) for the currents $e(z)$ and $f(z)$ using (\ref{eq:copce})
\begin{gather*}
%\label{eq:RDeapp}
\MRb
\left(e(z q^{-c_2}) \otimes \psi^{+}(z q^{-c_2}) + 1\otimes e(z) \right)
=
\left(e(z q^{c_2}) \otimes \psi^{-}(z) + 1\otimes e(z)\right) \MRb,\\
%\label{eq:RDfapp}
\MRb
\left( f(z)\otimes 1 + \psi^{-}(z q^{-c_1})\otimes f(z q^{-c_1})\right)
=
\left( f(z)\otimes 1 + \psi^+(z)\otimes f(z q^{c_1})\right)  \MRb,
\end{gather*} 
and specialize the Fock representation $V(u_1)\otimes V(u_2)$ where we set $u_1=1$ and $u_2=u$,
\begin{gather*}
%\label{eq:RDte}
\Rb(u)
\left(e(z q^{-1}) \otimes \psi^{+}(z q^{-1}) + 1\otimes e(z) \right)
=
\left( e(z q) \otimes \psi^{-}(z) + 1\otimes e(z)\right) \Rb(u),\\
%\label{eq:RDtf}
\Rb(u)
\left( f(z)\otimes 1 + \psi^{-}(z q^{-1})\otimes f(z q^{-1})\right)
=
\left( f(z)\otimes 1 + \psi^+(z)\otimes f(z q)\right)  \Rb(u).
\end{gather*} 
Replacing the tensor product notation with the index notation we write
\begin{gather}
\label{eq:RA}
[\Rb(u),e_2(z)]
=
 e_1(z q)  \psi_2^{-}(z) \Rb(u)-\Rb(u) e_1(z q^{-1})  \psi_2^{+}(z q^{-1}) ,\\
 \label{eq:RB}
[\Rb(u), f_1(z)]
=  \psi_1^+(z)f_2(z q)\Rb(u)- 
\Rb(u) \psi_1^{-}(z q^{-1}) f_2(z q^{-1}).
\end{gather} 

The operator $\Rb(u)$ in the Fock representation depends only on the Heisenberg operators $c_{\pm r}$ (\ref{eq:Rbarbc}). We can insert the operators \eqref{eq:VOef} and \eqref{eq:VOpsi} into (\ref{eq:RA}) and (\ref{eq:RB}). The latter two lead to the same equation hence we only focus on (\ref{eq:RA}), which becomes
\begin{multline*}
u \,
\phi^-(b;z q^{-1})\phi^+(b;z )^{-1}
[\Rb(u),\phi^-(c;z)^{-1}\phi^+(c;z q^{-1})]\\
=
\phi^-(b;z q)\phi^+(b;z )^{-1}  \phi^-(b;z q^{-1}) \phi^-(b;z q)^{-1} 
 \phi^-(c;z )\phi^+(c;z q)^{-1} \phi^-(c;z)^{-1} \phi^-(c;z q^{2}) \Rb(u)\\
-
\Rb(u) 
 \phi^-(b;z q^{-1}) \phi^+(b;z)^{-1}
 \phi^-(c;z q^{-2}) \phi^+(c;z q^{-3})^{-1}.
\end{multline*}
We can commute the operators using (\ref{eq:phicr}) and find that on both sides the parts that depend on $b_{\pm r}$ cancel and we arrive at
\ba
u 
[\Rb(u),\phi^-(c;z)^{-1}\phi^+(c;z q^{-1})]
&=
  \phi^-(c;z q^{2})  \phi^+(c;z q)^{-1}\Rb(u)- \Rb(u)   \phi^-(c;z q^{-2}) \phi^+(c;z q^{-3})^{-1}.
\ea
This results in equation (\ref{eq:RFE}) which is stated in Proposition \ref{propRFE}.  We also stated this equation in the introduction in (\ref{eq:VOR}), where the operators $\vphi^\pm(z)$ used there read
\ba
\vphi^+(z)=\phi^-(c;z)^{-1}\phi^+(c;z q^{-1}), \qquad 
\vphi^-(z)=\phi^-(c;z)  \phi^+(c;z q^{-1})^{-1}. 
\ea

%%%%
%%%%%%%%%%%%%%%%
\subsection{Proof of Proposition \ref{prop:copef}}
\label{ap:proofcop}
In this subsection we show how to derive (\ref{eq:DRVO}) and (\ref{eq:DRVOd})  thus proving Proposition \ref{prop:copef}.  
Consider the vertex operator equation (\ref{eq:RFE}). Applying $\iota$, defined in (\ref{eq:iso1}), and using Lemma \ref{lem:phi} we rewrite the four terms of this equation in $V\otimes V$
\ba
\iota\left(\Rb(u)  \phi^-(z)^{-1} \phi^+(z q^{-1}) \right)&= 
\left(\phi^-(z)^{-1}\otimes \phi^-(z^{-1} q)^{-1} \phi^+(z^{-1}) \right)\DR(u) 
,\\
\iota\left(\phi^-(z)^{-1} \phi^+(z q^{-1})  \Rb(u)\right) & =
\left(\phi^-(z)^{-1}\phi^+(z q^{-1}) \otimes  \phi^-(z^{-1} q)^{-1} \right)\DR(u) 
,\\
\iota\left(\phi^-(z q^{2})  \phi^+(z q)^{-1}   \Rb(u)\right) & =
\left(\phi^-(z q^2)\phi^+(z q)^{-1} \otimes  \phi^-(z^{-1} q^{-1}) \right)\DR(u) 
,\\
\iota\left( \Rb(u)   \phi^-(z q^{-2})  \phi^+(z q^{-3})^{-1}\right)& =
\left(\phi^-(z q^{-2})\otimes \phi^-(z^{-1} q^{3})  \phi^+(z^{-1} q^{2})^{-1} \right)\DR(u) 
 .
\ea
Putting these four terms together we rewrite (\ref{eq:RFE}) as
\begin{multline}
\label{eq:DRphiX}
u \left(\phi^-(z)^{-1}\otimes \phi^-(z^{-1} q)^{-1} \phi^+(z^{-1}) - 
\phi^-(z)^{-1}\phi^+(z q^{-1}) \otimes  \phi^-(z^{-1} q)^{-1} \right)\DR(u) 
\\
=
\left(\phi^-(z q^2)\phi^+(z q)^{-1} \otimes  \phi^-(z^{-1} q^{-1})-
\phi^-(z q^{-2})\otimes \phi^-(z^{-1} q^{3})  \phi^+(z^{-1} q^{2})^{-1} \right)\DR(u). 
\end{multline}
The next step is to multiply (\ref{eq:DRphiX}) from the left by the factor
\ba
\phi^-(z)\otimes \phi^-(z^{-1} q^{3})^{-1},
\ea
which leads us to 
\begin{multline}
\label{eq:DRphiY2}
u \left(
1 \otimes  \phi^-(z^{-1} q^{3})^{-1} \phi^-(z^{-1} q)^{-1} \phi^+(z^{-1})- 
\phi^+(z q^{-1}) \otimes   \phi^-(z^{-1} q^{3})^{-1} \phi^-(z^{-1} q)^{-1} \right)\DR(u) 
\\
=
\left(\phi^-(z)\phi^-(z q^2)\phi^+(z q)^{-1} \otimes 
\phi^-(z^{-1} q^{3})^{-1} \phi^-(z^{-1} q^{-1})-
\phi^-(z)\phi^-(z q^{-2})\otimes 
 \phi^+(z^{-1} q^{2})^{-1} \right)\DR(u).
\end{multline}

Now we can take the formulas of the vertex operator representation (\ref{VOea}) and write them in terms of the Heisenberg algebra $\MH_c$ using the relation
\begin{align*}
%\label{eq:isoHaHc}
\MH_a \rightarrow \MH_c:
\qquad
a_{-r} \mapsto q^r c_{-r},
\quad
a_{r} \mapsto \frac{1}{q^r+q^{-r}} c_r,\qquad r>0.
\end{align*}
We obtain 
\begin{align*}
%\label{VOec}
\ed(z) &= s^{-1}v
\exp\left(\sum_{r=1}^{\infty}c_{-r}z^{ r}\right)
\exp\left(-\sum_{r=1}^{\infty}\frac{1}{q^r+q^{-r}}c_{ r}z^{- r}\right),\\
%\label{VOfc}
\fd(z) &= -s^{-1}q_2^{-1} v^{-1}
\exp\left(-\sum_{r=1}^{\infty}q^r c_{- r}z^{ r}\right)
\exp\left(\sum_{r=1}^{\infty}\frac{q^r}{q^r+q^{-r}} c_{ r}z^{- r}\right),\\
%\label{VOpsipc}
\psid^{+}(z) &= \exp\left(\sum_{r=1}^{\infty} \frac{(q^{r}-q^{-r})}{q^r+q^{-r}}c_{ r} z^{- r}\right),\\
%\label{VOpsimc}
\psid^{-}(z) &= \exp\left(-\sum_{r=1}^{\infty} (q^{r}-q^{-r})q^r c_{- r} z^{ r}\right),
\end{align*}
where $v$ plays the role of the spectral parameter and $s=(1-q_1)(1-q_3)$. 
We can write these operators using the vertex operators $\phi^\pm(z)$
\begin{align}
\label{VOefphiX}
\begin{split}
%\label{VOecphiX}
\ed(z) &= s^{-1} v ~
\phi^-(z q^{-1})\phi^-(z q)
\phi^+(z)^{-1},\\
%\label{VOfcX}
\fd(z) &= -s^{-1} q_2^{-1}v^{-1} 
\phi^-(z q^2)^{-1}\phi^-(z)^{-1}
\phi^+(z q^{-1}),\\
%\label{VOpsipcX}
\psid^{+}(z) &=
\phi^+(z q)\phi^+(z q^{-1})^{-1},\\
%\label{VOpsimcX}
\psid^{-}(z) &=
\phi^-(z q^3)^{-1}\phi^-(z q^{-1} ).
\end{split}
\end{align}
Consider two copies of these operators acting on $V(1)\otimes V(1)$. Then we can match the first term on the left hand side in (\ref{eq:DRphiY2}) with the current $\fd$ in (\ref{VOefphiX})
\ba
1\otimes  \phi^-(z^{-1} q^{3})^{-1} \phi^-(z^{-1} q)^{-1} \phi^+(z^{-1}) =
-s q^2\otimes  \fd(z^{-1} q),
\ea
and the first term on the right hand side with a product of the currents $\ed$ and $\psid^-$ in (\ref{VOefphiX})
\ba
\phi^-(z)\phi^-(z q^2)\phi^+(z q)^{-1} \otimes 
\phi^-(z^{-1} q^{3})^{-1} \phi^-(z^{-1} q^{-1})&=
s\, \ed(z q)\otimes \psid^{-}(z^{-1}).
\ea
Rewriting accordingly (\ref{eq:DRphiY2})  we arrive at 
\begin{align}
\label{eq:DRphiY3}
x&\left(
-s q^2 \otimes  f(z^{-1} q) - 
\phi^+(z q^{-1}) \otimes   \phi^-(z^{-1} q^{3})^{-1} \phi^-(z^{-1} q)^{-1} \right)\DR(u) 
\\
=
&\left(
s~e(z q)\otimes \psi^{-}(z^{-1})
-
\phi^-(z)\phi^-(z q^{-2})\otimes 
 \phi^+(z^{-1} q^{2})^{-1} \right)\DR(u). \nonumber 
\end{align}
The remaining two terms in (\ref{eq:DRphiY3}) are matched with $\Phi^{\pm}(z)$ given in (\ref{eq:Phip}) and (\ref{eq:Phim}). After this the equation (\ref{eq:DRVO}) follows. In order to obtain (\ref{eq:DRVOd}) we start with (\ref{eq:DRphiX}) and multiply it by 
\ba
\phi^-(z q^2)^{-1}\otimes \phi^-(z^{-1} q).
\ea
The resulting equation can be matched with (\ref{eq:DRVOd}) using (\ref{VOefphiX}) and (\ref{eq:Psipm}).

\section*{Acknowledgments}
We gratefully acknowledge support from the Australian Research Council Centre of Excellence for Mathematical and Statistical Frontiers (ACEMS). This project was supported in part by the ARC Discovery Grant DP190102897. 
A. G. would like to thank Jean-Emile Bourgine, Stephane Dartois, Matteo Mucciconi and Ole Warnaar for useful comments and discussions. We thank Andrei Okounkov and Olivier Schiffmann for comments and references and Andrei Negu{\c{t}} and Andrey Smirnov for explaining their works.


\begin{thebibliography}{10}

\bibitem{AFHKSY}
H. Awata, B. Feigin, A. Hoshino, M. Kanai, J. Shiraishi and S. Yanagida.
\newblock {\em Notes on Ding-Iohara algebra and AGT conjecture}, 
\newblock arXiv:1106.4088.

\bibitem{Awata}
H.~Awata, H.~Kanno, A.~Mironov, A.~Morozov, A.~Morozov, Y.~Ohkubo and
  Y.~Zenkevich.
\newblock {\em Toric Calabi-Yau threefolds as quantum integrable systems.
  $\mathrm{\mathcal{R}}$-matrix and
  $\mathrm{\mathcal{R}\mathcal{T}\mathcal{T}}$ relations},
\newblock {J. High Energy Phys.}, 2016(10):47, 2016.

\bibitem{AwataRTT}
H.~Awata, H.~Kanno, A.~Mironov, A.~Morozov, A.~Morozov, Y.~Ohkubo and
  Y~Zenkevich.
\newblock {\em Anomaly in RTT relation for DIM algebra and network matrix models},
\newblock {Nucl. Phys. B}, 918:358--385, 2017.


\bibitem{Bax}
R.~J. Baxter.
\newblock {\em {Exactly Solved Models in Statistical Mechanics}},
\newblock Courier Corporation, 2007.

\bibitem{Berg}
F.~Bergeron, A.~Garsia, E.~Leven and G.~Xin.
\newblock {\em Some remarkable new plethystic operators in the theory of Macdonald
  polynomials},
\newblock {arXiv:1405.0316}.

\bibitem{BurbS12}
I.~Burban and O.~Schiffmann.
\newblock {\em On the Hall algebra of an elliptic curve, I},
\newblock  Duke Math. J., 161(7):1171--1231, 2012.


\bibitem{DiFK}
P. Di Francesco and R. Kedem.
\newblock {\em $(t,q)$ $Q$-systems, DAHA and quantum toroidal algebras via generalized Macdonald operators},
\newblock arXiv:1704.00154.

\bibitem{Drinfeld}
V.~G. Drinfeld.
\newblock {Quantum groups},
\newblock {\em Zap. Nauchn. Sem. POMI}, 155:18--49, 1986.

\bibitem{Fad}
L. D. Faddeev, L. A. Takhtajan and E. K. Sklyanin.
\newblock {\em The quantum inverse problem method. 1},
\newblock Theor. Math. Phys. 40, LOMI-P-1-79, 688--706, 1979


\bibitem{FFJMM_F}
B. Feigin, E. Feigin, M. Jimbo, T. Miwa and E.  Mukhin. 
\newblock {\em Quantum continuous $\mathfrak{gl}_{\infty}$: Semi-infinite construction of representations},
\newblock  Kyoto J. Math. 2011;51(2):337-64.

\bibitem{FHHSY}
B.~Feigin, K.~Hashizume, A.~Hoshino, J.~Shiraishi and S.~Yanagida.
\newblock {\em A commutative algebra on degenerate $\mathbb{C}\mathbb{P}^1$ and
  Macdonald polynomials},
\newblock {J. Math. Phys.}, 50(9):095215, 2009.


\bibitem{FHSSY}
B. Feigin, A. Hoshino, J. Shibahara, J. Shiraishi and S. Yanagida. 
\newblock {\em Kernel function and quantum algebras},
\newblock arXiv:1002.2485.


\bibitem{FJMM_PP}
B. Feigin, M. Jimbo, T. Miwa and E.  Mukhin.
\newblock {\em Quantum toroidal $\mathfrak{gl}_1$-algebra: Plane partitions},
\newblock Kyoto J. Math., 52, 3, pp 621--659, 2012.

\bibitem{FJMM_BA}
B. Feigin, M. Jimbo, T. Miwa and E.  Mukhin.
\newblock {\em Quantum toroidal and Bethe ansatz},
\newblock J. Phys. A: Math. Theor., 48, 24, p 244001, 2015.

\bibitem{FJMM_FTM}
B. Feigin, M. Jimbo, T. Miwa and E.  Mukhin.
\newblock {\em Finite type modules and Bethe Ansatz for quantum toroidal ${\mathfrak {gl} _1} $},
\newblock Commun. Math. Phys. 356.1: 285-327, 2017.

\bibitem{FJMM_IM}
B. Feigin, M. Jimbo and E. Mukhin. 
\newblock {\em Integrals of motion from quantum toroidal algebras},
\newblock J. Phys. A: Math. Theor., 50, 46, p 464001, 2017.


\bibitem{FKSW}
B.~Feigin, T.~Kojima, J.~Shiraishi and H.~Watanabe.
\newblock {\em  The integrals of motion for the deformed Virasoro algebra},
\newblock {arXiv:0705.0427}.


\bibitem{FT_Fock}
B. Feigin and I. Tsymbaliuk. 
\newblock {\em Equivariant $K$-theory of Hilbert schemes via shuffle algebra},
\newblock Kyoto J. of Math., 51.4 831--854, 2011.

\bibitem{Fukud}
M.~Fukuda, K.~Harada, Y.~Matsuo and R.-D. Zhu.
\newblock {{\em The Maulik--Okounkov R-matrix from the Ding--Iohara--Miki algebra}},
\newblock { Prog. Theor. Exp. Phys.}, 2017(9), 2017.


\bibitem{GKV}
V. Ginzburg, M. Kapranov and \'E. Vasserot.
\newblock {\em Langlands reciprocity for algebraic surfaces},
\newblock Math. Res. Let., 2, 147--160, 1995.


\bibitem{LWbranch}
A. Lascoux and S. O. Warnaar. 
\newblock {\em Branching rules for symmetric functions and $sl_n$ basic hypergeometric series},
\newblock Adv. App. Math., 46, 1-4, 424-456, 2011.

\bibitem{Miki}
K. Miki 
\newblock {\em A $(q, \gamma)$ analog of the $W_{1+\infty}$ algebra},
\newblock J. Math. Phys., 48.12, 123520, 2007.

\bibitem{MacdBook} I. Macdonald.
\textit{Symmetric functions and Hall polynomials}, (2nd ed.), Oxford, Clarendon Press 1995.

\bibitem{MOk}
D.~Maulik and A.~Okounkov.
\newblock {\em Quantum groups and quantum cohomology}.
\newblock {arXiv:1211.1287}.

\bibitem{NegMod}
A.~Negu{\c{t}}.
\newblock {\em Moduli of flags of sheaves and their K-theory},
\newblock { arXiv:1209.4242}.

\bibitem{NegShuf}
A.~Negu{\c{t}}.
\newblock {\em Quantum toroidal and shuffle algebras},
\newblock { arXiv:1302.6202}.

\bibitem{Neg15}
A.~Negu{\c{t}}.
\newblock {\em Quantum algebras and cyclic quiver varieties},
\newblock { arXiv:1504.06525}.

\bibitem{NegAGT}
A.~Negu{\c{t}}.
\newblock {\em The q-AGT--W relations via shuffle algebras},
\newblock { Commun. Math. Phys.}, 358(1):101--170, 2018.

\bibitem{OkSm}
A.~Okounkov and A.~Smirnov.
\newblock {\em Quantum difference equation for Nakajima varieties},
\newblock { arXiv:1602.09007}.

\bibitem{SchiffV13}
O.~Schiffmann and E.~Vasserot.
\newblock {\em{The elliptic Hall algebra and the $K$-theory of the Hilbert scheme
  of $\mathbb{A}^2$}},
\newblock  Duke Math. J., 162(2):279--366, 2013.

\bibitem{SchiffV13a}
O.~Schiffmann and E.~Vasserot.
\newblock {\em Cherednik algebras, $W$-algebras and the equivariant cohomology of
  the moduli space of instantons on $A^2$},
\newblock  Publ. math. de l'IH{\'E}S, 118(1):213--342, 2013.

\bibitem{Sh}
J. Shiraishi.
\newblock {\em {A family of integral transformations and basic hypergeometric series}},
\newblock Commun. Math. Phys., 263.2: 439--460, 2006.

\bibitem{Smirn16}
A.~Smirnov.
\newblock {\em On the Instanton R-matrix},
\newblock {Commun. Math. Phys}, 345(3):703--740, 2016.

\bibitem{Smirn20}
A.~Smirnov.
\newblock {\em Elliptic stable envelope for Hilbert scheme of points in the plane},
\newblock {Select. Math.}, 26(1):3, 2020.



\end{thebibliography}
\end{document}